\definecolor{Green}{RGB}{0,130,20}
\definecolor{linkcol}{RGB}{50,170,50}
\definecolor{citecol}{RGB}{230,120,0}
\definecolor{urlcol}{rgb}{0,0,0.9}
\newcommand{\Zbb}{\mathbb{Z}}
\newtheorem{assumption}{Assumption}
\newcommand{\dif}{\mathrm{d}}
\theoremstyle{plain}
\newtheorem{theorem}{Theorem}
\newtheorem{lemma}{Lemma}
\theoremstyle{definition}
\newtheorem{definition}{Definition}
\newtheorem{remark}{Remark}
\providecommand{\keywords}[1]{\textbf{Keywords: } #1}
  \title{Phase space structures causing the reaction rate decrease in the collinear hydrogen exchange reaction.}
  \author{Vladim{\'i}r Kraj{\v{n}}{\'a}k\footnote{School of Mathematics, University of Bristol, Fry Building, Woodland Road, Bristol, UK}~\footnotemark[2]~~and~Holger Waalkens\footnote{Bernoulli Institute for Mathematics, Computer Science and Artificial Intelligence, University of Groningen, Nijenborgh 9, 9747 AG Groningen, The Netherlands}}
  \date{}
\begin{document}
  \maketitle

  \begin{abstract}
  The collinear hydrogen exchange reaction is a paradigm system for understanding chemical reactions. It is the simplest imaginable atomic system with $2$ degrees of freedom modeling a chemical reaction, yet it exhibits behaviour that is still not well understood - the reaction rate decreases as a function of energy beyond a critical value. Using lobe dynamics we show how invariant manifolds of unstable periodic orbits guide trajectories in phase space. From the structure of the invariant manifolds we deduce that insufficient transfer of energy between the degrees of freedom causes a reaction rate decrease. In physical terms this corresponds to the free hydrogen atom repelling the whole molecule instead of only one atom from the molecule. We further derive upper and lower bounds of the reaction rate, which are desirable for practical reasons.

  \keywords{hydrogen exchange, invariant manifolds, phase space structures, reaction dynamics, transition state theory}
  \end{abstract}
  
  \section{Introduction}\label{sec:introh2h}
  We study the dynamics of the \emph{collinear hydrogen exchange reaction} $\text{H}_2+\text{H}\rightarrow \text{H}+\text{H}_2$, which is an invariant subsystem of the spatial hydrogen exchange reaction, using the potential provided by Porter and Karplus in \cite{PorterKarplus64}. In literature it is considered a paradigm system for understanding chemical reactions due to its simplicity and variety of exhibited dynamics. Because the system consists of three identical atoms confined to a line, it is the simplest imaginable system with $2$ degrees of freedom modeling a chemical reaction.
  
  The hydrogen atoms themselves are the simplest atoms in the universe. Because each consist of one proton and one electron only, an accurate potential energy surface for this reaction can be obtained via the Born-Oppenheimer approximation. Intriguingly enough, this system exhibits behaviour that is still not well understood.

  The phenomenon we examine here is the counterintuitive observation that the reaction rate decreases as energy increases beyond a critical value. After all, one would expect to break bonds more easily using more energy. So far a satisfactory explanation of this phenomenon is missing and only an upper bound and a lower bound to the rate have been found. The upper bound is obtained by means of \emph{transition state theory} (TST), due to \cite{Wigner37}. TST is a standard tool for studying reaction rates due to its simplicity and accuracy for low energies, but it does not capture the decline of the reaction rate. The improvement brought by \emph{variational transition state theory} (VTST) \cite{Horiuti38}, does not capture this behaviour either.
  
  \emph{Unified statistical theory}, due to \cite{Miller76}, which is in a certain sense an extension of TST to more complicated system, does capture the culmination of the reaction rate, but does not yield higher accuracy. The lower bound on the other hand does come quite close. It is obtained using the so-called \emph{simple-minded unified statistical theory} \cite{PollakPechukas79UST}.
  
  A review of reaction rate results including TST can be found in \cite{Keck67}. \cite{Pechukas81} and \cite{Truhlar84} review various extensions of TST.
  
  Using lobe dynamics (introduced in \cite{Rom-Kedar90}) we show how invariant manifolds of unstable periodic orbits guide trajectories in phase space. From the structure of the invariant manifolds we deduce that insufficient transfer of energy between the degrees of freedom causes a reaction rate decrease. In physical terms this corresponds to the free hydrogen atom repelling the whole molecule instead of only one atom from the molecule. We further derive bounds of the reaction rate, which are desirable for practical reasons.
  
  In the remainder of this Section we introduce the system, give an overview of TST and explain the current state of affairs with regards to the collinear hydrogen exchange reaction. Section \ref{sec:orbits} focuses on relevant periodic orbits and definition of regions of phase space. In Section \ref{sec:define sos} we introduce new coordinates using which we define a surface of section. In Section \ref{sec:transport barriers} we explain how we study invariant manifolds on the surface of section. In Section \ref{sec:tangles reaction} we give a detailed insight into the structures formed by invariant manifolds and their role in the reaction. Section \ref{sec:intricate interval} is devoted to a novel way of breaking down heteroclinic tangles to provide a better understanding of the interplay of invariant manifolds of three TSs. In Section \ref{sec:bounds} we calculate various upper and lower bounds of the reaction rate.
    
   \begin{figure}  
      \centering
      \includegraphics{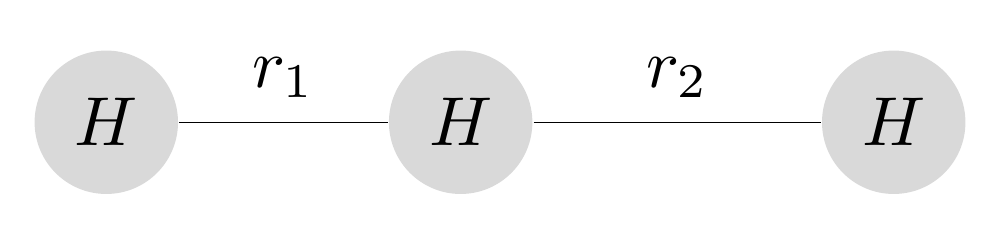}
      \caption{Collinear hydrogen atoms and distances.}
    \label{fig:H3}
  \end{figure}

  \subsection{Porter-Karplus potential}\label{subsec:potential}
  The collinear hydrogen exchange system consists of three hydrogen atoms confined to a line, as shown in Fig. \ref{fig:H3}, where $r_1$ and $r_2$ denote the distances in atomic units between neighbouring atoms. Forces between the atoms are given by the Porter and Karplus potential \cite{PorterKarplus64} is the standard potential for the hydrogen exchange reaction (collinear and spatial) used for example in \cite{MorokumaKarplus71,ChapmanHornsteinMiller75,PollakPechukas78,PollakPechukas79UST,PollakChildPechukas80,Davis87,Inarrea11}. The system is considered to react, if it passes from the region of reactants ($r_1>r_2$) to the region of products ($r_1<r_2$) and remains there.
  
  We point out two key properties of the Porter-Karplus potential:
  \begin{itemize}
   \item the discrete reflection symmetry with respect to the line $r_1=r_2$,
   \item saddle point at $r_1=r_2=R_s:=1.70083$.
  \end{itemize}
  
  The symmetry expresses the fact that we cannot distinguish between three identical hydrogen atoms, we can only measure distances between them. Hence, any statement referring to $r_1<r_2$ automatically also holds for $r_1>r_2$.
  
  Potential saddle points represent the activation energy needed for a reaction to be possible. In the all of this work we give energies as values in atomic units above the minimum of the system. In this convention the energy of the saddle point is $0.01456$. 
  
  From a configuration space perspective, such a potential barrier is the sole structure separating reactants from products and the sole obstacle the system needs to overcome in order to react. This perspective implicitly assumes that the system does not recross the potential barrier back into reactants. Dynamical structures that cause recrossings are only visible from a phase space perspective.
    
  \begin{figure}
	\centering
	\includegraphics[width=0.49\textwidth]{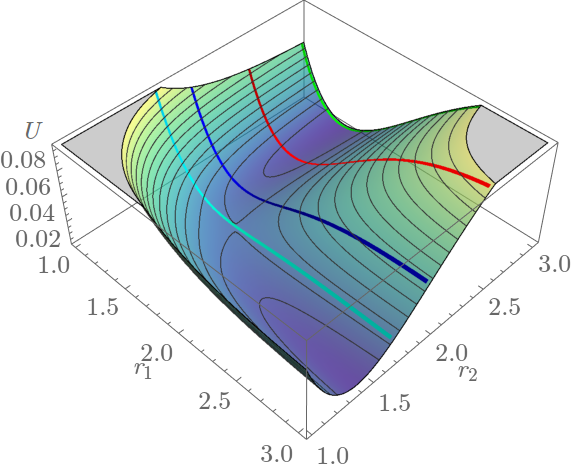}
	\includegraphics[width=0.49\textwidth]{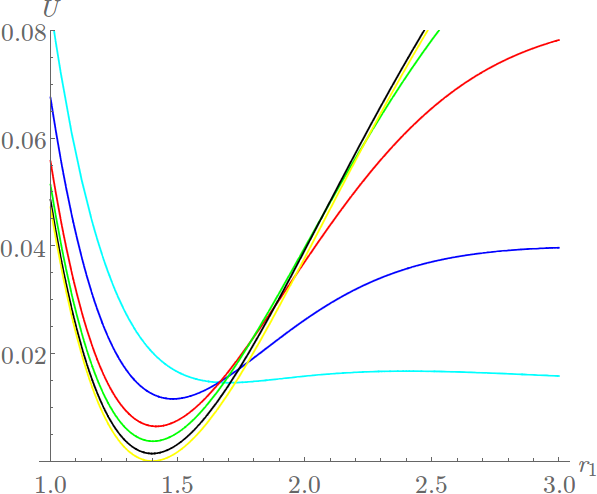}
	\caption{The Porter-Karplus potential energy surface with contours and its cross sections for fixed values of $r_2=1.70083$ (cyan), $2$ (blue), $2.5$ (red), $3$ (green), $4$ (black), $50$ (yellow).}
	\label{fig:potential}
  \end{figure}
  
  Figure \ref{fig:potential} shows the potential energy surface near the potential saddle and cross sections of the potential at various values of $r_2$. Due to diminishing forces between the atom and the molecule over large distances the differences between the cross sections fade after $r_2=4$ and are indistinguishable in double precision beyond $r_2=40$.
    
  \subsection{Definitions}\label{subsec:definitions}
  The collinear hydrogen exchange reaction is described by the Hamiltonian
  \begin{equation}\label{eq:Ham}
  H(r_1,p_{r_1},r_2, p_{r_2})=\frac{p_{r_1}^2+p_{r_2}^2-p_{r_1}p_{r_2}}{m_H} + U(r_1,r_2),
  \end{equation}
  where $p_{r_1}$, $p_{r_2}$ are the momenta conjugate to interatomic distances $r_1$, $r_2$, $m_H$ is the mass of a hydrogen atom and $U$ is the Porter-Karplus potential described above.
    
  The equations of motion associated to $H$ are as follows:
  \begin{equation}
  \begin{split}
  \dot{r}_1 &= \frac{2p_{r_1}-p_{r_2}}{m_H},\\
  \dot{p}_{r_1} &= -\frac{\partial U(r_1,r_2)}{\partial r_1},\\
  \dot{r}_2 &= \frac{2p_{r_2}-p_{r_1}}{m_H},\\
  \dot{p}_{r_2} &= -\frac{\partial U(r_1,r_2)}{\partial r_2}.
  \end{split}
  \label{eq:eqHam}
  \end{equation}
  The discrete symmetry of the potential translates into the invariance of $H$ and the equations of motion under the map $(r_1,p_{r_1},r_2, p_{r_2}) \mapsto (r_2,p_{r_2},r_1, p_{r_1})$.
  
  The Hamiltonian flow generated by equations \eqref{eq:eqHam} preserves the energy of the system $E=H(r_1,p_{r_1},r_2, p_{r_2})$ and the phase space of this system is therefore foliated by energy surfaces $H=E$.
  \begin{figure}
      \centering
      \includegraphics[width=0.7\textwidth]{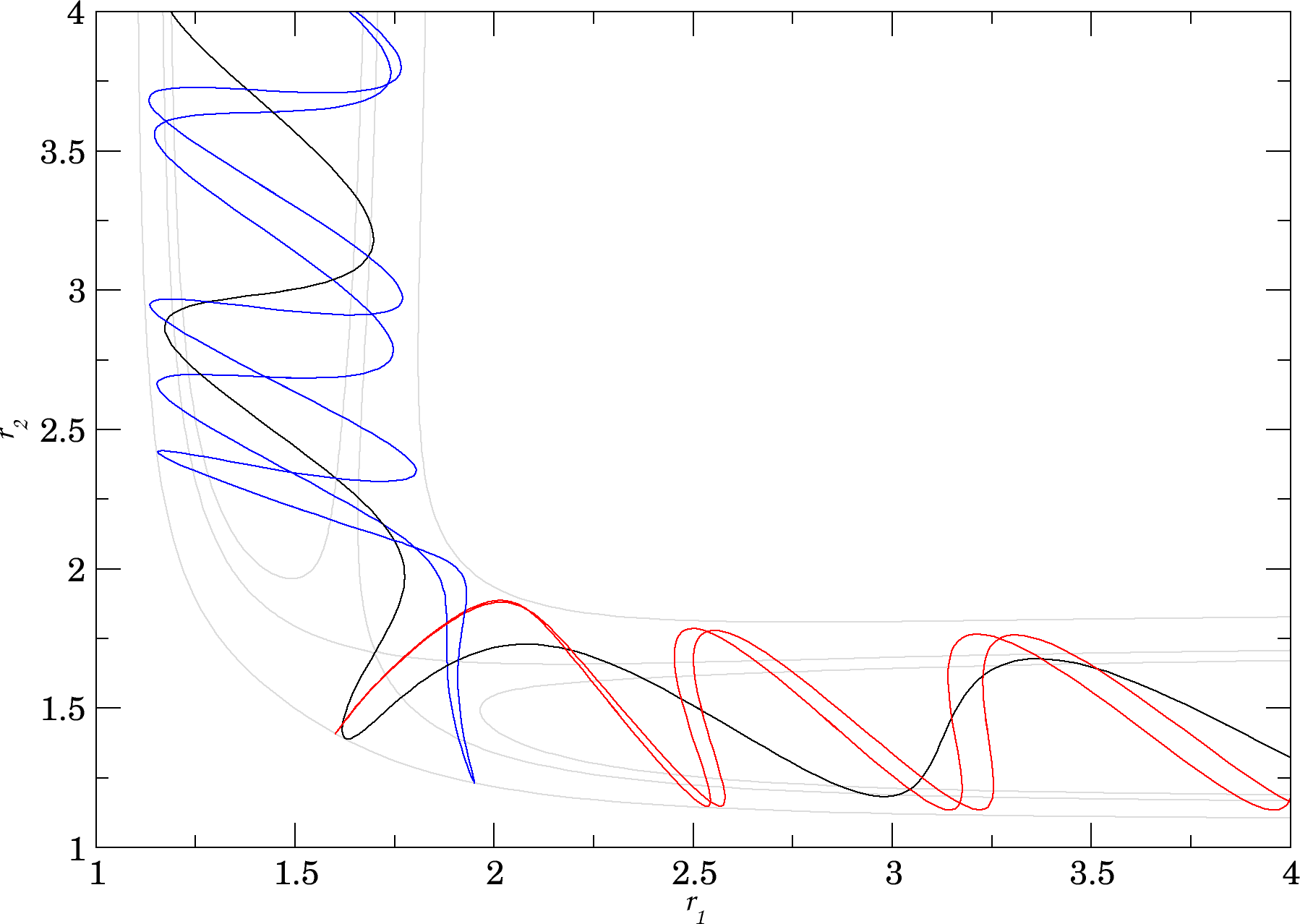}
	\caption{Examples of reactive (black) and nonreactive (red, blue) trajectories in configuration space at energy $0.02400$.}
	\label{fig:traj}
  \end{figure}
        
  \begin{definition}\label{def:trajectories}
   A trajectory passing through the point $\bigl(r_1^0,p_{r_1}^0,r_2^0, p_{r_{2}}^0\bigr)$ is said to be a \emph{reactive trajectory} if the solution
   $(r_1(t),p_{r_1}(t),r_2(t), p_{r_2}(t))$
   of the system with the initial condition
   $$(r_1(0),p_{r_1}(0),r_2(0), p_{r_2}(0))=\bigl(r_1^0,p_{r_1}^0,r_2^0, p_{r_{2}}^0\bigr),$$
   satisfies
   $r_1(t)<\infty$ and $r_2(t)\rightarrow\infty$ as $t\rightarrow\infty$ and $r_1(t)\rightarrow\infty$ and $r_2(t)<\infty$ as $t\rightarrow-\infty$ or vice versa.
   
   A \emph{nonreactive trajectory} is one for which the solution satisfies $r_1(t)\rightarrow\infty$ and $r_2(t)<\infty$ as $t\rightarrow\pm\infty$ or $r_1(t)<\infty$ and $r_2(t)\rightarrow\infty$ as $t\rightarrow\pm\infty$.
  \end{definition}
  
  Examples of reactive and nonreactive trajectories are shown in Figure \ref{fig:traj}. Note that nonreactive trajectories may cross the potential barrier in the sense that they cross the line $r_1=r_2$.

  From the above it follows that the reaction rate at a fixed energy $E$ can be calculated using a brute force Monte Carlo method as the proportion of initial conditions of reactive trajectories at infinity. Since the system decouples in a numerical sense around $r_2=40$, it is enough to sample a sufficiently remote surface in the reactants ($r_1>r_2$) that is transversal to the flow, for example
  \begin{equation}
   r_1+\frac{r_2}{2}=50,\quad p_{r_2}<0.
   \label{eq:MCsurface}
  \end{equation}    
  Since $r_1$, $r_2$ is not a centre of mass frame, $r_2=const$ is not transversal to the flow. 
  We remark that $(r_2, p_{r_2}-\frac{p_{r_1}}{2})$ are canonical coordinates on $r_1+\frac{r_2}{2}=50$ that yield a uniform random distribution of initial conditions.
  
  \subsection{Transition state theory}\label{subsec:TST}
  Since its formulation in \cite{Wigner37}, TST became the standard tool for estimating rates of various processes not only in chemical reactions \cite{Keck67}. It has found use in many fields of physics and chemistry, such as celestial mechanics \cite{Henrard82}, \cite{Jaffe02}, plasma confinement \cite{Meissetal85} and fluid mechanics \cite{Ottino89}. 
    
  Key element of TST is the \emph{transition state} (TS), a structure that is between reactants and products. There is no single generally accepted definition unfortunately, because in some publications concerning systems with $2$ degrees of freedom TS refers to an unstable periodic orbit while in others TS is a dividing surface (DS) associated with the unstable periodic orbit.
  We adopt the following definition of a TS from \cite{MacKay2014}:
  \begin{definition}[TS]
    A transition state for a Hamiltonian system is a closed, invariant, oriented, codimension-$2$ submanifold of the energy surface that can be spanned by two surfaces (the TS is the surfaces' boundary) of unidirectional flux, whose union divides the energy surface into two components and has no local recrossings.
  \end{definition}
  For a system with $2$ degrees of freedom as considered in this work, a closed, invariant, oriented, codimension-$2$ submanifold of the energy surface is a periodic orbit and it can be shown that the periodic orbit must be unstable \cite{Pechukas76}, \cite{PollakPechukas78}, \cite{Sverdlik78}.
  In general, the TS has to be a normally hyperbolic invariant manifolds (NHIM), an invariant manifolds with linearised transversal instabilities that dominate the linearised tangential instabilities (\cite{Fenichel71}, \cite{Hirsch77}).  
    
  \begin{theorem}[TST]
    \label{th:TST}
    In a system that admits a TS and all trajectories that pass from reactants to products the DS precisely once, the flux across a DS is precisely the reaction rate.
  \end{theorem}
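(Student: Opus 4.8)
The plan is to exhibit a flux-preserving correspondence between reactive trajectories and trajectories crossing the DS, so that the rate at which reactive trajectories pass from the reactant region $R=\{r_1>r_2\}$ to the product region $P=\{r_1<r_2\}$ equals the flux across the DS. Throughout I would work on a fixed energy surface $H=E$ and take ``flux across the DS'' to mean the integral over the DS of the flux form obtained by contracting the Liouville volume with the Hamiltonian vector field. Because the DS is a union of two surfaces of unidirectional flux meeting along the TS, this integral is well defined and every crossing of the DS occurs in the same ($R\to P$) direction; the ``no local recrossings'' clause ensures that a trajectory meeting the DS crosses it transversally and passes cleanly through, so the flux form really counts each crossing trajectory exactly once.

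First I would use the hypothesis that every trajectory passing from $R$ to $P$ meets the DS precisely once to show that $R$ and $P$ are almost invariant. Since the union of the two unidirectional surfaces separates the energy surface into $R$ and $P$, a trajectory realising a pattern $R\to P\to R$ (or $P\to R\to P$, etc.) would contain two $R\to P$ passages and hence meet the DS at least twice, contradicting the hypothesis. Hence, apart from trajectories contained entirely in $R$ or entirely in $P$, every trajectory crosses the DS exactly once, from $R$ to $P$, and never returns. By Definition \ref{def:trajectories} such a once-crossing trajectory is exactly a reactive trajectory: a trajectory with $r_1\to\infty,\ r_2<\infty$ as $t\to-\infty$ lies in $R$ for all sufficiently negative times and, symmetrically, in $P$ for all sufficiently positive times, up to the measure-zero set of trajectories asymptotic to bounded invariant sets inside $R$ or $P$.

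Finally I would transport the flux. Choosing any surface $\Sigma_R\subset R$ transversal to the flow and met exactly once by every reactive trajectory, for instance the remote surface \eqref{eq:MCsurface} used for the Monte Carlo rate, the reaction rate is by definition the flux of reactive trajectories through $\Sigma_R$. The flow map sending each reactive trajectory's intersection with $\Sigma_R$ to its unique intersection with the DS is a bijection of the reactive parts of the two surfaces, and it preserves the flux form because the Hamiltonian flow does. Together with the previous step, namely that the DS is crossed only by reactive trajectories and each exactly once, this yields (flux across the DS) $=$ (flux of reactive trajectories through $\Sigma_R$) $=$ (reaction rate).

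The step I expect to be the main obstacle is the converse half of the almost-invariance claim: ruling out, or confining to measure zero, trajectories that cross the DS into $P$ yet fail to be reactive because they remain in a bounded part of $P$ instead of escaping to $r_2\to\infty$ (and symmetrically as $t\to-\infty$ on the $R$ side). This is precisely where the scattering character of the system must enter, the Porter--Karplus potential having no well deep enough at the relevant energies to trap such trajectories, so that ``eventually in $P$'' upgrades to ``$r_2\to\infty$''. Once this is granted, verifying transversality of $\Sigma_R$ and the DS and the measure-theoretic bookkeeping for the flux-preserving bijection is routine.
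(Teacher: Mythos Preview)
The paper does not prove this theorem. It is stated as a foundational result of transition state theory and is followed only by the remark that, in general, the flux through a DS is an upper bound to the reaction rate (with references to Wigner and Pechukas). There is therefore no paper proof to compare your proposal against.

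Regarding your argument on its own merits: the overall strategy---establishing a flux-preserving bijection between crossings of the DS and crossings of a remote reference surface such as \eqref{eq:MCsurface} by reactive trajectories---is the natural one, and your identification of the main obstacle (trajectories that cross into $P$ but remain bounded rather than escaping) is appropriate. However, the almost-invariance step contains a counting error. You write that a trajectory with pattern $R\to P\to R$ ``would contain two $R\to P$ passages''; it does not---it has one $R\to P$ crossing and one $P\to R$ crossing of the DS. If the hypothesis is read narrowly as ``every \emph{reactive} trajectory crosses the DS exactly once,'' it does not by itself exclude nonreactive trajectories that enter $P$ and return, and such trajectories contribute to the one-directional flux through the $R\to P$ hemisphere without being reactive. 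What makes the theorem true is the stronger reading (consistent with the Pechukas--McLafferty formulation the paper cites) that \emph{every} trajectory meeting the DS does so exactly once; under that reading an $R\to P\to R$ pattern is excluded because it forces two DS crossings, not two $R\to P$ passages. With the hypothesis read this way, your flux-transport argument goes through cleanly.
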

    
  We remark that in general the flux through a DS associated with a TS is an upper bound to the reaction rate \cite{Wigner37}, \cite{Pechukas81}.
    
  Since its early applications, developments in the field led to a shift in the understanding of the TS to be an object in phase space rather than configuration space \cite{Wigginsetal01}, \cite{Uzeretal02}, \cite{Waalkensetal04a}, \cite{Waalkens04}, \cite{Waalkensetal04b}, \cite{Waalkensetal05a}, \cite{Waalkensetal05c}, \cite{Waalkens08}.  
  
  All relevant periodic orbits in this system are self-retracing orbits whose configuration space projections oscillate between equipotential lines, so called brake orbits (\cite{Ruiz75}). As suggested by \cite{PollakPechukas78}, let $(r_1^{po},r_2^{po})$ be the configuration space projection of a brake orbit at energy $E$, then the associated DS is the set of all phase space points $(r_1^{po},p_{r_1},r_2^{po}, p_{r_2})$ that satisfy $H(r_1^{po},p_{r_1},r_2^{po}, p_{r_2})=E$. For constructions of a DS near a saddle type equilibrium point in systems with more than $2$ degrees of freedom see \cite{Wigginsetal01}, \cite{Uzeretal02}, \cite{Waalkens04}.
  
  Hydrogen exchange results and evolution of understanding of TST follow.
   
 \subsection{Known results}\label{subsec:known results}
  In $1971$, Morokuma and Karplus \cite{MorokumaKarplus71} evaluated three representatives of different classes of reactions. They found the collinear hydrogen exchange reaction to be the best suited for a study of the accuracy of TST due to smoothness, symmetry and simplicity. They found that TST agreed with Monte Carlo calculations up to a certain energy, but became inaccurate rather quickly after that.
  
  In $1973$ \cite{PechukasMcLafferty73} Pechukas and McLafferty stated that for TST to be exact, every trajectory passing through the DS does so only once. In other words, TST fails in the presence of trajectories that oscillate between reactants and products.
  
  In $1975$ Chapman, Hornstein and Miller \cite{ChapmanHornsteinMiller75} present numerical results showing that transition state theory ``fails substantially'' for the hydrogen exchange reaction (collinear and spatial) above a certain threshold.
  
  Pollak and Pechukas \cite{PollakPechukas78} proved in $1978$ that flux through a DS constructed using an unstable brake orbit gives the best approximation of the reaction rate. In the presence of multiple TSs the authors introduce \emph{Variational TST} (VTST) - using the DS with the lowest flux to approximate the reaction rate. These results detach TST from potential saddle points. The authors find for the collinear hydrogen exchange reaction that when TST breaks down, VTST can be significantly more accurate, even though both fail to capture the reaction rate decrease.
  
  In $1979$ Pollak and Pechukas \cite{PechukasPollak79TST} proved that TST is exact provided there is only one periodic orbit. Simultaneously, they derived the best estimate of the reaction rate so far for the collinear hydrogen exchange reaction in \cite{PollakPechukas79UST} using what they called \emph{Simple-minded unified statistical theory} (SMUST).
  
  \emph{Unified statistical theory} (UST), due to Miller \cite{Miller76}, attempts to take advantage of the difference of fluxes through all DSs and essentially treat regions of simple and complicated dynamics separately. The authors of \cite{PollakPechukas79UST} found that UST captures the drop in the reaction rate and elaborate on the deviation of UST from the actual rate. The derivation of a lower bound (subject to assumptions) of the rate using the difference between TST and VTST is presented in the appendix of \cite{PollakPechukas79UST}.
  
  A rigorous lower bound is presented in \cite{PollakChildPechukas80}. It uses a DS constructed using a stable periodic orbit between two TS to estimate the error of TST. The accuracy of this lower bound for the hydrogen exchange reaction is remarkable.
   
  In $1987$ M. Davis \cite{Davis87} studied the hydrogen exchange reaction in phase space and considered the role of invariant structures. For low energies he showed that TST can be exact even if several TSs are present, provided that their invariant manifolds do not intersect. At higher energies he made some numerical observations of heteroclinic tangles of invariant manifolds and nearby dynamics. At high energies Davis found that a particular heteroclinic tangle grows in size and by assuming that it contains exclusively nonreactive trajectories he found a very accurate lower bound. The idea of this lower bound is very similar to \cite{PollakChildPechukas80}, but Davis endures a computational cost to quantify trajectories instead of fluxes through DSs.
  
  Davis also formulated an estimate of the reaction rate based on the observation that not many trajectories undergo a complicated evolution, as found by \cite{PollakPechukas79UST}. The estimate assumes that beyond a certain time dynamics in the heteroclinic tangle is randomised and $50\%$ of the remaining trajectories are reactive.

  Davis' observations hint at the crucial role played by invariant manifolds, but the precise manner in which this happens is not understood. Our aim is to explain the role of invariant manifolds in the reaction mechanism and extending it to the energy interval that Davis did not study, the interval with three TSs. We provide new understanding of the interactions between invariant manifolds of two and three TSs and consequently explain the counterintuitive reaction rate decrease.
  

 \section{Periodic orbits and geometry}\label{sec:orbits} 
 \subsection{Local geometry}\label{subsec:local geometry}
 Before we introduce periodic orbits that are relevant to the reaction mechanism, we describe the local energy surface geometry near a potential saddle point. We show that the neighbourhood necessarily contains an unstable periodic orbit and we highlight the importance of invariant manifolds to the local dynamics. The description remains true near unstable periodic orbits that do not lie near saddle points.
 
 Consider the Williamson normal form \cite{Williamson36}, \cite{Uzeretal02} of a system near a saddle point.
 In the neighbourhood $V$ of a potential saddle point, the system is accurately described in some suitable canonical coordinates $(q_1,p_1,q_2,p_2)$ by
 \begin{equation*}
 H_2(q_1,p_1,q_2,p_2)=\frac{1}{2}\lambda (p_1^2-q_1^2)+\frac{1}{2}\omega (p_2^2+q_2^2), 
 \end{equation*}
 where $\lambda,\omega>0$. For a fixed energy $H_2=h_2$, this is equivalent to
 \begin{equation}
  h_2+\frac{1}{2}\lambda q_1^2=\frac{1}{2}\lambda p_1^2+\frac{1}{2}\omega (p_2^2+q_2^2).
  \label{eq:sphere}
 \end{equation}
 \begin{figure}
   \centering
      \includegraphics{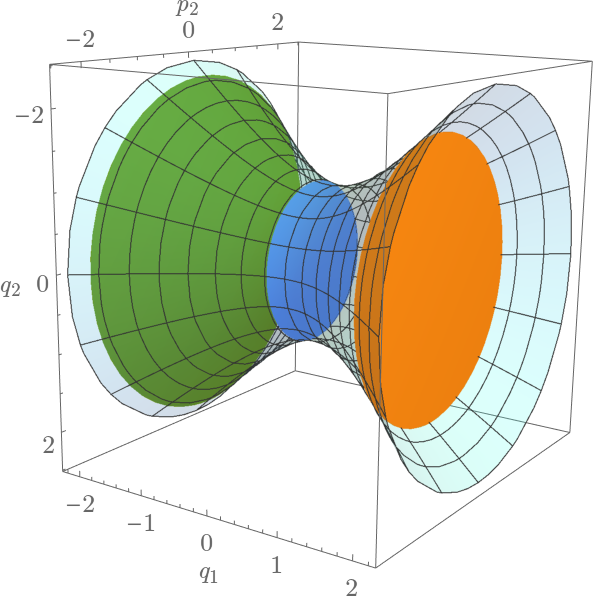}
	\caption{Illustration of local energy surface geometry in the neighbourhood of a saddle point. Sections for fixed values of $q_1$ define spheres (with $\pm p_1$ given implicitly by $H_2(q_1,p_1,q_2,p_2)=h_2$), shown are $q_1=1.5,-.25,-2$.}
	\label{fig:bottleneck}
  \end{figure}
 For a each fixed $q_1$ such that $h_2+\frac{1}{2}\lambda q_1^2>0$ this defines a sphere, as shown in Figure \ref{fig:bottleneck}. Depending on $h_2$, the energy surface has the following characteristics:
 \begin{itemize}
  \item If $h_2<0$, the energy surface consists of two regions locally disconnected near $q_1=0$, reactants ($q_1>0$) and products ($q_1<0$).
  \item Reactants and products are connected by the saddle point for $h_2=0$.
  \item For $h_2>0$, the energy surface is foliated by spheres. The radius of the spheres increases with $|q_1|$. Locally the energy surface has a wide-narrow-wide geometry usually referred to as a \emph{bottleneck}.
 \end{itemize}
 We remark that $q_1$ can be referred to as a \emph{reaction coordinate}.
 To understand transport through a bottleneck, fix an energy $h_2$ slightly above $0$ and consider the Hamiltonian equations for $H_2$:
  \begin{align*}
  \dot{q}_1 &= \lambda p_1,\qquad\qquad\dot{q}_2 = \omega p_2,\\
  \dot{p}_1 &= \lambda q_1,\qquad\qquad\dot{p}_2 = -\omega q_2.
  \end{align*}
 
 The degrees of freedom are decoupled with hyperbolic dynamics in $(q_1,p_1)$ and elliptic in $(q_2,p_2)$. Moreover $q_1=p_1=0$ defines an unstable periodic orbit and $q_1=0$ defines a DS separating reactants from products. This DS, similarly to the one defined in Sec. \ref{subsec:TST}, is a sphere that is due to the instability of $q_1=p_1=0$ transversal to the flow and does not admit local recrossings. The sphere itself is divided by its equator $q_1=p_1=0$ into two hemispheres with unidirectional flux - trajectories passing from reactants to products cross the hemisphere $p_1>0$, while trajectories from products to reactants cross $p_1<0$. Therefore $q_1=p_1=0$ satisfies the definition of a TS. We remark that the DS can be perturbed and as long as its boundary remains fixed and transversality is not violated, the flux through the perturbed and unperturbed DS remains the same.
 
 This description breaks down at high energies, when the periodic orbit may become stable, an event commonly referred to as loss of normal hyperbolicity. Then TST is inaccurate due to local recrossings of the DS. Loss of normal hyperbolicity occurs in the hydrogen exchange reaction, yet TST breaks down at lower energies due the presence of multiple transition states.

 Having the same energy distribution between the degrees of freedom as the periodic orbit $q_1=p_1=0$, its invariant manifolds are given by $$p_1^2-q_1^2=0,$$ the stable being $q_1=-p_1$ and the unstable $q_1=p_1$. They consist of two branches each - one on the reactant side with $q_1>0$, one on the product side with $q_1<0$. These manifolds are cylinders with the periodic orbit as its base. They are codimension-$1$ in the energy surface and separate reactive and nonreactive trajectories - reactive ones inside the cylinders
 $$\frac{1}{2}\lambda (p_1^2-q_1^2)>0,$$
 and nonreactive outside
 $$\frac{1}{2}\lambda (p_1^2-q_1^2)<0.$$
 Only reactive trajectories reach the DS.
 
 Note that in a configuration space projection, the separation between reactive and nonreactive trajectories is not as natural/obvious as in a phase space perspective. Therefore we study the structures made up of invariant manifolds that cause the reaction rate decrease in phase space.
 
 We remark that bottlenecks are related to TSs rather than potential saddle points. Sec. \ref{sec:tangles reaction} contains examples of bottlenecks unrelated to potential saddle points and a saddle point without a bottleneck.
   
 \begin{figure}
  \centering
  \includegraphics[width=0.7\textwidth]{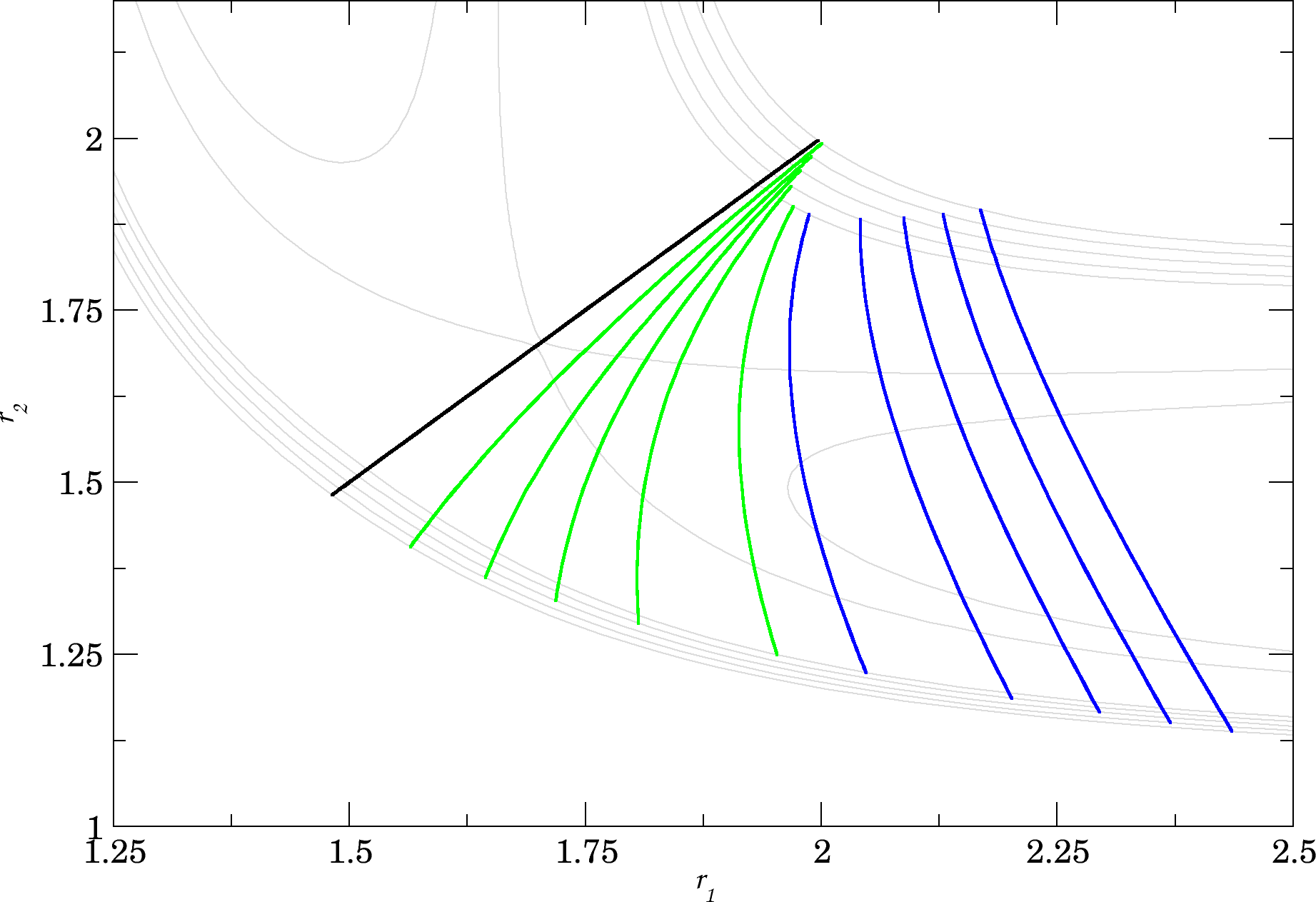}
  \caption{The projections of the periodic orbits of $F_0$ (black), $F_1$ (blue) and $F_2$ (green) onto configuration space at energies $0.02210$, $0.02300$, $0.02400$, $0.02500$ and $0.02600$ and the corresponding equipotential lines (grey).}
  \label{fig:orbits}
 \end{figure}
 \subsection{Periodic orbits}\label{subsec:po}
  For energies $E$ above $0.01456$, the energy of the saddle point, the system \eqref{eq:Ham} admits periodic orbits that come in one-parameter families parametrised by energy. Initially we focus on each family separately and subsequently we investigate the interplay that governs the complicated dynamics exhibited by this system. We adopt the notation of \cite{Inarrea11} for different families of periodic orbits $F_n$, where $n\in\mathbb{N}$, and briefly describe their evolution with increasing energy. We remark that many families come in pairs related by symmetry and for simplicity we restrict ourselves to the $r_1\geq r_2$ half plane. We will refer to orbits of the family $F_n$ on the other half plane by $\widehat{F}_n$.
  
  By $F_0$ we denote the family of Lyapunov orbits associated with the potential saddle, which as explained in Sec. \ref{subsec:local geometry} must be unstable for energies slightly above the saddle. The orbits lie on the axis of symmetry of the system $r_1=r_2$, see Fig. \ref{fig:orbits}. Orbits of this family were used in TST calculations in many of the previous works.

  A saddle-centre bifurcation at approximately $0.02204$ results in the creation of two families - the unstable $F_1$ and the initially stable $F_2$. The configuration space projections of these orbits are shown in Fig. \ref{fig:orbits}.
  The unstable family $F_1$ is the furthest away from $F_0$ and does not undergo any further bifurcations. The $F_2$ family is initially stable, but undergoes a period doubling bifurcation at $0.02208$ creating the double period families $F_{21}$ and $F_{22}$. Unlike reported by \cite{Inarrea11}, we do not find these families disappear in an inverse period doubling bifurcation of $F_2$ at $0.02651$. Instead $F_{21}$ and $F_{22}$ persist with double period until $0.02654$, when they collide together with $F_2$ and $F_0$, see Fig. \ref{fig:bif2}. Consequently $F_0$ becomes stable. We would like to enhance the findings of \cite{Inarrea11} by remarking that $F_{21}$ and $F_{22}$ are briefly stable between switching from hyperbolic to inverse hyperbolic and vice versa, see Fig. \ref{fig:bif1}.
  
  At $0.02661$, $F_0$ is involved in a bifurcation with a double period family $F_4$ that originates in a saddle-centre bifurcation at $0.02254$. $F_4$ is a family symmetric with respect to $r_1=r_2$. For dynamical purposes we point out that above $0.02661$ $F_0$ is inverse hyperbolic. 
  \begin{figure}
   \centering
   \includegraphics[width=0.49\textwidth]{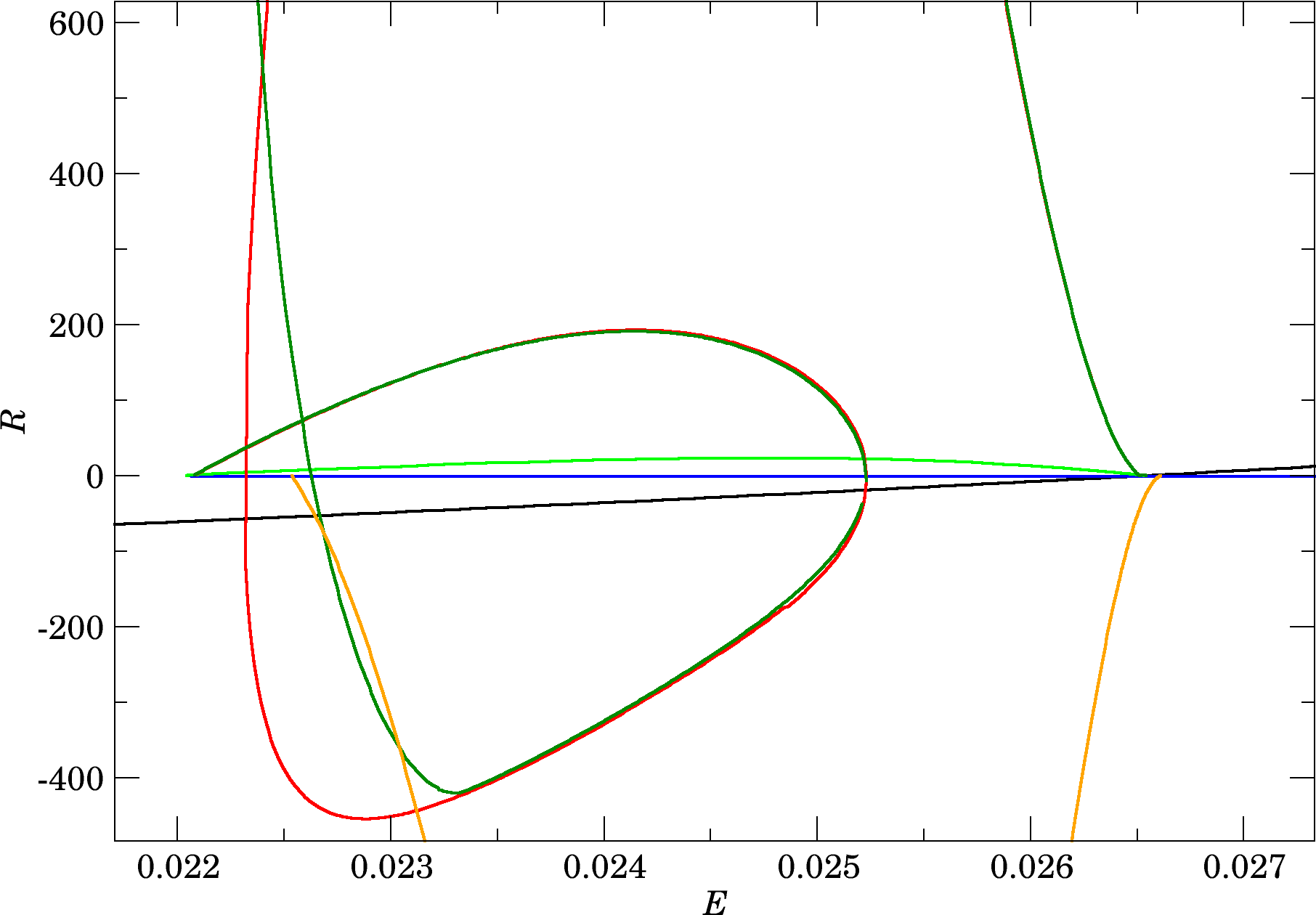}%
   \includegraphics[width=0.48\textwidth]{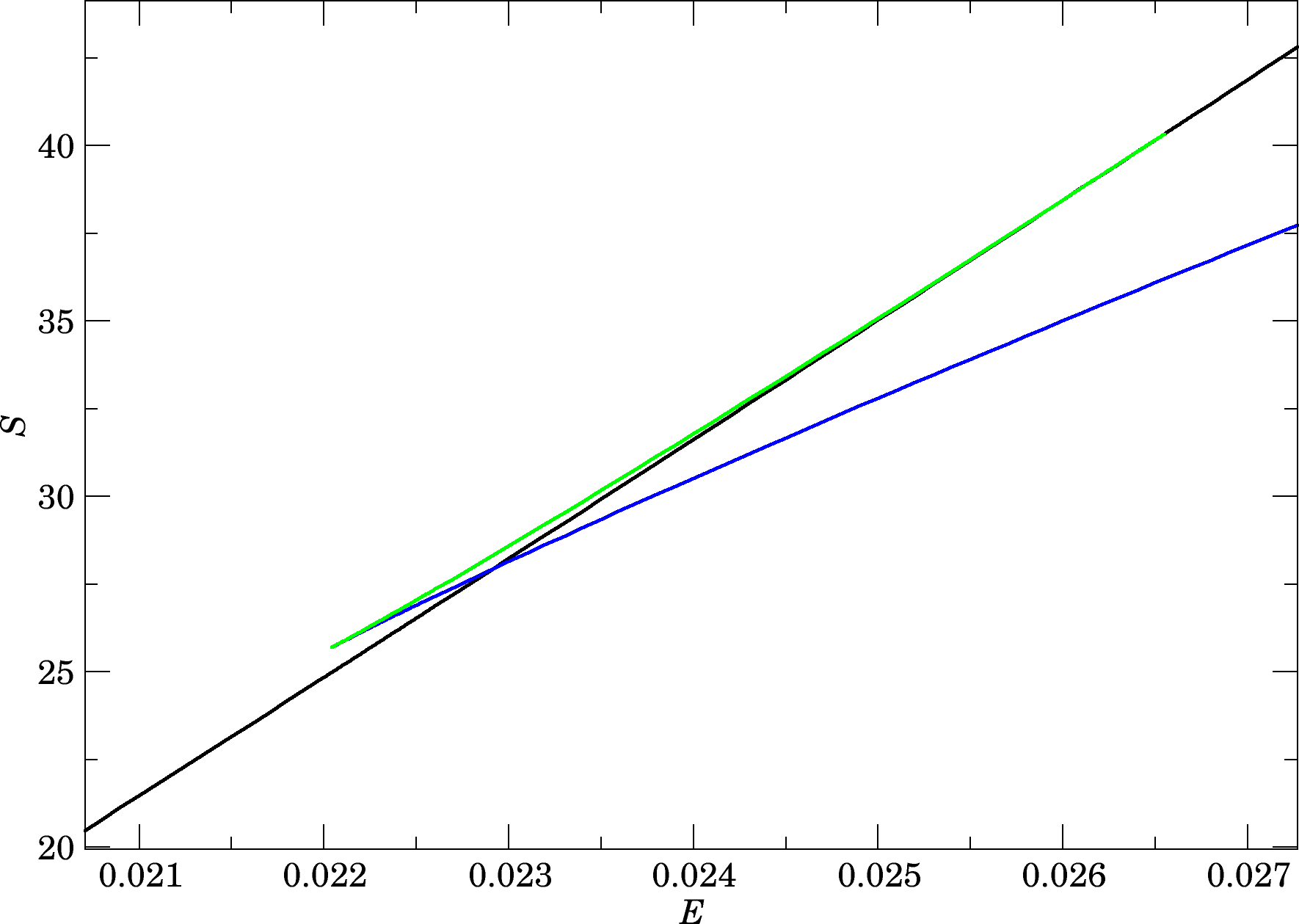}
   \caption{Bifurcation diagrams showing the evolution of $F_0$ (black), $F_1$ (blue), $F_2$ (light green), $F_{21}$ (dark green), $F_{22}$ (red) and $F_4$ (orange) on the energy-residue ($E,R$) and the energy-action ($E,S$) plane. The residues of other families and the action of orbits of period higher than $1$ are omitted for the sake of clarity.
   }\label{fig:bif1}
  \end{figure}
  \begin{figure}
   \centering
   \includegraphics[width=0.48\textwidth]{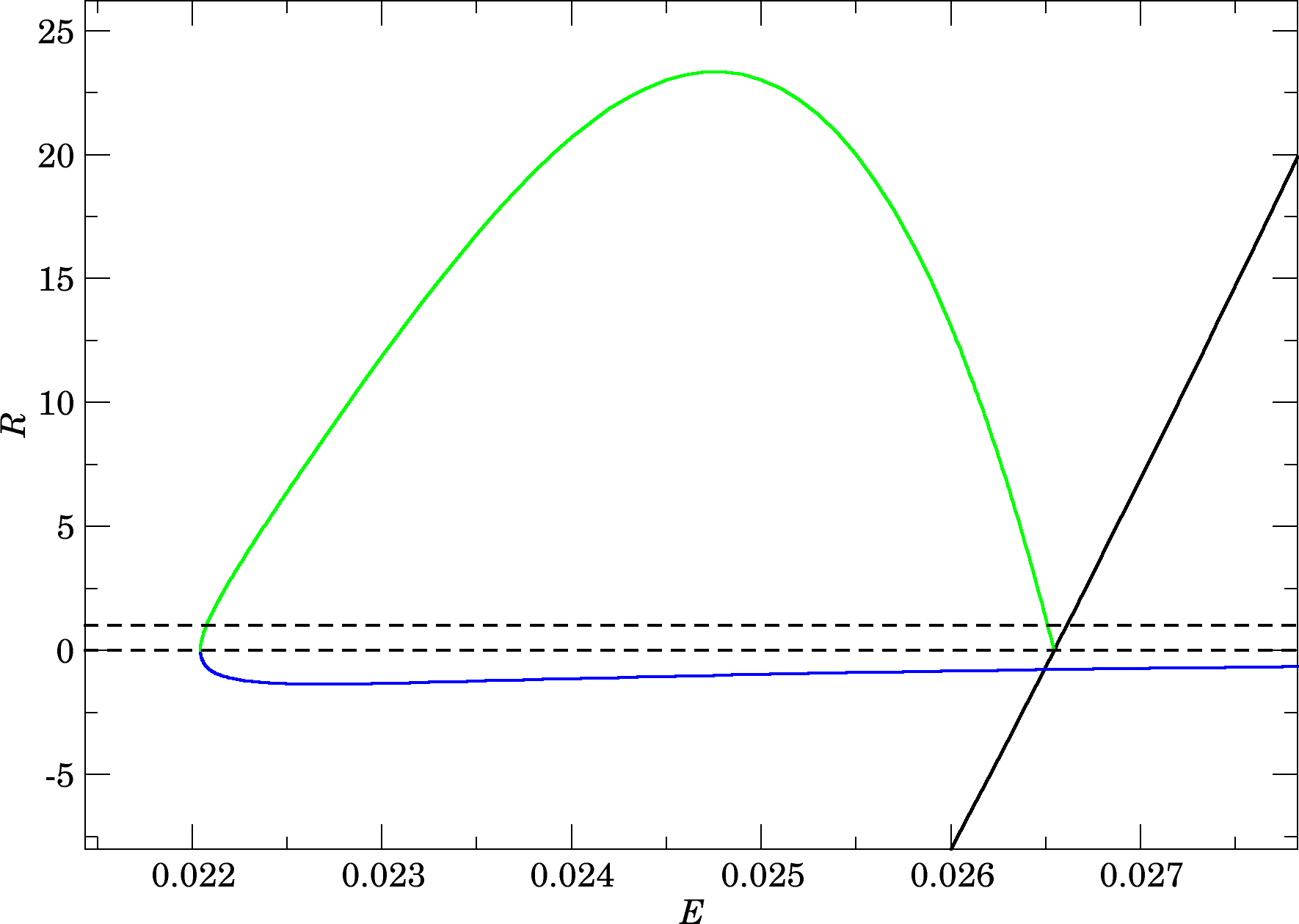}%
   \includegraphics[width=0.48\textwidth]{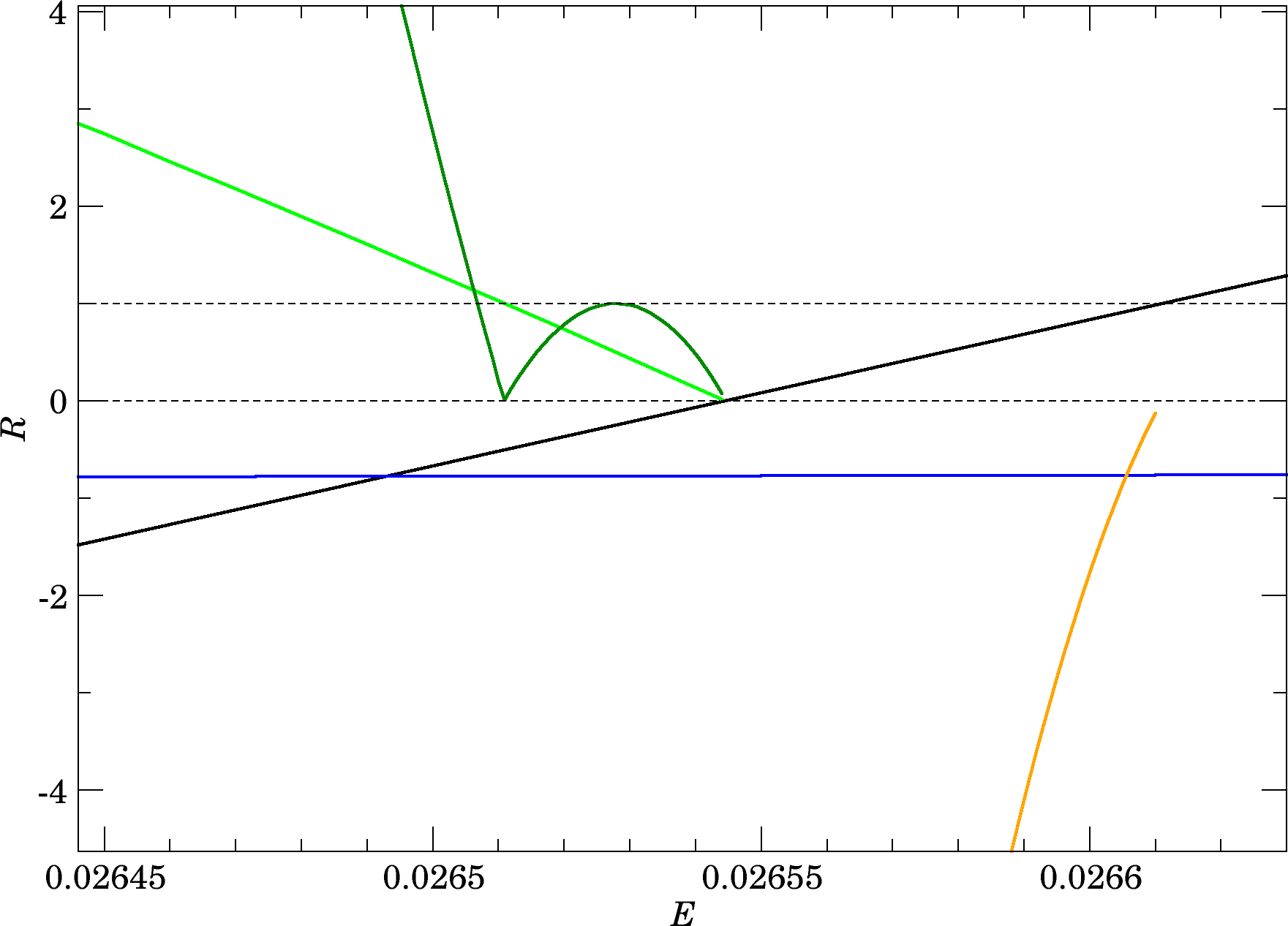}
   \caption{Details of the evolution of $F_0$ (black), $F_1$ (blue), $F_2$ (light green), $F_{21}$ (dark green), $F_{22}$ (red, identical with $F_{21}$) and $F_4$ (orange) on the energy-residue ($E,R$) plane.
   }\label{fig:bif2}
  \end{figure}
  
  Fig. \ref{fig:bif1} and \ref{fig:bif2} show bifurcation diagrams of most of the families on the energy-residue and the energy-action plane. By residue $R$ we mean the Greene residue as introduced by J. M. Greene in \cite{Greene68}, where $R<0$ means that the periodic orbit is hyperbolic, $0<R<1$ means it is elliptic and $R>1$ means it is inverse hyperbolic.
  
  The residue is derived from a matrix that describes the local dynamics near a periodic orbit - the monodromy matrix. Let $\Gamma$ be a periodic orbit with the parametrisation $\gamma(t)$ and period $T$, and $M(t)$ be the matrix satisfying the variational equation
  \begin{equation}
  \dot{M}(t)=JD^2 H(\gamma(t))M(t),
  \label{eq:Monod}
  \end{equation}
  where $J=\begin{pmatrix} 0 & Id\\ -Id & 0\end{pmatrix}$,
  with the initial condition $M(0)=Id.$
  The monodromy matrix is defined by $M=M(T)$ and it describes how a sufficiently small initial deviation $\delta$ from $\gamma(0)$ changes after a full period $T$:
  \begin{equation*}
  \Phi_H^T(\gamma(0)+\delta)=\gamma(T)+M\delta+O(\delta^2),
  \end{equation*}
  where $\Phi_H^t$ is the Hamiltonian flow.
  
  According to \cite{Eckhardt91}, if $\delta$ is an initial displacement along the periodic orbit $\delta\parallel J\nabla H$, then $\delta$ is preserved after a full period $T$, i.e. $M\delta=\delta$. Similarly an initial displacement perpendicular to the energy surface $\delta\parallel \nabla H$ is preserved. Consequently, two of the eigenvalues of $M$ are 
  \begin{equation}
   \lambda_1=\lambda_2=1.
   \label{eq:lambda1}
  \end{equation}

  As \eqref{eq:Monod} is Hamiltonian, the preservation of phase space volume following Liouville's theorem implies $\det M(t)=\det M(0)=1$ for all $t$. Therefore the two remaining eigenvalues must satisfy $\lambda_3\lambda_4=1$ and we can write them as $\lambda$ and $\frac{1}{\lambda}$. $\Gamma$ is hyperbolic if $\lambda>1$, it is elliptic if $|\lambda|=1$ and it is inverse hyperbolic if $\lambda<-1$.

  \begin{definition}
  The Greene residue of $\Gamma$ is defined as $R=\frac{1}{4}(4-Tr M),$
  where $M$ is the monodromy matrix corresponding to the periodic orbit $\Gamma$. 
  \end{definition}

  Using \eqref{eq:lambda1} we can write $R$ as
  $$R=\frac{1}{4}\left(2-\lambda-\frac{1}{\lambda}\right).$$
  By definition $R<0$ if $\Gamma$ is hyperbolic, $0<R<1$ if it is elliptic and $R>1$ if it is inverse hyperbolic.

  Davis \cite{Davis87} mostly focused on the energy interval below $0.02214$ and above $0.02655$, the interval where TST is exact and the interval where two TSs exist, respectively.
  
  In the light of normal form approximation described in Sec. \ref{subsec:local geometry}, we remark that the approximation breaks down completely when $F_0$ loses normal hyperbolicity at $0.02655$ at the latest. The loss of normal hyperbolicity is not the cause for the overestimation of the reaction rate by TST as it starts to deviate from the Monte Carlo rate well before $0.02300$.

  \subsection{Phase space regions}\label{subsec:regions}
  We would like to give up the binary partitioning of an energy surface into reactants and products in favour of defining an interaction region inbetween into which trajectories can only enter once.
  
  As explained in Sec. \ref{subsec:local geometry}, TSs give rise to bottlenecks in phase space. Because $F_1$ gives rise to the bottleneck the furthest away from the potential barrier, we use it to delimit regions as follows. Denote DS$_1$ and DS$_{\widehat{1}}$ the DSs constructed using $F_1$ and $\widehat{F}_1$ according to Sec. \ref{subsec:TST}. The interaction region is the region of the energy surface between the two DSs and it contains all other periodic orbits. Reactants and products are the regions on the $r_1>r_2$-side and the $r_1<r_2$-side of the interaction region respectively, see Figure \ref{fig:reg_config}.
  
  \begin{figure}
  \centering
   \includegraphics[width=0.7\textwidth]{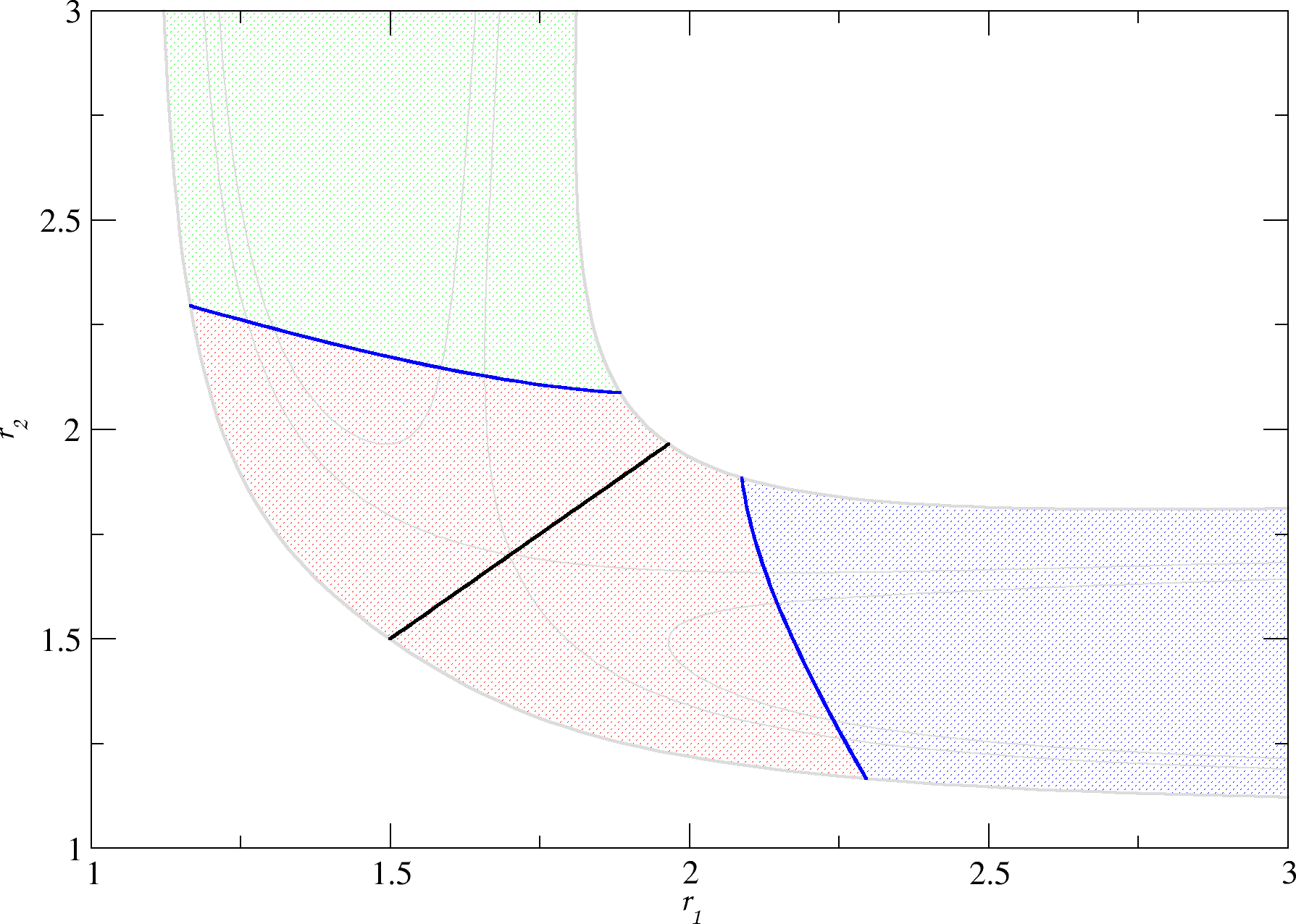}
   \caption{Regions in configuration space at energy $0.02400$. The interaction region (red) bounded by two orbit from the family $F_1$ (blue), the region of reactants (blue) and the region of products (green). The orbit $F_0$ (black) is also included.
   }\label{fig:reg_config}
  \end{figure}
  
  The advantages of this partition of space are immediate.
  \begin{itemize}
   \item All TSs and bottlenecks are in the interaction region or on its boundary. The dynamics in reactants and products has no influence on reactivity and to fully understand the hydrogen exchange reaction, it is enough to restrict the study to the interaction region.
   
   \item Trajectories that leave the interaction region never return. This is true in forward and backward time.
   
   \item It is impossible for a trajectory to enter reactants and products in the same time direction, unlike in the binary partitioning, where trajectories may oscillate between reactants and products.
  \end{itemize}

 \section{Definition of a Poincar\'e surface of section}\label{sec:define sos}
  Invariant manifolds are $2$ dimensional objects on the $3$ dimensional energy surface embedded in $4$ dimensional phase space. To facilitate the study of intersections of invariant manifolds, we define a $2$ dimensional surface of section on the energy surface that is transversal to the flow and intersects invariant manifolds in $1$ dimensional curves.
 
  \subsection{Reaction coordinate and minimum energy path}\label{subsec:mep}  
  \begin{figure}
   \centering
   \includegraphics[width=0.7\textwidth]{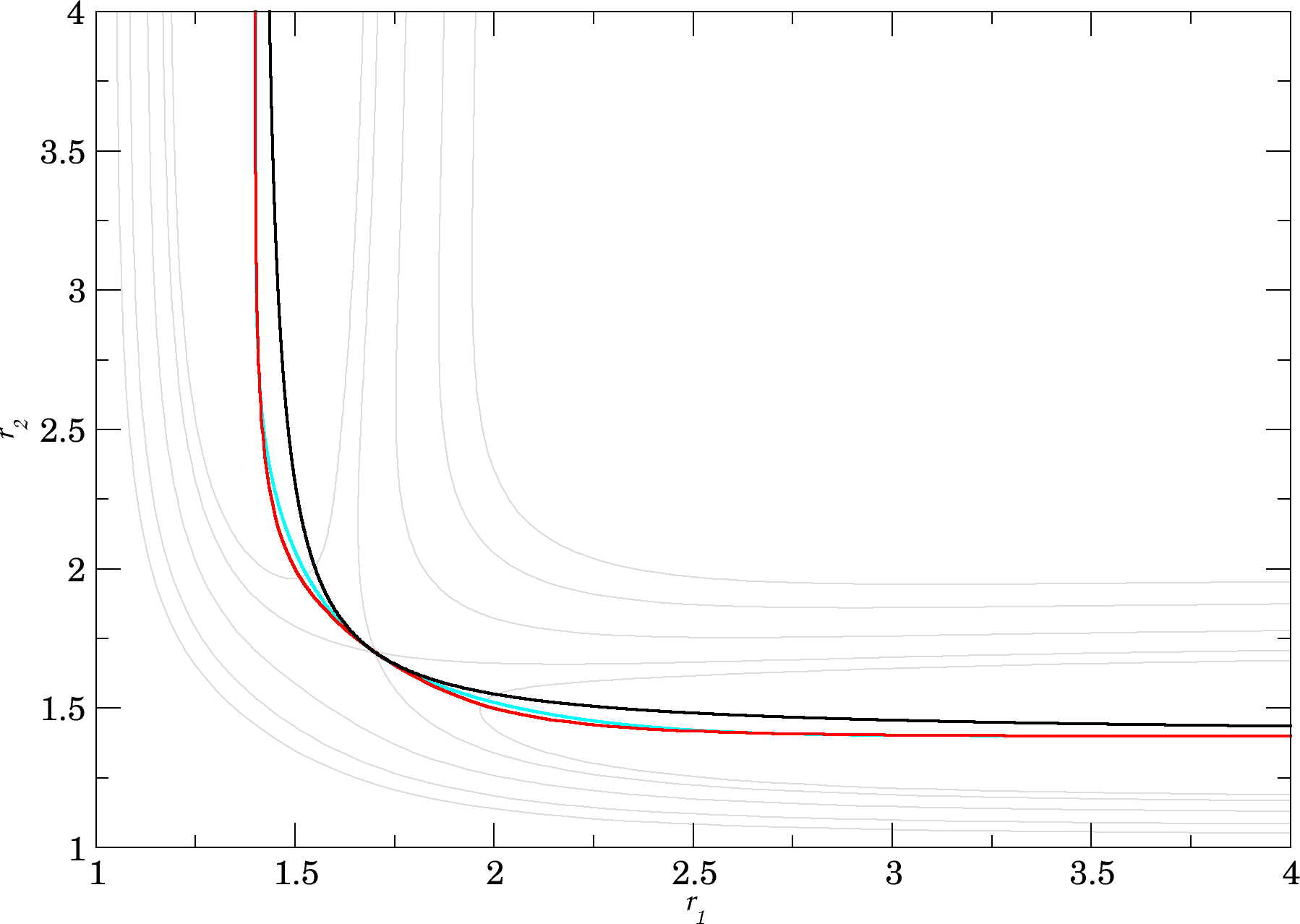}
   \caption{Comparison of the MEP (red), the coordinate line $q_1=0$ (black) and the coordinate line $\tilde{q}_1=0$ (cyan). Equipotential lines of the potential energy surface correspond to energies $0.01200$, $0.01456$, $0.02000$, $0.02800$ and $0.03500$
   }\label{fig:minE}
  \end{figure}
  
  Here we define a reaction coordinate, using which we can monitor the progress along a reaction pathway. Frequently a reaction coordinate is closely related to a \emph{minimum energy path} (MEP) connecting the potential wells of reactants and products via the potential saddle. The coordinate as such is not a solution of the Hamiltonian system and, as remarked in \cite{Pechukas1976mep}, is of no dynamical significance to the system.  

  A MEP can be defined as the union of two paths of steepest descend, the unique solutions of the gradient system
  \begin{equation*}
   \dot{r}_1 = -\frac{\partial U}{\partial r_1},\qquad \dot{r}_2 = -\frac{\partial U}{\partial r_2}, 
  \end{equation*}
  one connecting the saddle $(R_s,R_s)$ to the potential well $(\infty,R_{min})$, the other connecting $(R_s,R_s)$ to $(R_{min},\infty)$. Fig. \ref{fig:minE} shows the MEP on a contour plot of $U$.

  \subsection{Surface of section}\label{subsec:sos}
  The MEP as defined above does not have an analytic expressing, but can be approximated using $q_1=0$, where
  \begin{equation*}
   q_1=(r_1-R_{min})(r_2-R_{min})-(R_s-R_{min})^2,
  \end{equation*}
  as used by \cite{Davis87} and shown in Figure \ref{fig:minE}. Invariant manifolds are always transversal to the MEP and transversal to $q_1=0$ for the energy interval considered in this work. At higher energies Davis used $q_1=-0.04$, $q_1=-0.07$ and $q_1=-0.084$ to avoid tangencies.
  
  We found that
  \begin{equation}
   \tilde{q}_1=(r_1-R_{min})(r_2-R_{min})-(R_s-R_{min})^2e^{-2((r_1-R_s)^2+(r_2-R_s)^2)},
  \end{equation}
  approximates the MEP significantly better, but a coordinate system involving $\tilde{q}_1$ is rather challenging to work with.
  
  Throughout this work we use the surface of section $\Sigma_0$ defined by $q_1=0$, $\dot{q}_1>0$. The condition $\dot{q}_1>0$ determines the sign of the momenta and guarantees that each point on $\Sigma_0$ corresponds to a unique trajectory. We remark that the boundary of $\Sigma_0$ does not consist of invariant manifolds and therefore it is not a surface of section in the sense of Birkhoff \cite[Chapter 5]{Birkhoff27}.
  
  For the sake of utility, we define the other coordinate $q_2$ such that $(q_1,q_2)$ is an orthogonal coordinate system on $\mathbb{R}^2$ and the coordinate lines of $q_2$ are symmetric with respect to $r_1=r_2$. These conditions are satisfied by
  \begin{equation}
  q_2=\frac{1}{2}(r_1-R_{min})^2-\frac{1}{2}(r_2-R_{min})^2.
  \end{equation}
  Note that $q_2=0$ is equivalent to $r_1=r_2$ and $q_2$ is a reaction coordinate - it captures progress along $q_1=0$ and $q_2>0$ contains reactants, while $q_2<0$ contains products. We remark that $q_1$ can locally considered a \emph{bath coordinate} capturing oscillatory motion near the potential barrier. For a fixed energy, the energy surface is bounded in $q_1$ and unbounded in $q_2$.
  
  \subsection{Symplectic coordinate transformation}\label{subsec:momenta}
  Here we define a coordinate system in phase space, such that the coordinate transformation is symplectic.
  This requires finding the conjugate momenta $p_1$, $p_2$ corresponding to $q_1$, $q_2$. For this purpose we use the following generating function (type 2 in \cite{Arnold76}):
  \begin{multline*}
   G(r_1,r_2,p_1,p_2)= \big((r_1-R_{min})(r_2-R_{min})-(R_s-R_{min})^2\big)p_1\\ +\frac{1}{2}\big((r_1-R_{min})^2-(r_2-R_{min})^2\big)p_2.
  \end{multline*}
  Then
  \begin{equation*}
  \frac{\partial G}{\partial r_i}=p_{r_i},\qquad \frac{\partial G}{\partial p_i}=q_i.
  \end{equation*}
  One finds that
  \begin{equation*}
  \begin{split}
   p_{r_1}&=\frac{\partial G}{\partial r_1}=(r_2-R_{min})p_1+(r_1-R_{min})p_2,\\
   p_{r_2}&=\frac{\partial G}{\partial r_2}=(r_1-R_{min})p_1-(r_2-R_{min})p_2.
  \end{split}
  \end{equation*}
  From this we obtain
  \begin{equation*}
  \begin{split}
   p_1&=\frac{(r_2-R_{min})p_{r_1}+(r_1-R_{min})p_{r_2}}{(r_1-R_{min})^2+(r_2-R_{min})^2},\\
   p_2&=\frac{(r_1-R_{min})p_{r_1}-(r_2-R_{min})p_{r_2}}{(r_1-R_{min})^2+(r_2-R_{min})^2}.
  \end{split}
  \end{equation*}
  
  This transformation has a singularity at $r_1=r_2=R_{min}$, but $U(R_{min},R_{min})=0.03845$ is inaccessible at energies we consider. By straightforward calculation one finds that the symplectic $2$-form $\omega_2$ is indeed preserved:
  $$\omega_2=\dif p_{r_1}\wedge \dif r_1 + \dif p_{r_2}\wedge \dif r_2 = \dif p_1\wedge \dif q_1 + \dif p_2 \wedge \dif q_2.$$
  We remark that $(q_2,p_2)$ as defined above are the canonical coordinates on $\Sigma_0$.
    
  
  \section{Transport and barriers}\label{sec:transport barriers} 
  In this section we discuss the dynamics on the surface of section $q_1=0$ under the return map. This involves investigating structures formed by invariant manifolds via lobe dynamics due to \cite{Rom-Kedar90}.
  
  \subsection{Structures on the surface of section} 
  The return map $P$ associated with $\Sigma_0$ is defined as follows. Every point $(q^0,p^0)$ on $\Sigma_0$ is mapped to 
  $$P(q^0,p^0)=(q_2(T),p_2(T)),$$
  where $T>0$ is the smallest for which $q_1(T)=0$ along the solution $$(q_1(t),p_1(t),q_2(t),p_2(t)),$$ with the initial condition
  $$(q_1(0),p_1(0),q_2(0),p_2(0))=(0,p_1,q^0,p^0),$$
  where $p_1$ is given implicitly by the fixed energy $E$.
  $P$ is symplectic because it preserves the canonical $2$-form restricted to $\Sigma_0$,
  \begin{equation}
   \omega_2\bigr\vert_{\Sigma_0}=\dif p_2\wedge\dif q_2,
  \label{eq:omega2sigma}
  \end{equation}
  see \cite{Binney85}. Because the Hamiltonian flow is reversible, $P^{-1}$ is well defined.
  
  Each periodic orbit intersects $\Sigma_0$ in a single point that is a fixed point of $P$. Its stability follows from the eigenvalues of the monodromy matrix, as explained in Sec. \ref{subsec:po}. Due to conservation laws, the eigenvalues can be written as $\lambda$, $\frac{1}{\lambda}$, $1$, $1$, see \cite{Eckhardt91}. For TSs, the eigenvectors corresponding to $\lambda$, $\frac{1}{\lambda}$ define stable and unstable invariant manifolds under the linearisation of $P$ near a fixed point.
  
  \subsection{Barriers formed by invariant manifolds}\label{subsec:barrier}
    \begin{figure}
  \centering
  \includegraphics[width=0.49\textwidth]{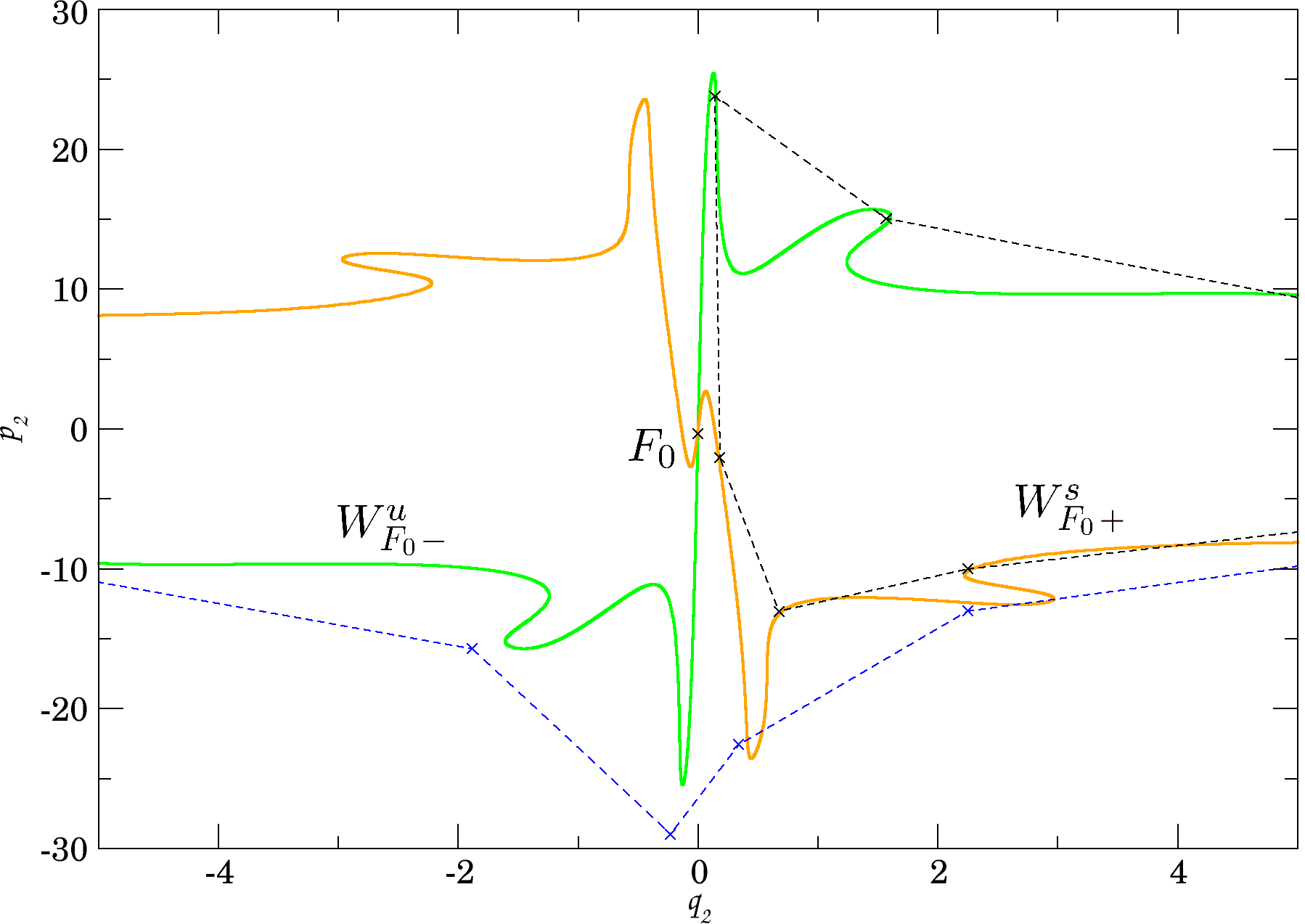}
  \includegraphics[width=0.49\textwidth]{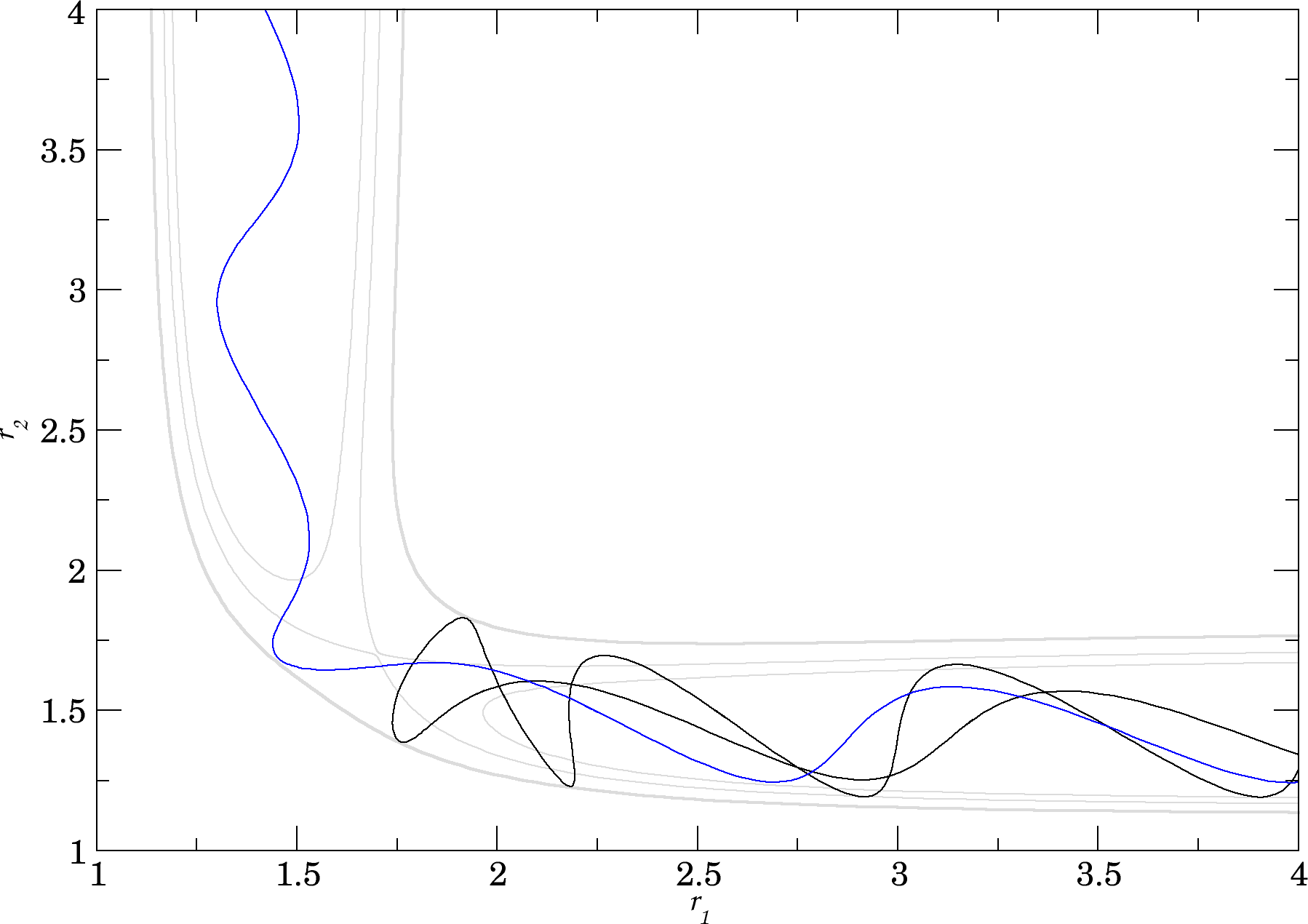}
  \caption{Disjoint invariant manifolds of $F_0$ forming a barrier on $\Sigma_0$ at $0.01900$ and examples of a nonreactive (black) and a reactive (blue) trajectory on $\Sigma_0$ and in configuration space.
  }\label{fig:1900}
  \end{figure}
  
  In the following we discuss invariant manifolds of TSs and their impact on dynamics with increasing energy. Let $F_i$ be a TS, we denote $W_{F_i}$ its invariant manifolds as a whole, stable and unstable invariant manifolds are denoted $W^s_{F_i}$ and $W^u_{F_i}$ respectively. An additional $+/-$ subscript indicates the branch of the invariant manifold with larger/smaller $q_2$ coordinate in the neighbourhood of $F_i$, for example $W^s_{F_i+}$ and $W^s_{F_i-}$. Recall from Sec. \ref{subsec:local geometry} that invariant manifolds of unstable brake orbits are cylinders of codimension-$1$ on the energy surface and they intersect $\Sigma_0$ in curves that divide $\Sigma_0$ into two disjoint parts each.
  
  As mentioned in Sec. \ref{subsec:po}, the system has a single periodic orbit $F_0$ between $0.01456$ and $0.02204$. Its invariant manifolds do not intersect and act as separatrices or \emph{barriers} between reactive and nonreactive trajectories, as shown at $0.01900$ in Fig. \ref{fig:1900}. Reactive trajectories are characterised by a large $|p_2|$ momentum and are located above and below $W_{F_0}$. Nonreactive ones have a smaller $|p_2|$ momentum and are located between $W^s_{F_0}$ and $W^u_{F_0}$. Consequently DS$_0$, the DS associated with $F_0$, has the no-return property and TST is exact (\cite{Davis87}).
  
  $F_1$ and $F_2$ come into existence at $0.02204$, but the reaction mechanism is governed entirely by $W_{F_0}$. $W_{F_1}$ form a homoclinic tangle, but it only contains nonreactive trajectories. TST remains exact until $0.02215$, when a heteroclinic intersection of $W_{F_0}$ and $W_{F_1}$ first appears. In the following we introduce the notation for homoclinic and heteroclinic tangles and subsequently introduce lobe dynamics due to \cite{Rom-Kedar90} on the example of the homoclinic tangle formed by $W_{F_1}$, the $F_1$ tangle.

  \subsection{Definitions and notations}\label{subsec:def manif}
  Let $F_i$ and $F_j$ be fixed points and assume $W^s_{F_i}$ and $W^u_{F_j}$ intersect transversally, as is the case in this system. The \emph{heteroclinic point} $Q\in W^s_{F_i}\cap W^u_{F_j}$ converges to $F_i$ as $t\rightarrow\infty$ and to $F_j$ as $t\rightarrow-\infty$. The images and preimages of $Q$ under $P$ are also heteroclinic points and therefore $W^s_{F_i}$ and $W^u_{F_j}$ intersect infinitely many times creating a \emph{heteroclinic tangle}. If $i=j$, we speak of homoclinic points and homoclinic tangles.
  
  Homoclinic and heteroclinic tangles are chaotic, since dynamics near its fixed points is locally conjugate to Smale's horseshoe dynamics (see \cite{Hirsch04}).
  
  Denote the segment of $W^s_{F_i}$ between $F_i$ and $Q$ by $S[F_i,Q]$ and the segment of $W^u_{F_j}$ between $F_j$ and $Q$ by $U[F_j,Q]$.  
  \begin{definition}
   If $S[F_i,Q]$ and $U[F_j,Q]$ only intersect at $Q$ (and $F_i$ if $i=j$), then $Q$ is a \emph{primary intersection point} (pip).
  \end{definition}

  It should be clear that every tangle necessarily has pips. If $Q$ is a pip, then $PQ_0$ is a pip too, because if $S[F_i,Q]\cap U[F_j,Q]=\{Q\}$, then $S[F_i,PQ]\cap U[F_j,PQ]=\{PQ\}$. Similarly $P^{-1}Q$ is a pip. We remark that by definition all pips lie on $S[F_i,Q]\cup U[F_j,Q]$.
    
  \begin{definition}
   Let $Q_0$ and $Q_1$ be pips such that $S[Q_1,Q_0]$ and $U[Q_0,Q_1]$ do not intersect in pips except for their end points. The set bounded by $S[Q_1,Q_0]$ and $U[Q_0,Q_1]$ is called a \emph{lobe}.
  \end{definition}
  
  Note that the end points of the segments are ordered, the first being closer to the fixed point along corresponding the manifold in terms of arclength on $\Sigma_0$. Clearly $P$ preserves this ordering. It follows that if $S[Q_1,Q_0]$ and $U[Q_0,Q_1]$ do not intersect in pips except for the endpoints, $S[PQ_1,PQ_0]$ and $U[PQ_0,PQ_1]$ cannot intersect in pips other than the end points. Therefore $P$ always maps lobes to lobes.
  
  \subsection{A partial barrier}\label{subsec:partial barrier}
  Without knowing about invariant manifolds, the influence of a tangle on transport between regions of a Hamiltonian system may seem unpredictable and random. The role of invariant manifolds is well known and the transport mechanism may be intricate, yet understandable.
  
  We explain this mechanism on the example of the $F_1$ tangle. The analogue in heteroclinic tangles will be apparent. The choice of the $F_1$ tangle at $0.02206$ is due to the logical order in terms of increasing energy and its relative simplicity. Of the invariant manifolds, $W^s_{F_1+}$ and $W^u_{F_1+}$ form barriers similar to those discussed in Sec. \ref{subsec:barrier} at all energies, while $W^s_{F_1-}$ and $W^u_{F_1-}$ form a homoclinic tangle. All branches of $W_{F_1}$ lie in the region of nonreactive trajectories on the reactant side of $F_0$, see Figure \ref{fig:2206_all}.
  
  \begin{figure}
  \centering
   \includegraphics[width=0.7\textwidth]{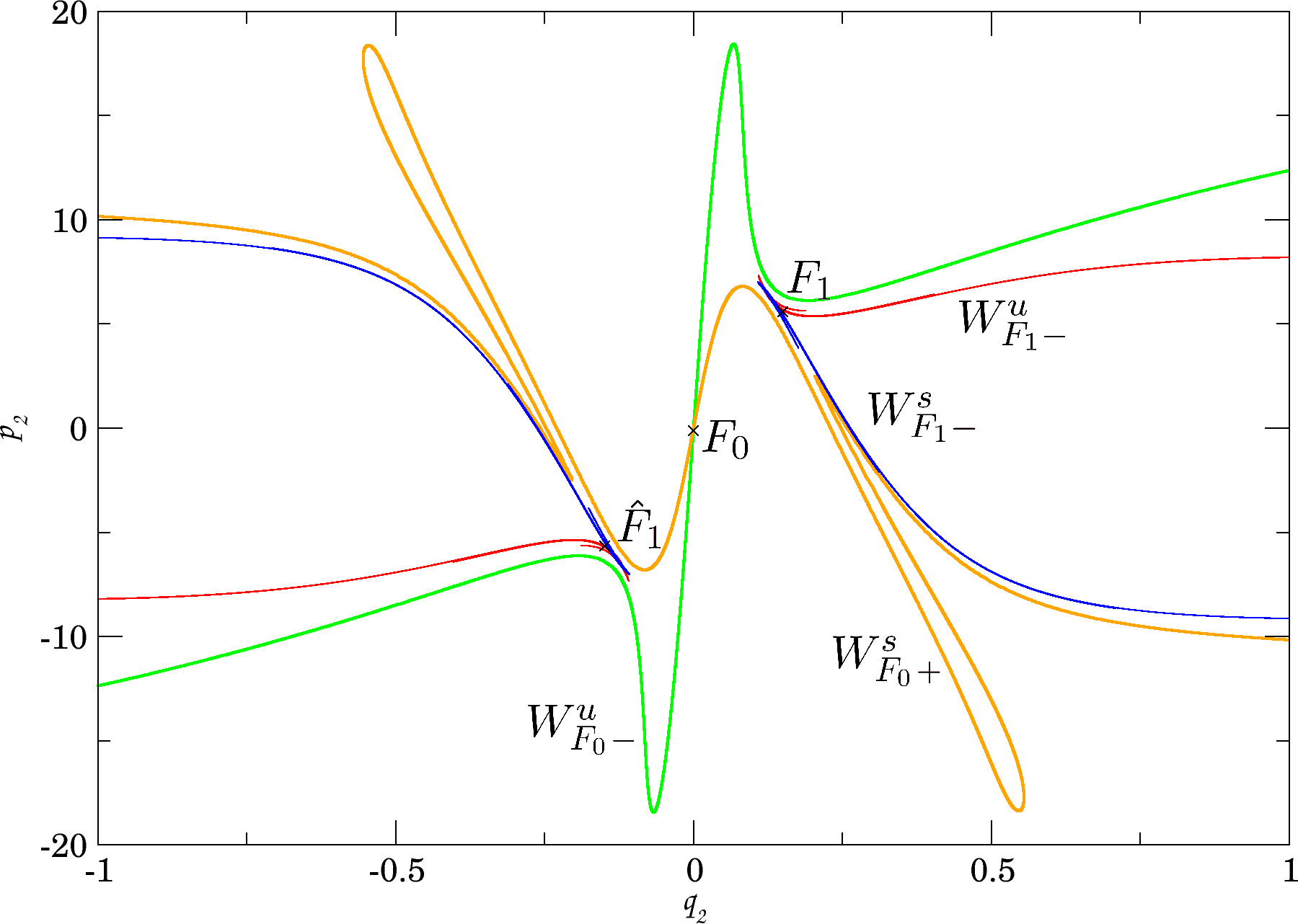}
  \caption{Invariant manifolds of $F_0$, $F_1$ and $\widehat{F}_1$ at $0.02206$.}\label{fig:2206_all}
  \end{figure}  
  
  Choose a pip $Q_0\in W^s_{F_1-}\cap W^u_{F_1-}$, we will comment on the negligible consequences of choice later. The segments $S[F_1,Q_0]$ and $U[F_1,Q_0]$ delimit a region that we denote in reference to $F_1$ by $R_1$. The complement to $R_1$ in the region bounded by $W^s_{F_0+}$ and $W^u_{F_0+}$ is denoted $R_0$, see Figure \ref{fig:2206}.
  
  \begin{figure}
  \centering
  \includegraphics[width=0.49\textwidth]{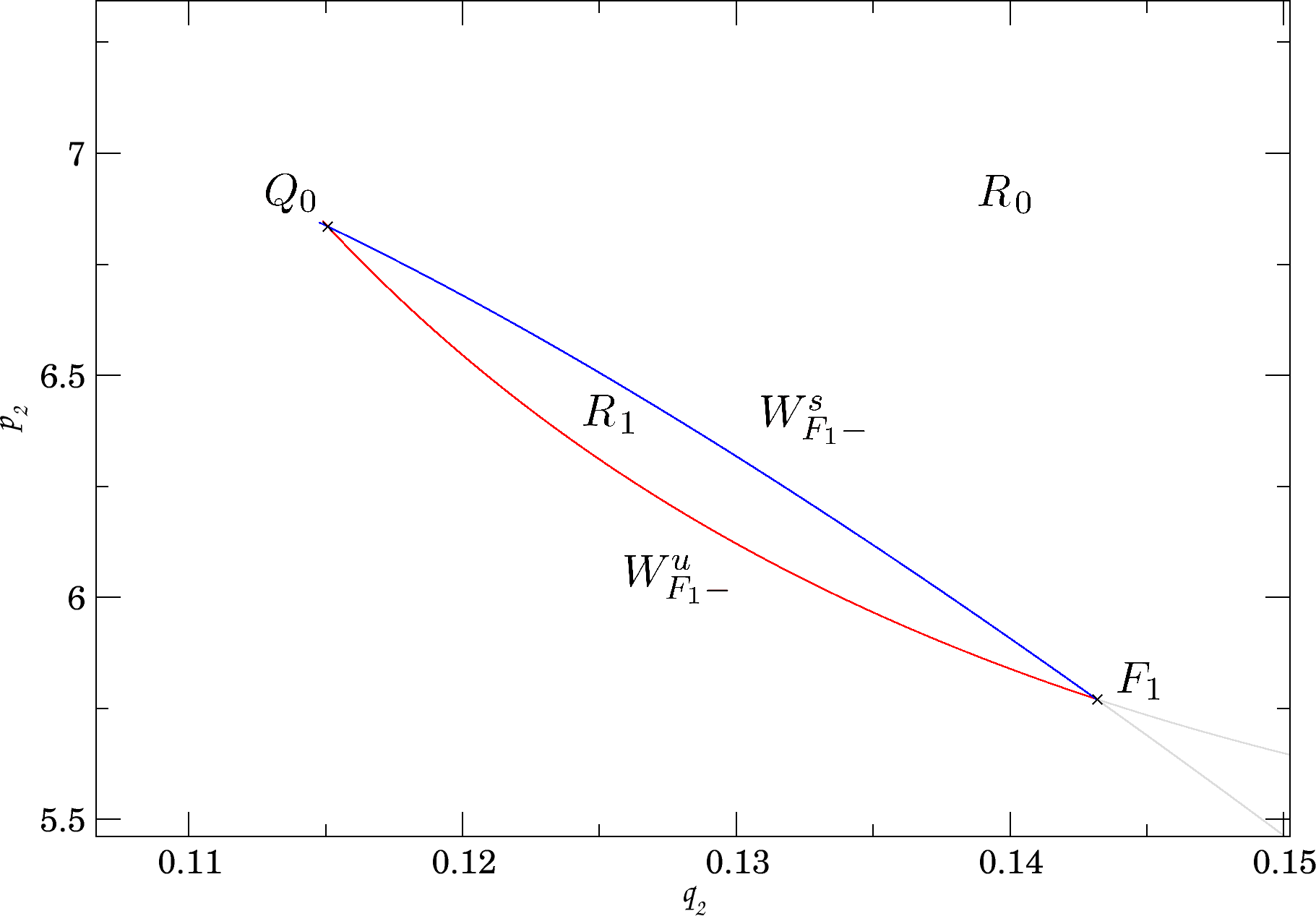}
  \includegraphics[width=0.49\textwidth]{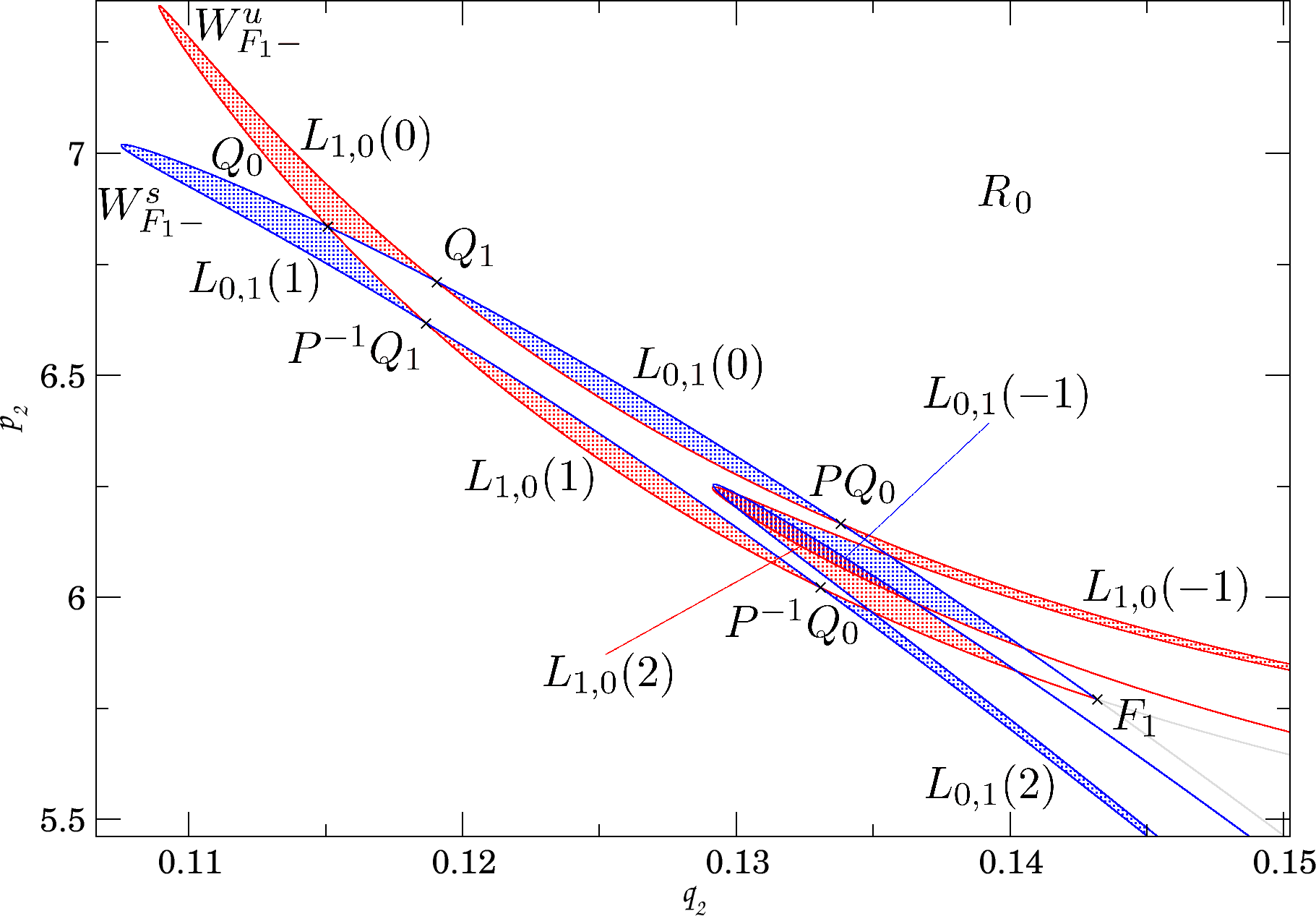}
  \caption{Definition of a region and highlighted lobes in the $F_1$ tangle at $0.02206$.}\label{fig:2206}
  \end{figure}
  
  There is only one pip between $Q_0$ and $PQ_0$, denote it $Q_1$. In general the number of pips between $Q_0$ and $PQ_0$ is always odd (see \cite{Rom-Kedar90}.
  
  We define lobes using $Q_0$, $Q_1$ and all of their (pre-)images. The way lobes guide trajectories in and out of regions can be seen on the lobe bounded by $S[Q_1,Q_0]$ and $U[Q_0,Q_1]$. The lobe is located in $R_0$, but its preimage bounded by $S[P^{-1}Q_1,P^{-1}Q_0]$ and $U[P^{-1}Q_0,P^{-1}Q_1]$ lies in $R_1$. This area escapes from $R_1$ to $R_0$ after $0$ iterations of the map $P$, we denote the lobe by $L_{1,0}(0)$. Analogously, by $L_{0,1}(0)$ we denote the lobe that is captured in $R_1$ from $R_0$ after $0$ iterations and is bounded by $S[PQ_0,Q_1]$ and $U[Q_1,PQ_0]$. We refer to images and preimages of $L_{1,0}(0)$ and $L_{0,1}(0)$ as \emph{escape lobes} and \emph{capture lobes} respectively.
  Note that due to the no-return property of the interaction region, escape and capture lobes cannot intersect beyond DS$_1$.
   
  Denote the lobe that leaves $R_i$ for $R_j$, $i\neq j$, immediately after $n$ iterations of the map $P$ by
  $$L_{i,j}(n).$$
  In this notation we have for all $k,n\in\mathbb{Z}$ the relation 
  \begin{equation}\label{eq:map lobes}
   P^kL_{i,j}(n)=L_{i,j}(n-k).
  \end{equation}
    
  Transition between $R_0$ and $R_1$ is closely connected to $Q_0$ and the transition from $L_{i,j}(1)$ to $L_{i,j}(0)$. All other lobes are confined by the barrier consisting of invariant manifolds to their respective regions. Near $Q_0$, however, the barrier has a gap through which trajectories can pass. MacKay, Meiss and Percival \cite{MacKay84} described this mechanism by saying that it ``acts like a revolving door or turnstile.'' The term \emph{turnstile} was born and lives on, see \cite{Meiss15}.
  
  While $W^s_{F_1-}$ contracts exponentially near the $F_1$, $W^u_{F_1-}$ stretches out. It is easy to see that $S[F_1,Q_0]$ is a rigid barrier - nearly linear and guiding all trajectories in its vicinity. $W^u_{F_1-}$ is a more flexible barrier in forward time - the manifold itself twists and stretches, alternately lying in $R_0$ and $R_1$. The fluid shape of $W^u_{F_1-}$ is the result of complicated dynamics and the influence of $S[F_1,Q_0]$. Stable manifolds behave similarly in backward time and the transition from rigid to flexible results in the turnstile mechanism.
  
  The same is true for heteroclinic tangles. These imperfect barriers are responsible for nonreactive trajectories with high translational energy and reactive trajectories with surprisingly low translational energy. Due to this strangely selective mechanism we speak of a \emph{partial barrier}.
  
  Choosing any other pip than $Q_0$ for the definition of the regions merely affects the time in which lobes escape. Compared to definitions based on $Q_0$, if we chose $PQ_0$ instead, escape/capture of lobes would be delayed by $P$, if we chose $Q_1$, only escape lobes would be affected. This has implications for notation, not for dynamics or its understanding.
  
  \subsection{Properties of lobes}
  Here we state some of the basic properties of lobes that will be relevant in the following sections. The following statements assume that we study transport between two regions that are separated by a homoclinic tangle or a heteroclinic tangle and involves no other invariant manifolds. This provides useful insight into the complex dynamics of homoclinic and heteroclinic tangles.
  
  If the intersection $L_{i,j}(0)\cap L_{j,i}(0)$ is non-empty, it does not leave the respective region and is not subject to transport. In this case we may redefine lobes to be $$\tilde{L}_{i,j}(k):=L_{i,j}(k)\setminus \left(L_{i,j}(k)\cap L_{j,i}(k)\right),$$
  where $\tilde{L}_{i,j}(k)\cap\tilde{L}_{j,i}(k)=\emptyset$. This justifies the following assumption.
  \begin{assumption}\label{assum:lobe}
  We assume that the lobes $L_{i,j}(0)$ and $L_{j,i}(0)$ are disjoint.
  \end{assumption} 
  Equivalently we could assume $L_{i,j}(1)\subset R_i$ and $L_{i,j}(0)\subset R_j$. In case of transport between several regions, we can only make statements based on the two regions that are separated by manifolds of the given tangle.

  Each homoclinic and heteroclinic tangle involves a region bounded by segments of invariant manifolds, such as $R_1$ in Sec. \ref{subsec:partial barrier}. Since $P$ is symplectic, almost all trajectories that enter the bounded region must eventually leave it. This can be formulated as
  \begin{lemma}\label{lemma:partition}
  Let at least one of $R_i$ and $R_j$ be bounded. Then $L_{i,j}(0)$ can be partitioned, except for a set of measure zero $O$, as
  $$L_{i,j}(0)\setminus O=\bigcup_{n\in\Zbb} L_{i,j}(0)\cap L_{j,i}(n).$$
  \end{lemma}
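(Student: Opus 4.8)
The plan is to exploit the symplectic (hence measure-preserving) nature of $P$ together with the boundedness of one of the two regions, via a Poincar\'e-recurrence-type argument applied to the lobe $L_{i,j}(0)$. First I would fix attention on the bounded region, say $R_i$ (the case where $R_j$ is bounded is symmetric, using $P^{-1}$ in place of $P$). By Assumption~\ref{assum:lobe} we have $L_{i,j}(1)\subset R_i$ and $L_{i,j}(0)\subset R_j$, and by \eqref{eq:map lobes} the preimages $L_{i,j}(n)=P^{n-1}L_{i,j}(1)$ for $n\ge 1$ are the successive iterates of $L_{i,j}(1)$ under $P$ as long as they stay in $R_i$. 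The key dynamical fact, established in Sec.~\ref{subsec:regions} and \ref{subsec:partial barrier}, is that the only way to leave $R_i$ for $R_j$ is through the turnstile, i.e. a point of $L_{i,j}(0)$ lies in $R_i$ at some earlier iterate and in $R_j$ immediately after crossing; the escape lobes $L_{i,j}(n)$, $n\le 0$, are precisely the images of $L_{i,j}(1)$ that have already escaped. So a point $x\in L_{i,j}(0)$ has a well-defined "last time in $R_i$": tracing $x$ backward, $x\in L_{i,j}(0)$ means $P^{-1}x\in L_{i,j}(1)\subset R_i$, and then either $P^{-k}x\in R_i$ for all $k\ge 1$ (this is the measure-zero exceptional set, handled below) or there is a smallest $k\ge 1$ with $P^{-k}x$ having just been captured, i.e. $P^{-k}x\in L_{j,i}(\cdot)$; bookkeeping the indices with \eqref{eq:map lobes} shows this is exactly the condition $x\in L_{j,i}(n)$ for the appropriate $n\in\Zbb$. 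This gives the inclusion $L_{i,j}(0)\setminus O\subseteq\bigcup_{n\in\Zbb}L_{i,j}(0)\cap L_{j,i}(n)$, and the reverse inclusion is trivial; disjointness of the pieces follows from Assumption~\ref{assum:lobe} applied to images.

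The exceptional set is $O=\{x\in L_{i,j}(0): P^{-k}x\in R_i\text{ for all }k\ge 1\}$, the set of points that "came from inside $R_i$ forever." I would show $O$ has measure zero as follows. The forward images $P^{-1}O, P^{-2}O,\dots$ — equivalently, writing $A_k:=P^{-k}L_{i,j}(0)\cap R_i$ — are all contained in the bounded region $R_i$, hence have finite total measure; but $P$ is area-preserving, so each $P^{-k}O$ has the same measure as $O$. If the $P^{-k}O$ were pairwise disjoint, summability of their measures inside the finite-measure set $R_i$ would force $|O|=0$. Disjointness holds because $O\subset L_{i,j}(0)$ and for $k\ge 1$, $P^{-k}O\subset L_{i,j}(k)\subset R_i$, while $L_{i,j}(0)$ and $L_{i,j}(k)$ are disjoint (distinct escape times), so $O\cap P^{-k}O=\emptyset$; more generally $P^{-k}O\cap P^{-m}O=P^{-m}(O\cap P^{-(k-m)}O)=\emptyset$ for $k\ne m$. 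Hence $\sum_{k\ge 0}|P^{-k}O|=\sum_{k\ge 0}|O|\le |R_i|<\infty$ forces $|O|=0$. This is the standard recurrence argument, and it is the part I would write out carefully.

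The main obstacle I anticipate is not the recurrence estimate itself but the \emph{index bookkeeping} needed to identify "the smallest $k$ with $P^{-k}x$ a capture point" with the statement "$x\in L_{j,i}(n)$ for some $n$," i.e. verifying that the turnstile really is the unique gateway and that every backward orbit of a point in $L_{i,j}(0)$ that ever leaves $R_i$ does so through some capture lobe $L_{j,i}(n)$ with the correct offset. This requires knowing that $R_i$ is bounded by segments of the two manifolds meeting only at the chosen pips and its images, so that its boundary is crossed only at the turnstile — precisely the setup described before Assumption~\ref{assum:lobe} and in \cite{Rom-Kedar90}. Once that structural fact is in place, relation \eqref{eq:map lobes} does all the index arithmetic, and combining it with the measure-zero estimate for $O$ completes the proof.
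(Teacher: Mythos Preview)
Your approach is exactly what the paper has in mind: the paper gives no proof beyond the sentence ``Since $P$ is symplectic, almost all trajectories that enter the bounded region must eventually leave it,'' and your Poincar\'e--recurrence argument is the standard way to make that precise. Two small points of cleanup: the assertion $L_{i,j}(k)\subset R_i$ for all $k\ge 1$ is only guaranteed for $k=1$ (later preimages may protrude into $R_j$), but what your disjointness step actually needs, namely $P^{-k}O\subset R_i$ for $k\ge 1$, holds directly from the definition of $O$; and the pieces $L_{i,j}(0)\cap L_{j,i}(n)$ need not be pairwise disjoint when reentries occur (the paper itself observes this in the remark and lemma immediately following), so ``partition'' here should be read as a covering up to measure zero rather than a disjoint decomposition.
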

  
  \begin{remark}
  The region $R_j$ has the no-return property iff escape lobes ($L_{j,i}$) are disjoint, or equivalently iff capture lobes are disjoint. Automatically then for all $n>0$
  $$L_{i,j}(0)\cap L_{j,i}(-n) = \emptyset.$$
  \end{remark}
  
  Some of the intersections in Lemma \ref{lemma:partition} $L_{i,j}(0)\cap L_{j,i}(n)$ for $n>0$ are empty sets. We are going to show that finitely many are empty at most.
  
  \begin{lemma}\label{lemma:lobes intersect}
   For all $n_0>0$
   $$L_{i,j}(0)\cap L_{j,i}(n_0)\neq \emptyset \Rightarrow L_{i,j}(0)\cap L_{j,i}(n_0+1)\neq \emptyset.$$
  \end{lemma}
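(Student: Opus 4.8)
The plan is to argue by planar topology on the surface of section, using the Jordan curve theorem, the equivariance relation \eqref{eq:map lobes}, and the connectedness of the invariant manifolds; throughout I keep Assumption \ref{assum:lobe} and the hypothesis of Lemma \ref{lemma:partition} that one of $R_i$, $R_j$ is bounded.

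First I would fix the geometric picture. Each lobe is a topological disk whose boundary is a sub-arc of $W^s$ together with a sub-arc of $W^u$ joined at two pips, and by \eqref{eq:map lobes} the boundary arcs of all lobes sit in a fixed cyclic order along $W^s$ and along $W^u$. All turnstile lobes have the same area (a standard consequence of symplecticity), so $L_{i,j}(0)$ and $L_{j,i}(n_0)=P^{-n_0}L_{j,i}(0)$ have equal area; hence if $L_{i,j}(0)\cap L_{j,i}(n_0)\neq\emptyset$ neither lobe contains the other and the Jordan curves $\partial L_{i,j}(0)$, $\partial L_{j,i}(n_0)$ must cross. A crossing cannot be between two sub-arcs of $W^s$, nor between two sub-arcs of $W^u$ ($W^s$ has no self-intersection, and a point of $W^s_{F_i}\cap W^s_{F_j}$ with $F_i\neq F_j$ would converge to both fixed points); so every crossing is a transversal homoclinic/heteroclinic point that lies on the stable arc of one lobe and the unstable arc of the other.

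Next, applying $P^{n_0}$ to the hypothesis and $P^{n_0+1}$ to the conclusion (both bijections) and invoking \eqref{eq:map lobes}, the claim is equivalent to
$$L_{i,j}(-n_0)\cap L_{j,i}(0)\neq\emptyset\ \Longrightarrow\ L_{i,j}\bigl(-(n_0+1)\bigr)\cap L_{j,i}(0)\neq\emptyset,$$
i.e. to showing that once a forward image of the escape lobe $L_{i,j}(0)$ meets the fixed capture lobe $L_{j,i}(0)$, so does every later image, since $L_{i,j}(-(n_0+1))=P\,L_{i,j}(-n_0)$. The idea is to track the crossings of $W^u$ through the stable arc $\sigma$ bounding $L_{j,i}(0)$ (up to interchanging the roles of $W^s$ and $W^u$, handled by the time-reversal symmetry of the setup): by Step 1 the intersection $L_{i,j}(-n_0)\cap L_{j,i}(0)\neq\emptyset$ records the fact that $W^u$ has pushed a fold across $\sigma$, and because $W^u$ is a single connected curve whose folds are linearly stretched by $P$, the subsequent, more stretched, image cannot withdraw that fold without re-crossing $\sigma$; re-running the Jordan curve argument then converts the persisting crossing into the nonempty intersection $L_{i,j}(-(n_0+1))\cap L_{j,i}(0)$. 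As a consistency check, Lemma \ref{lemma:partition} already shows these intersections partition $L_{i,j}(0)$ up to measure zero into a family that, by boundedness and Poincar\'e recurrence, cannot be finite, and the fixed cyclic order of the defining arcs forbids gaps in the index set.

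The step I expect to be the main obstacle is making the ``fold cannot withdraw'' assertion precise: one must control the nesting of the successive folds of $W^u$ (and dually of $W^s$) relative to the fixed arc $\sigma$, and exclude the degenerate cases in which the new crossing of $W^u$ with $W^s$ falls outside the sub-arc of $W^s$ that actually bounds $L_{j,i}(0)$, or the region it cuts off lies on the wrong side of $\partial L_{i,j}(0)$. Carrying this through requires careful bookkeeping of the cyclic order of pips, the orientation conventions for lobes, the disjointness supplied by Assumption \ref{assum:lobe}, and the no-local-recrossing property of DS$_1$, which guarantees that escape and capture lobes do not interfere beyond it.
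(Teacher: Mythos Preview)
Your outline is close in spirit to the paper's proof: both rely on a Jordan-curve separation argument tracking where an unstable arc has crossed a stable arc, and both exploit the equivariance~\eqref{eq:map lobes}. But you have correctly identified your own gap: the ``fold cannot withdraw'' step is the entire content of the lemma, and you have not carried it out. The detours through equal lobe areas, Poincar\'e recurrence, and Lemma~\ref{lemma:partition} are unnecessary and do not help close that gap; at best they are consistency checks.

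The paper's argument is far more concrete and sidesteps the bookkeeping you anticipate. Rather than shifting by $P^{n_0}$, it applies $P$ \emph{once}: from $L_{i,j}(0)\cap L_{j,i}(n_0)\neq\emptyset$ one gets $L_{i,j}(-1)\cap L_{j,i}(n_0-1)\neq\emptyset$, and the goal becomes showing $L_{i,j}(-1)\cap L_{j,i}(n_0)\neq\emptyset$ (then apply $P^{-1}$). One picks a specific heteroclinic point $\widetilde{Q}$ on the unstable arc $U[Q_1,Q_2]$ of $L_{i,j}(0)$ and the stable arc $S[P^{-n_0}Q_1,P^{-n_0}Q_0]$ of $L_{j,i}(n_0)$, and forms the Jordan curve $U[P^{-n_0}Q_1,\widetilde{Q}]\cup S[P^{-n_0}Q_1,\widetilde{Q}]$, two arcs joining the pip $P^{-n_0}Q_1$ to $\widetilde{Q}$. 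The image $P\widetilde{Q}$ lies inside the enclosed region (it has moved along $W^s$ toward the fixed point), whereas the pip $PQ_1$ of $L_{i,j}(-1)$ lies on $\partial R_j$ and hence outside. Since an invariant manifold cannot self-intersect, the unstable arc $U[PQ_1,P\widetilde{Q}]\subset\partial L_{i,j}(-1)$ must cross the stable side of that Jordan curve, which is part of $\partial L_{j,i}(n_0)$. That single forced crossing is the whole proof; no global nesting or cyclic-order analysis of folds is needed, and the ``degenerate cases'' you worry about do not arise because the two endpoints $PQ_1$ and $P\widetilde{Q}$ are explicitly placed on opposite sides of a specific Jordan curve.
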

  
  Using Fig. \ref{fig:2206} as an example,
  $$L_{0,1}(-1)\cap L_{1,0}(2)\neq \emptyset \Rightarrow L_{0,1}(-1)\cap L_{1,0}(3)\neq \emptyset,$$
  because $L_{0,1}(0)\cap L_{1,0}(3)\neq \emptyset$ and $W^s_{F_1-}$ can only reach $L_{0,1}(0)$ by passing through $L_{0,1}(-1)$.
  
  \begin{proof}
   Without loss of generality assume $R_j$ is bounded and fix $n_0>0$. If
   $$L_{i,j}(0)\cap L_{j,i}(n_0)\neq \emptyset,$$
   then its image under $P$
   $$L_{i,j}(-1)\cap L_{j,i}(n_0-1)\neq \emptyset.$$
   We are going to argue that the only way for $L_{i,j}(-1)$ to reach $L_{j,i}(n_0-1)$ is by intersecting $L_{j,i}(n_0)$.
   
   Denote $Q_1$ and $Q_2$ the pips that define $L_{i,j}(0)$ and $P^{-n_0}Q_0$ and $P^{-n_0}Q_1$ the pips that define $L_{j,i}(n_0)$.
   Let $\widetilde{Q}\in U[Q_1,Q_2]\cap S[P^{-n_0}Q_1,P^{-n_0}Q_0]$.
   
   $L_{i,j}(-1)$ lies inside $R_j$ (possibly partially in $R_i$ via another escape lobe) and so does $U[PQ_1,PQ_2]$, the part of $\partial L_{i,j}(-1)$ that does not coincide with $\partial R_j$. Note that as all pips, $PQ_1,PQ_2\in\partial R_j$. The intersection point $\widetilde{Q}$ lies in the interior of the region bounded by $U[P^{-n_0}Q_1,\widetilde{Q}]$ and $S[P^{-n_0}Q_1,\widetilde{Q}]$, while $PQ_1$ is located outside. Because a invariant manifold cannot reintersect itself, $U[PQ_1,P\widetilde{Q}]$ has to cross $S[P^{-n_0}Q_1,\widetilde{Q}]$, which is part of $\partial L_{j,i}(-n_0)$. Therefore
   $$L_{i,j}(-1)\cap L_{j,i}(n_0)\neq \emptyset,$$
   and when mapped backward,
   $$L_{i,j}(0)\cap L_{j,i}(n_0+1)\neq \emptyset.$$
  \end{proof}
  
  Note for $n_0<0$, time reversal yields using a similar argument
  $$L_{i,j}(0)\cap L_{j,i}(n_0)\neq \emptyset \Rightarrow L_{i,j}(0)\cap L_{j,i}(n_0-1)\neq \emptyset.$$

  Following Lemma \ref{lemma:partition} and Lemma \ref{lemma:lobes intersect}, for $k$ large enough $L_{i,j}(k)$ lies simultaneously in both regions forming a complicated structure. Since pips are mapped exclusively on $\partial R_j$, they aid identification of parts of lobes.
    
  Due to (\ref{eq:map lobes}), for $n$ small we may study lobe intersections of the form
  $$L_{0,1}(k)\cap L_{1,0}(k+n),$$
  that tend to be heavily distorted by the flow simply by mapping them forward or backward to less distorted intersections. However this does not work for
  $$L_{0,1}(-k)\cap L_{1,0}(k),$$
  for large $k$. On the other hand, we can expect the area of this intersection to shrink considerably with $k$, so their quantitative impact is limited.
  
  We remark that while almost the entire area of a capture lobe must escape at some point, this does not apply to entire regions. Regions may contain stable fixed points surrounded by KAM curves (sections of KAM tori) that never escape.
  
  The picture of a heteroclinic tangle as a structure consisting of only two manifolds is oversimplified. In general heteroclinic tangles in a Hamiltonian system with $2$ degrees of freedom can be expected to involve four branches of invariant manifolds. It takes four segments and two pips to define a region and consequently there will always be two turnstiles. The oversimplification is justified for tangles where the two turnstiles are made up of mutually disjoint lobes. Tangles with two intersecting turnstiles admit transport between non-neighbouring regions and we approach them differently.
  
  \subsection{Content of a lobe}
  In this section we use show how lobes guide trajectories in their interior.
  
  Denote by $\mu$ the measure on $\Sigma_0$, that is proportional to $\omega_2\bigr\vert_{\Sigma_0}$ \eqref{eq:omega2sigma}. Under area preservation we understand that for any set $A$ and for all $k\in\mathbb{Z}$ $$\mu\left(A\right)=\mu\left(P^kA\right).$$
  As a direct consequence of area preservation of a region we have for all $k,n\in\mathbb{Z}$
  $$\mu\left(L_{i,j}(n)\right)=\mu\left(L_{j,i}(k)\right).$$
  
  \begin{assumption}\label{assum:nonzero}
  Throughout this work we assume that $\mu(L_{i,j}(0))\neq0$.
  \end{assumption}
    
  Combining Assumptions \ref{assum:lobe} and \ref{assum:nonzero} implies that $L_{i,j}(0),L_{j,i}(1)\subset R_j$ and if 
  $$2\mu\left(L_{i,j}(0)\right)>\mu\left(R_j\right),$$
  then necessarily $L_{i,j}(0)\cap L_{j,i}(1)\neq \emptyset$.
  
  All other lobes may partially lie in both $R_i$ and $R_j$, depending on the intersections of escape and capture lobes.
  \begin{definition}
  Assume $R_j$ is bounded. The \emph{shortest residence time} in a tangle is a number $k_{srt}\in\mathbb{N}$, such that
  $$L_{i,j}(0)\cap L_{j,i}(k) = \emptyset,$$
  for $0<k<k_{srt}$ and
  $$L_{i,j}(0)\cap L_{j,i}(k_{srt})\neq \emptyset.$$
  \end{definition}
  
  \begin{remark}\label{rem:intersections}
  The first lobe to lie partially outside $R_j$ is $L_{i,j}(-k_{srt})$, because it intersects $L_{j,i}(0)\subset R_i$. The lobes $L_{i,j}(-k)$ and $L_{j,i}(k)$ are entirely contained in $R_j$ for $0\leq k<k_{srt}$.
  \end{remark}
  
  Note that in a homoclinic tangle, since $L_{i,j}(-k)$ for $0\leq k<k_{srt}$ must be mutually disjoint and all contained in $R_j$, necessarily
  $$\mu\left(R_j\right)>k_{srt}\mu\left(L_{i,j}(0)\right).$$

  Once $L_{i,j}(-k_{srt})$ where $k_{srt}>0$ lies partially in $R_i$ by Lemma \ref{lemma:lobes intersect}
  $$L_{i,j}(-k_{srt})\cap L_{j,i}(n)\neq \emptyset,$$
  for all $n>0$ and therefore $L_{i,j}(-k)$ intersects $L_{j,i}(0)\subset R_i$ for all $k>k_{srt}$. Due to reentries and Assumption \ref{assum:lobe}, the statement is not true for $L_{i,j}(k)$ with $k>0$, but an analogue holds in reverse time.

  Reentries are possible in tangles where escape (and capture) lobes are not mutually disjoint, hence the following Lemma.
  \begin{lemma}
  Let $k_1<k_3$ be such that $L_{i,j}(k_1)\cap L_{i,j}(k_3) \neq\emptyset$ with $i=0,1$ and $j=1-i$. Then
  \begin{equation*}
  L_{i,j}(k_1)\cap L_{i,j}(k_3) = \bigcup\limits_{k_2=k_1+1}^{k_3-1} L_{i,j}(k_1)\cap L_{j,i}(k_2)\cap L_{i,j}(k_3). 
  \end{equation*} 
  \end{lemma}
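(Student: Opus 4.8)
The plan is to normalise the indices, dispose of the trivial inclusion and the degenerate case, and then read off the intermediate index $k_2$ from the $P$-itinerary of a point in the double overlap. Since $P$ is a bijection sending lobes to lobes with the shift relation \eqref{eq:map lobes}, applying $P^{k_1}$ to both sides reduces the claim to the special case
$$L_{i,j}(0)\cap L_{i,j}(m)=\bigcup_{k=1}^{m-1}L_{i,j}(0)\cap L_{j,i}(k)\cap L_{i,j}(m),\qquad m:=k_3-k_1,$$
so from now on I would take $k_1=0$. The inclusion ``$\supseteq$'' is immediate, each set in the union being contained in $L_{i,j}(0)\cap L_{i,j}(m)$. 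Moreover the hypothesis $L_{i,j}(k_1)\cap L_{i,j}(k_3)\neq\emptyset$ forces $m\geq 2$: by Assumption~\ref{assum:lobe} we have $L_{i,j}(0)\subset R_j$ and $L_{i,j}(1)\subset R_i$, so $L_{i,j}(0)\cap L_{i,j}(1)=\emptyset$, and when $m=1$ both sides are empty.

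For ``$\subseteq$'' with $m\geq 2$ I would take $x\in L_{i,j}(0)\cap L_{i,j}(m)$ and track its orbit. From $L_{i,j}(0)\subset R_j$ one gets $x\in R_j$, and from \eqref{eq:map lobes} and $x\in L_{i,j}(m)$ one gets $P^{m-1}x\in L_{i,j}(1)\subset R_i$. Since the part of the common boundary of $R_i$ and $R_j$ away from the turnstile consists of invariant-manifold segments, the set $R_i\cup R_j$ is $P$-invariant wherever $P$ is defined on it, so the itinerary $x,Px,\dots,P^{m-1}x$ stays in $R_i\cup R_j$; beginning in $R_j$ and ending in $R_i$, it has a smallest $k\in\{1,\dots,m-1\}$ with $P^{k}x\in R_i$, and then $P^{k-1}x\in R_j$. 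Invoking the turnstile mechanism (\cite{Rom-Kedar90}, \cite{MacKay84}) for transport between the two regions, the only points of $R_j$ mapped by $P$ into $R_i$ are those of the turnstile lobe $L_{j,i}(1)$, so $P^{k-1}x\in L_{j,i}(1)$ and hence $x\in P^{-(k-1)}L_{j,i}(1)=L_{j,i}(k)$. Then $x\in L_{i,j}(0)\cap L_{j,i}(k)\cap L_{i,j}(m)$ is one of the sets in the union, and applying $P^{-k_1}$ returns the general statement.

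The step I expect to carry the weight is the identification $\{z\in R_j:Pz\in R_i\}=L_{j,i}(1)$. This is not a new computation but a use of how the turnstile is built at the chosen pip: the stable and unstable segments bounding $R_j$ are invariant, hence cannot be crossed, so a trajectory can pass from $R_j$ to $R_i$ only through the lobe pair straddling that pip, and of that pair it is exactly $L_{j,i}(1)$ that lies in $R_j$ and is mapped into $R_i$; here one uses that the tangle involves no invariant manifolds other than those of $R_i$ and $R_j$, so that there is a single turnstile. I would also note two routine points: the argument requires neither region to be bounded, and, as throughout this section, the equality is understood modulo the measure-zero set of points lying on the manifolds themselves.
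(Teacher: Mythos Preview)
Your proof is correct and follows essentially the same route as the paper's: both argue that a point $p\in L_{i,j}(k_1)\cap L_{i,j}(k_3)$ has $P^{k_1}p\in R_j$ and $P^{k_3-1}p\in R_i$ by Assumption~\ref{assum:lobe}, so at some intermediate step the orbit must pass through the turnstile lobe $L_{j,i}$, giving $p\in L_{j,i}(k_2)$ for some $k_1<k_2<k_3$. Your version is simply more detailed --- normalising to $k_1=0$, spelling out the trivial inclusion, handling the degenerate case $m=1$, and justifying the turnstile identification $\{z\in R_j:Pz\in R_i\}=L_{j,i}(1)$ --- whereas the paper compresses all of this into two sentences.
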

  
  \begin{proof}
  Let $p\in L_{i,j}(k_1)\cap L_{i,j}(k_3)$, $P^{k_1}p\in R_j$ and $P^{k_3-1}p\in R_i$ follow from Assumption \ref{assum:lobe}. Necessarily there exists $k_2$, such that $k_1<k_2<k_3$ and $p\in L_{j,i}(k_2)$. Since $k_2$ may be different for every $p$, the union over $k_2$ follows.
  \end{proof}
  
  The argument can be easily generalised for tangles that govern transport between multiple regions. One only needs to observe that $p$ can return to $R_i$ from any region.

  In the $F_1$ tangle at $0.02215$, reentries can be deduced from the intersection $L_{0,1}(1)\cap L_{1,0}(0)$ that lies completely in $R_0$. See Figure \ref{fig:tangle intersection} for comparison of a tangle at $0.02215$ with reentries and at $0.02210$ without. Note that both tangles have $k_{srt}=1$.
  \begin{figure}
  \centering
  \includegraphics{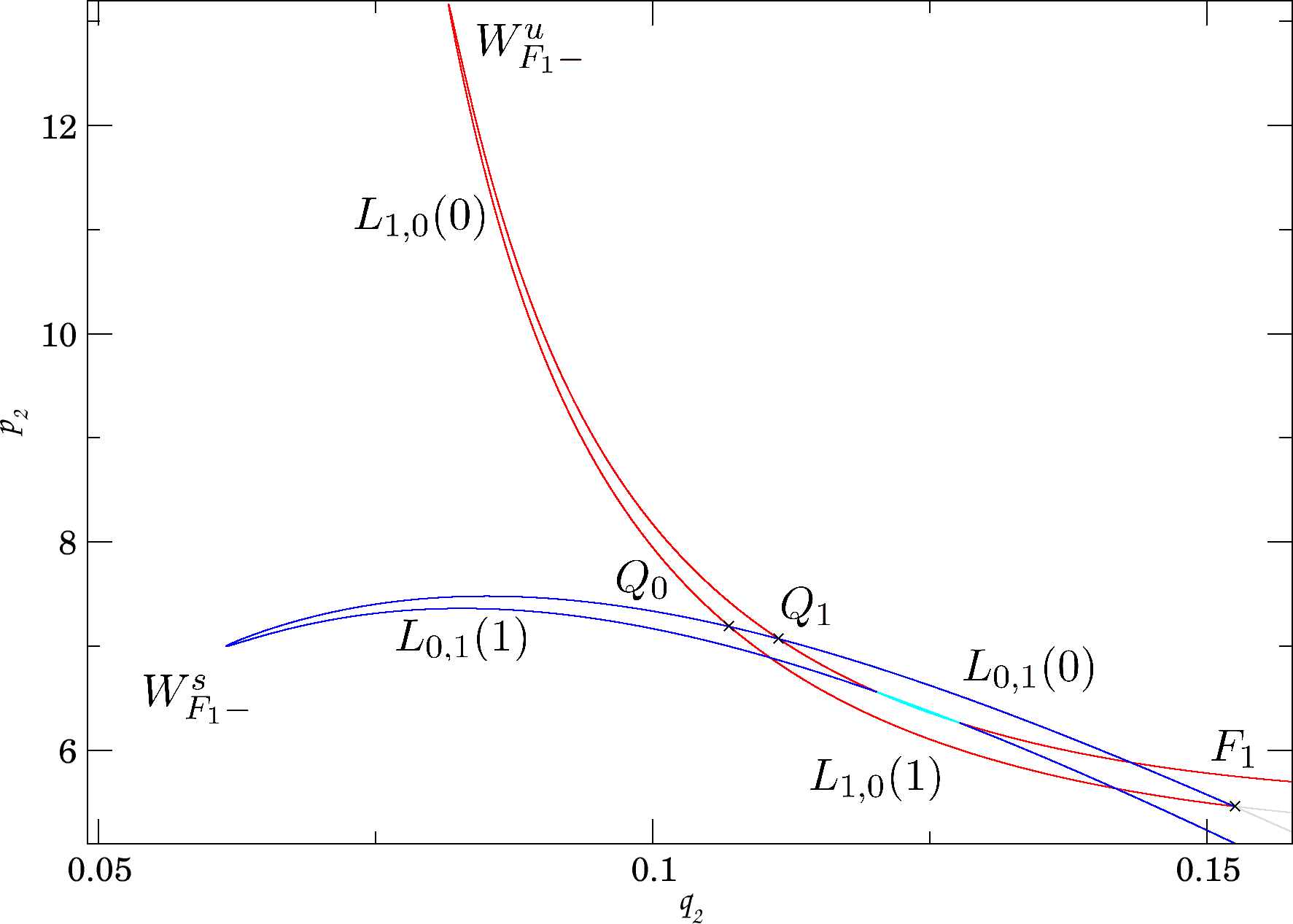}\\
  \includegraphics{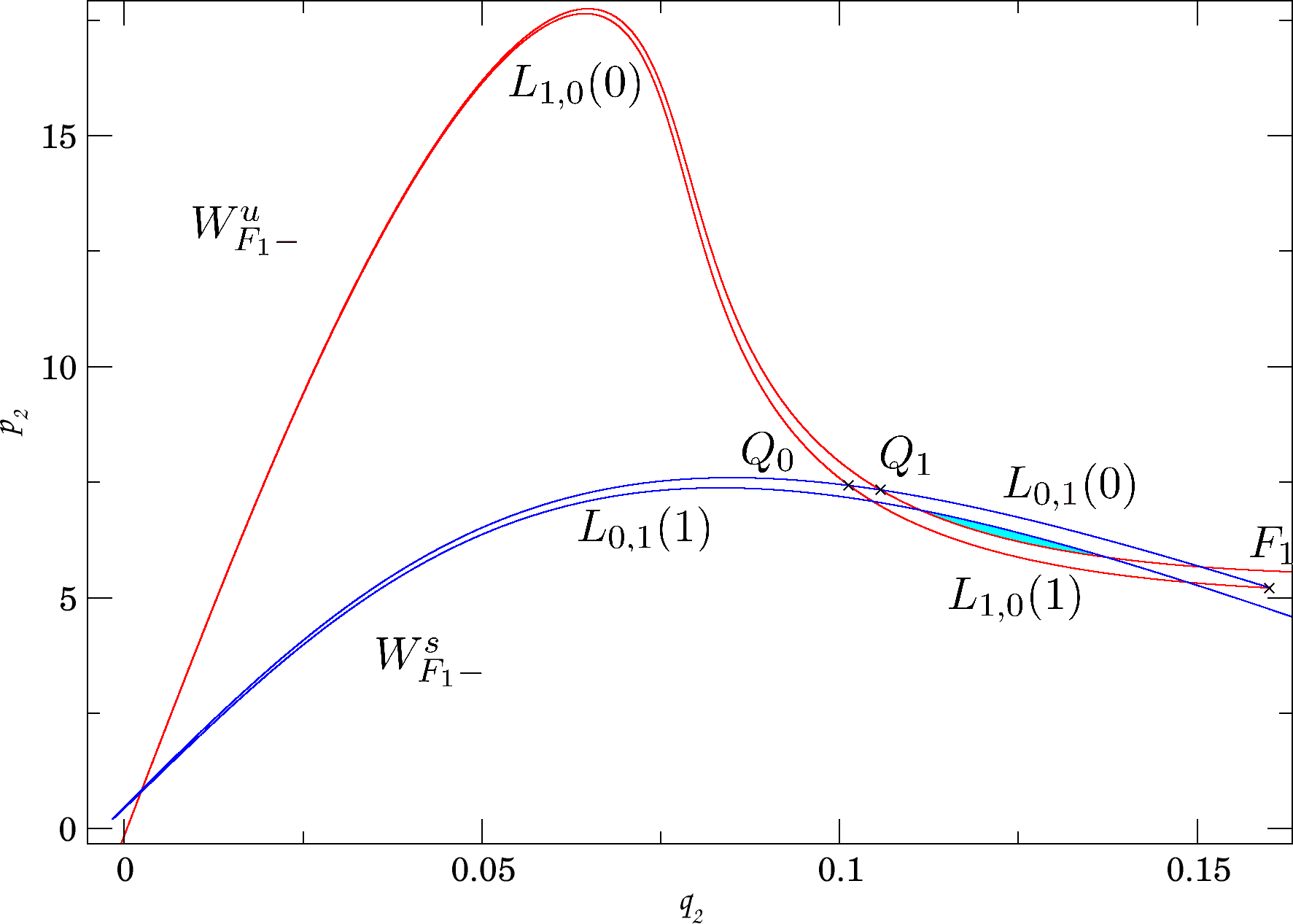}
  \caption{The $F_1$ tangle at $0.02210$ (above) and at $0.02215$ (below). Both homoclinic tangles have $k_{srt}=1$, that can be seen by $L_{0,1}(0)\cap L_{1,0}(1)\neq\emptyset$ shown in cyan. At $0.02215$ the tangle admits reentries.}\label{fig:tangle intersection}
  \end{figure}
  
  Instantaneous transport between regions is described by the turnstile mechanism. Transport on a larger time scale can be studied using a measureless and weightless entity (species, passive scalars or contaminants \cite{Sreenivasan91}, \cite{Sreenivasan97}) that is initially contained and uniformly distributed in a region, as done in \cite{Rom-Kedar90}. Its role is to retain information about the initial state without influencing dynamics indicate escapes and reentries via lobes.
  
  The challenge of studying lobes over large timescales is to determine which regions a lobe lies in and correctly identifying the interior of a lobe. For this we propose a partitioning of heteroclinic tangles into regions of no return outside of which the evolution of lobes is of no interest.
  
 \section{Influence of tangles on the reaction rate}\label{sec:tangles reaction}
 In this section we discuss the evolution of homoclinic and heteroclinic tangles in the entire energy interval $0<E\leq0.03000$ and their influence on dynamics in the interaction region. The dynamics for higher energies is due to the lack of bifurcations analogous. The study of invariant manifolds employs lobe dynamics and a new partitioning based on dynamical properties. An in-depth review of invariant manifolds in a chemical system and structural changes in tangles caused by bifurcations has to our knowledge not been done before.
 
 \subsection{Energy interval where TST is exact}
 TST is exact in the presence of a single TS (due to \cite{PechukasPollak79TST}) and remains exact in case of multiple TSs provided their invariant manifolds do not intersect (due to \cite{Davis87}). Therefore results of TST and Monte Carlo agree on the interval from $0$ to $0.02215$. $W_{F_0}$ separate reactive and nonreactive trajectories, see Sec. \ref{subsec:barrier}, while the $F_1$ tangle captures nonreactive trajectories only.
  
 \begin{figure}
 \centering
     \includegraphics[width=\textwidth]{2206_all.png}\\
     \includegraphics[width=\textwidth]{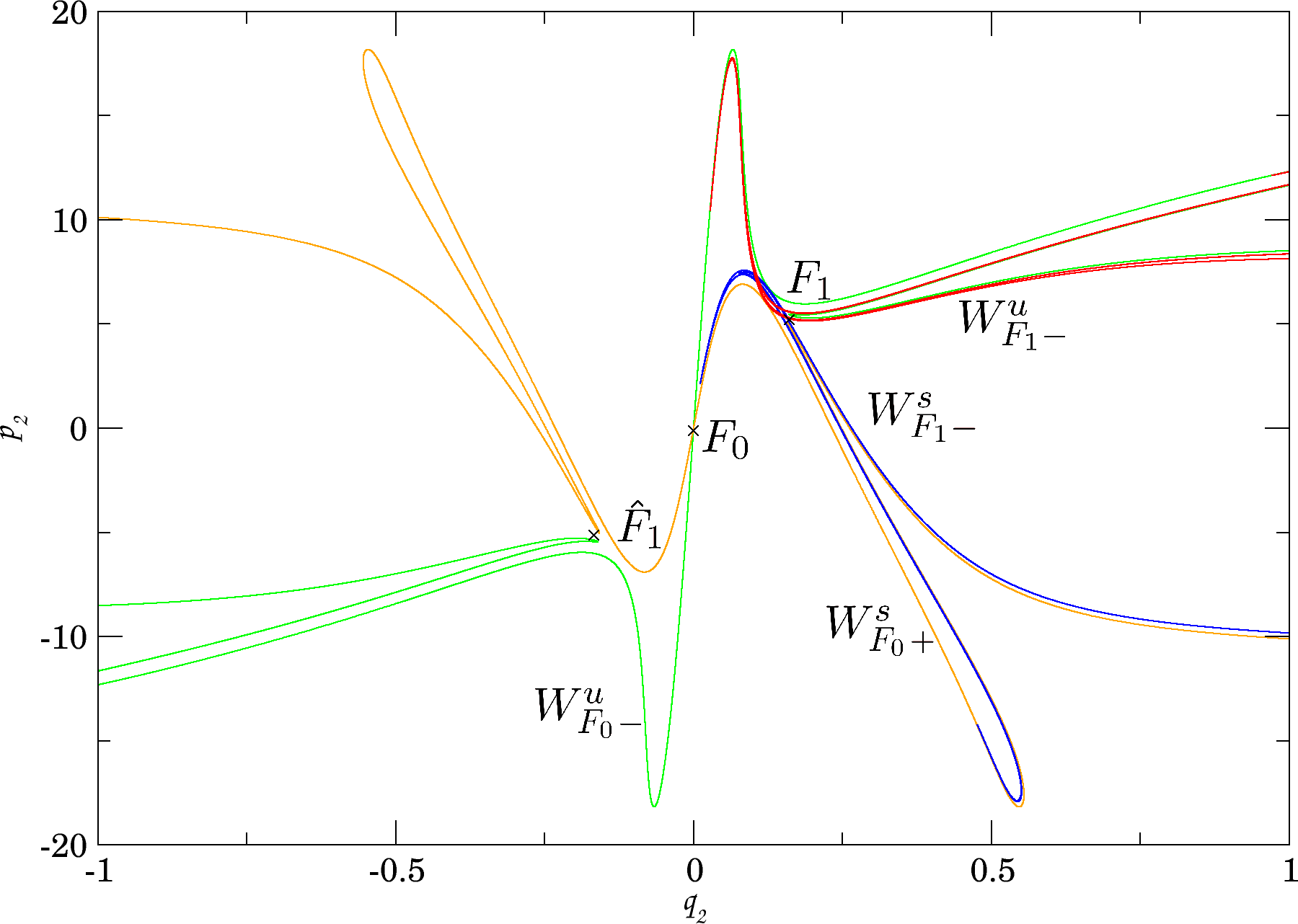}
         \caption{Invariant manifolds at $0.02206$ and $0.02214$.}\label{fig:2206 and 2214}
 \end{figure}
 
 \begin{figure}
 \centering
     \includegraphics[width=\textwidth]{2206_detail.png}\\
     \includegraphics[width=\textwidth]{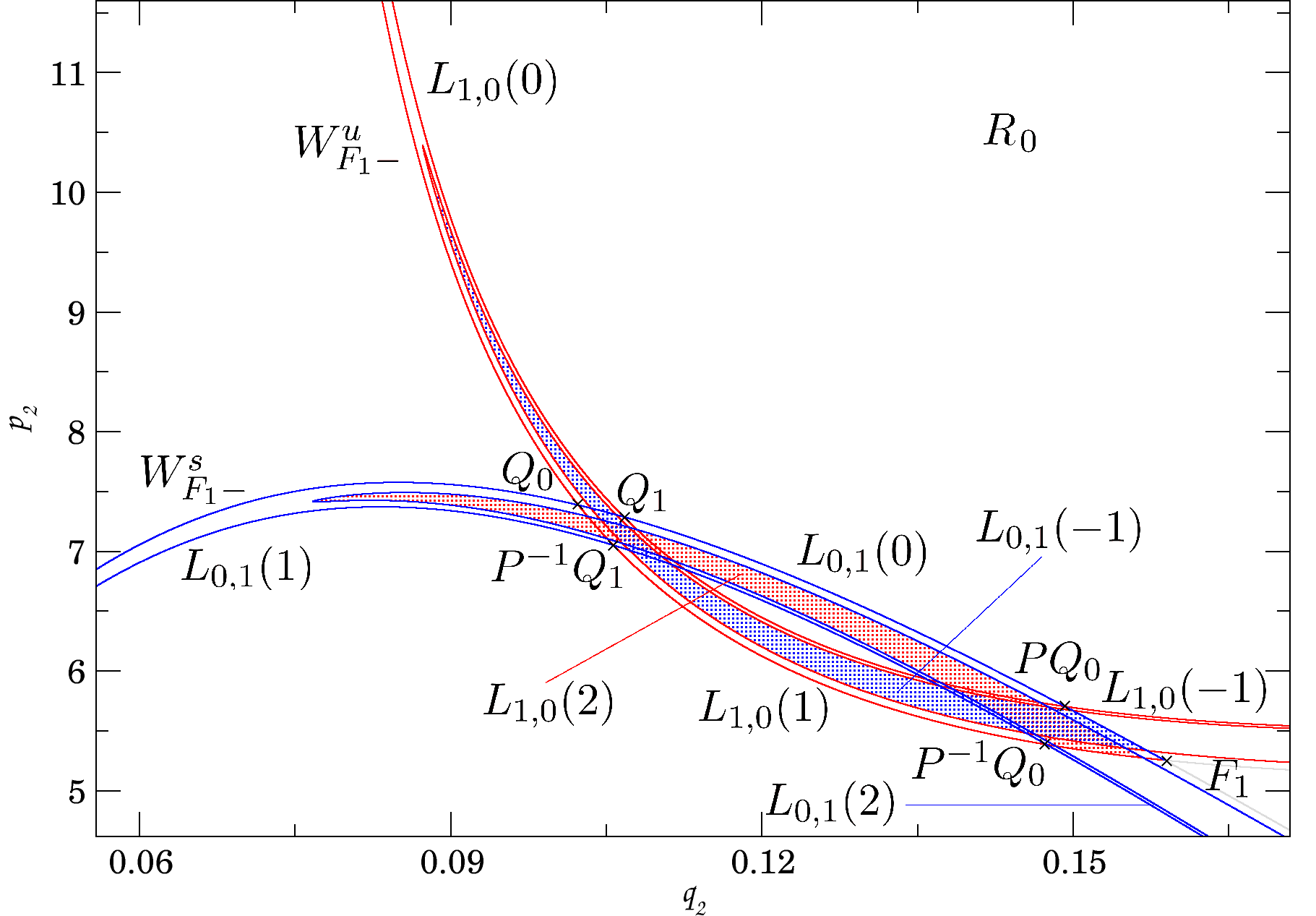}
         \caption{Lobe structure of the $F_1$ tangle at $0.02206$ and $0.02214$.}\label{fig:homoclinic 2206 and 2214}
 \end{figure}
 
 Some properties of the $F_1$ tangle are carried over to higher energies, such as shape of lobes or $k_{srt}$. Fig. \ref{fig:2206 and 2214} shows $W_{F_0}$ and $W_{F_1}$ approaching prior to the intersection at $0.02215$ and the failure of TST.
 
 Each change of structure seems to coincide with a bifurcation of a periodic orbit. The decrease $k_{srt}$ from $3$ to $1$ over the energy interval, shown in Fig. \ref{fig:homoclinic 2206 and 2214}, coincides with the period doubling of $F_2$ at $0.02208$ and the period doubling of $F_{21}$ before $0.02209$. From a quantitative perspective, the tangle and its lobes grow larger in area.

 \subsection{Point where TST fails}
 At $0.02215$, $W_{F_0}$ and $W_{F_1}$ interact through heteroclinic intersections. Instead of minor changes in the overall topology of the invariant manifolds, we come across something that is better described as a chain reaction.
 
 Firstly, we observe that $W_{F_0+}$ and $W_{F_1-}$ intersect forming a heteroclinic tangle, see Fig. \ref{fig:2230 R0}. Consequently, TST starts to fail (see \cite{Davis87}) and the Monte Carlo reaction rate is lower than TST. $W^s_{F_0}$ and $W^u_{F_0}$ form a partial barrier and this enables the $F_1$ tangle to capture reactive trajectories. We also find heteroclinic intersections of $W_{F_1-}$ and $W_{\widehat{F}_1+}$ as shown in Fig. \ref{fig:2230 F1F1}, as well as $W_{F_1-}$ and $W_{F_0-}$. Recall that statements for $F_1$ also hold for $\widehat{F}_1$.
 
 \begin{figure}
    \centering
     \includegraphics{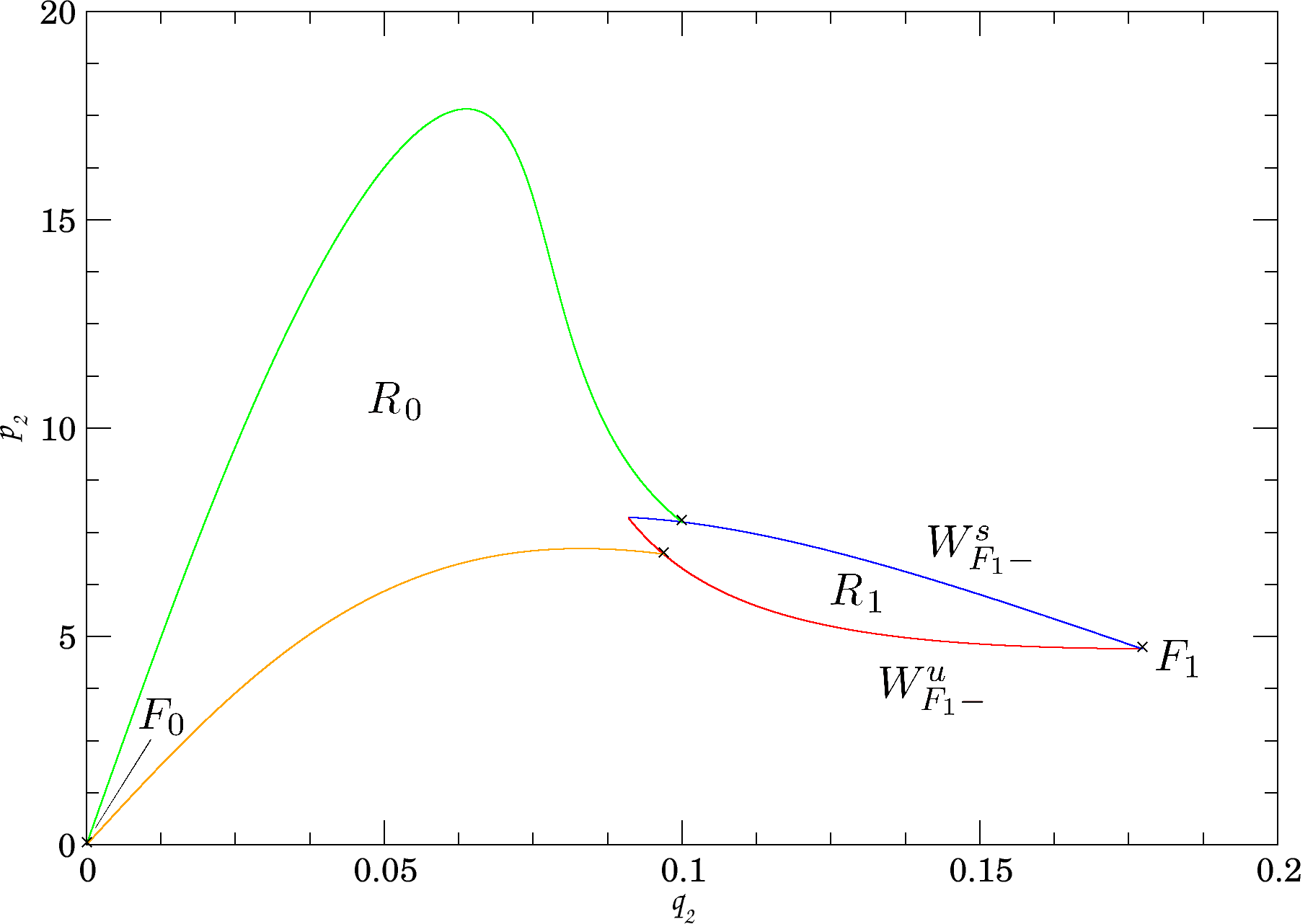}
         \caption{The regions $R_0$ and $R_1$ at $0.02230$.}
    \label{fig:2230 R0}
 \end{figure}
  
 Choose two pips in the $F_0$-$F_1$ tangle, so that the region bounded by $W_{F_0+}$ and $W_{F_1-}$ denoted $R_0$ satisfies $R_1\subset R_0$ (Fig. \ref{fig:2230 R0}) and define $\widehat{R}_0$ using symmetry. 
  
 As $L_{0,1}(0)$ and $L_{1,0}(1)$ in the $F_1$ tangle contain heteroclinic points that converge towards $F_0$ (forward or backward time), they necessarily intersect in $R_0$ (see Fig. \ref{fig:tangle intersection}). By definition, $L_{0,1}(0)\cap L_{1,0}(1)$ contains trajectories that reenter $R_1$ after they have escaped and consequently $R_1$ (and $\widehat{R}_1$) loses its no-return property.
 In particular, trajectories that periodically reenter $R_1$ may exist and if they do, they will be located in $L_{0,1}(0)\cap L_{1,0}(\tilde{k}) \cap \dots$ for some $\tilde{k}$.
 
 \begin{figure}
    \centering
     \includegraphics[width=\textwidth]{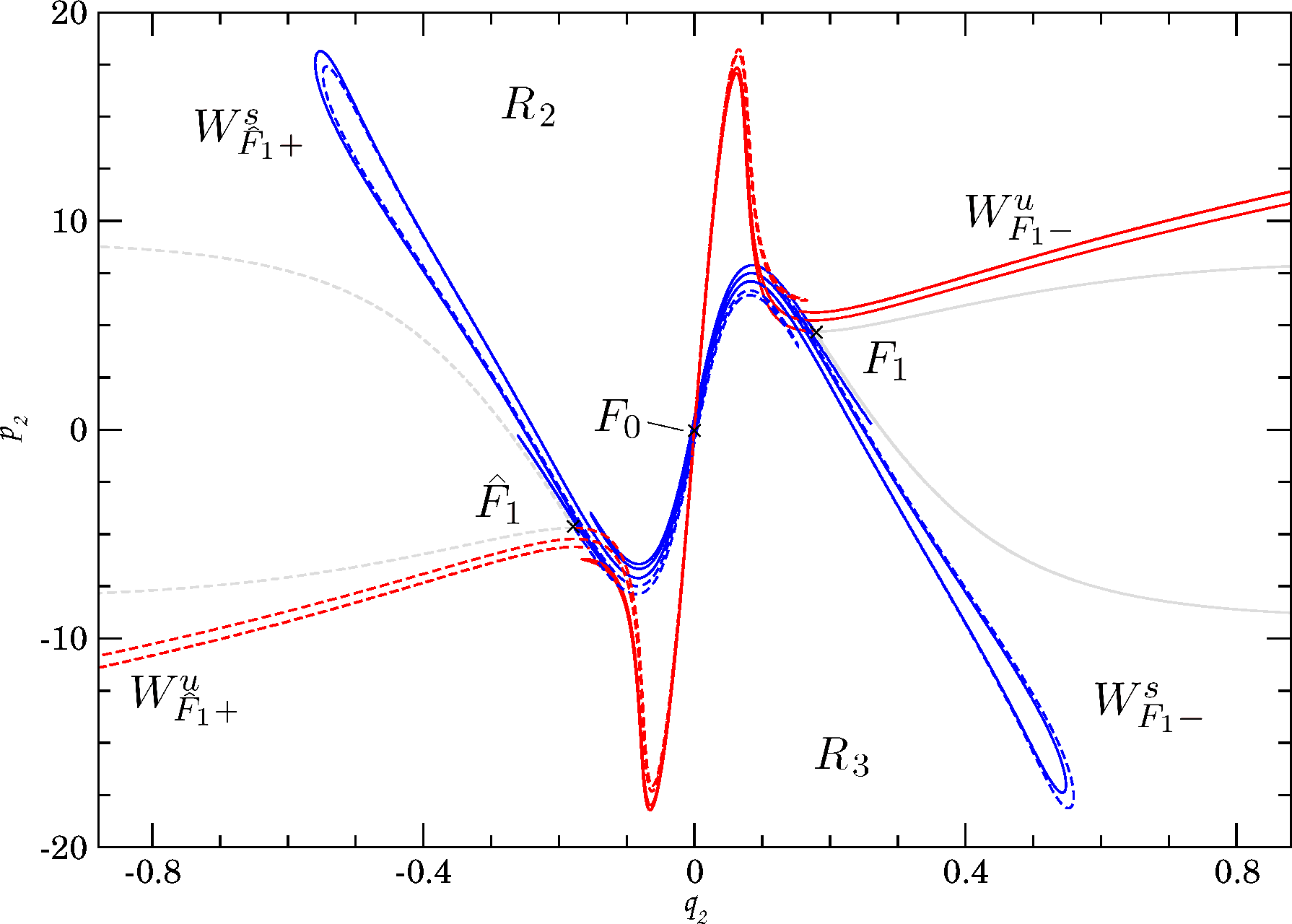}
         \caption{The $F_1$-$\widehat{F}_1$ tangle at $0.02230$, $W_{F_1}$ are shown as solid lines, $W_{\widehat{F}_1}$ as dashed.}
     \label{fig:2230 F1F1}
 \end{figure}
 
 By symmetry $L_{\hat{0},\hat{1}}(0)$ and $L_{\hat{0},\hat{1}}(1)$ also contain heteroclinic points that converge towards $F_0$ and they cannot avoid intersecting $L_{1,0}(1)$ and $L_{0,1}(0)$ respectively. Figure \ref{fig:2230 F1F1} portraits the intersecting invariant manifolds. These intersections guide trajectories that may cross DS$_0$ multiple times and result in an overestimation of the reaction rate by TST. Due to the size of the lobe intersections, the overestimation is small but increases with energy. VTST suffers from recrossings too as it estimates the rate using the DS with lowest flux, but none of the DSs is recrossing-free.
 
 Due to a high $k_{srt}$ and small area of lobes, we avoid details of the $F_1$-$\widehat{F}_1$ tangle until higher energies. We remark that lobes in the $F_1$-$\widehat{F}_1$ tangle do not intersect outside of the bounded region.

 \subsection{Definitions of important regions}
 We have established that TST fails at $0.02215$ due to recrossings. In this section we give a detailed description of homoclinic and heteroclinic tangles at $0.02230$ and explain the transport mechanism in these tangles using lobes. The energy $0.02230$ is representative for the interval between TST failure at $0.02215$ and one of several period doubling bifurcations of $F_{21}$ at $0.02232$. Moreover, lobes at $0.02230$ are sufficiently large to study.
 
 For the sake of simple notation, in what follows $Q_0$, $Q_1$, $Q_2$ and $Q_3$ denote pips that differ from tangle to tangle. To avoid confusion, we always clearly state which tangle is discussed.

 First we discuss the homoclinic tangles of $F_0$, $F_1$ and $\widehat{F}_1$ at $0.02230$. We define regions relevant to these homoclinic tangles shown in Figure \ref{fig:2230 regions} as follows.
 
 Denote $R_0$, the region bounded by $W_{F_0+}$ and $W_{F_1-}$. The $F_0$-$F_1$ tangle is responsible for most of the complicated evolution of reactive trajectories at $0.02230$. The regions above and below the $F_0$-$F_1$ tangle are $R_2$ and $R_3$ respectively.
 
 The region inside the $F_1$ tangle bounded by $W_{F_1-}$ is denoted $R_1$. Further we denote $R_4$ the region bounded by $W_{F_0+}$ that is relevant for the $F_0$ tangle. A near-intersection of $W_{F_0+}$ in $R_1$ suggests that $R_4$ is smaller after the period doubling bifurcation of $F_{21}$ at $0.02232$.
  
 \begin{figure}
  \centering
     \includegraphics[width=\textwidth]{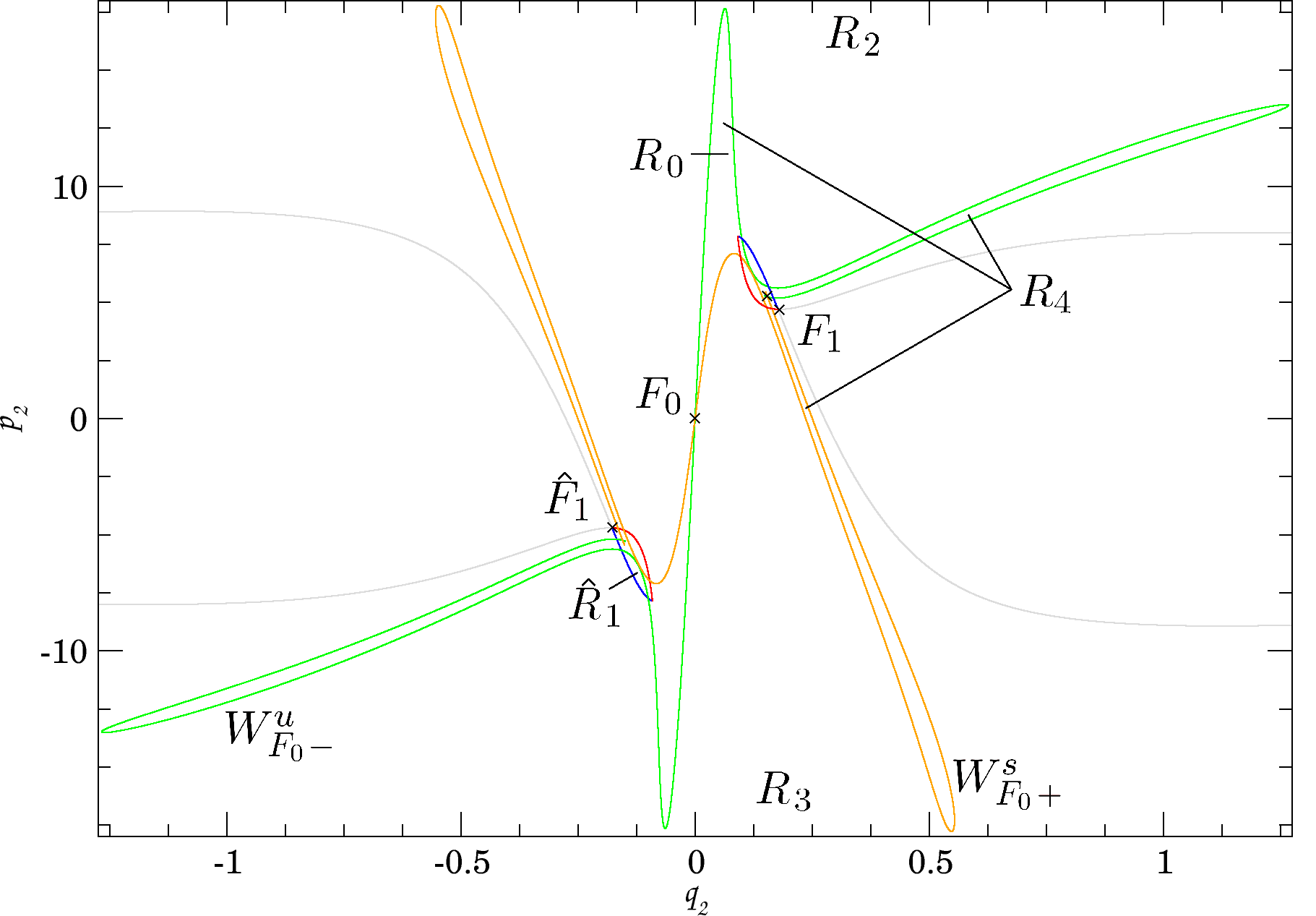}
         \caption{Various region at $0.02230$.}\label{fig:2230 regions}
 \end{figure}
  
 \subsection{Homoclinic tangles}\label{subsec:homoclinic}
 First we concentrate on the $F_0$ tangle at $0.02230$, followed by the $F_1$ tangle, both depicted in Fig. \ref{fig:homoclinic labeled}. In both it is possible to identify a number of lobes that explain the dynamics within.
 
 The $F_0$ tangle govern transport from $R_3$ to $R_4$ and from $R_4$ to $R_2$. The lobes in this tangle consist of two disjoint parts. $L_{3,4}(0)$, for example, is bounded by $S[Q_1,Q_0]\cup U[Q_0,Q_1]$ and $S[Q_3,Q_2]\cup U[Q_2,Q_3]$. Note that $L_{4,2}(1)$ and $L_{3,4}(1)$ intersect near $Q_0$ and recall that $L_{4,2}(1) \cap L_{3,4}(1)$ does not leave $R_4$. $L_{3,4}(0) \cap L_{4,2}(1)$ near $Q_3$ implies $k_{srt}=1$.
 
 By far the largest intersection in the $F_0$ tangle is $L_{3,4}(-1) \cap L_{4,2}(2)$. It comprises most of the white area in $R_4$ occupied by nonreactive trajectories and we can deduce the structure of the intersection from $L_{3,4}(0)$ and $L_{4,2}(1)$ as follows. As an image of $L_{3,4}(0)$, the larger part of $L_{3,4}(-1)$ is bounded by $S[PQ_1,PQ_0]\cup U[PQ_0,PQ_1]$ with pips indicated in Fig. \ref{fig:homoclinic labeled}. This is nearly a third of the entire region $R_4$.
 Similarly the larger part of $L_{4,2}(1)$ is bounded by $S[Q_0,P^{-1}Q_3]\cup U[P^{-1}Q_3,Q_0]$. Its preimage, the larger part of $L_{4,2}(2)$, is bounded by $S[P^{-1}Q_0,P^{-2}Q_3]\cup U[P^{-2}Q_3,P^{-1}Q_0]$. Thanks to pips we are able to deduce that the majority of trajectories in the $F_0$ tangle is due to the intersection of these two lobes.
 
 Note that part of an escape lobe extends to the product side of $F_0$ and contains reactive trajectories. This part of the lobe enters $R_4$ via $L_{3,4}(1)$, most of which is mapped to $L_{3,4}(0)\cap L_{4,2}(2)$ and escapes into $R_2$ via $L_{4,2}(1)$. Using an analogous argument we find that the part of a capture lobe lies on the product side of $F_0$ and carries reactive trajectories that escaped from $R_4$.
 
 \begin{figure}
  \centering
     \includegraphics[width=\textwidth]{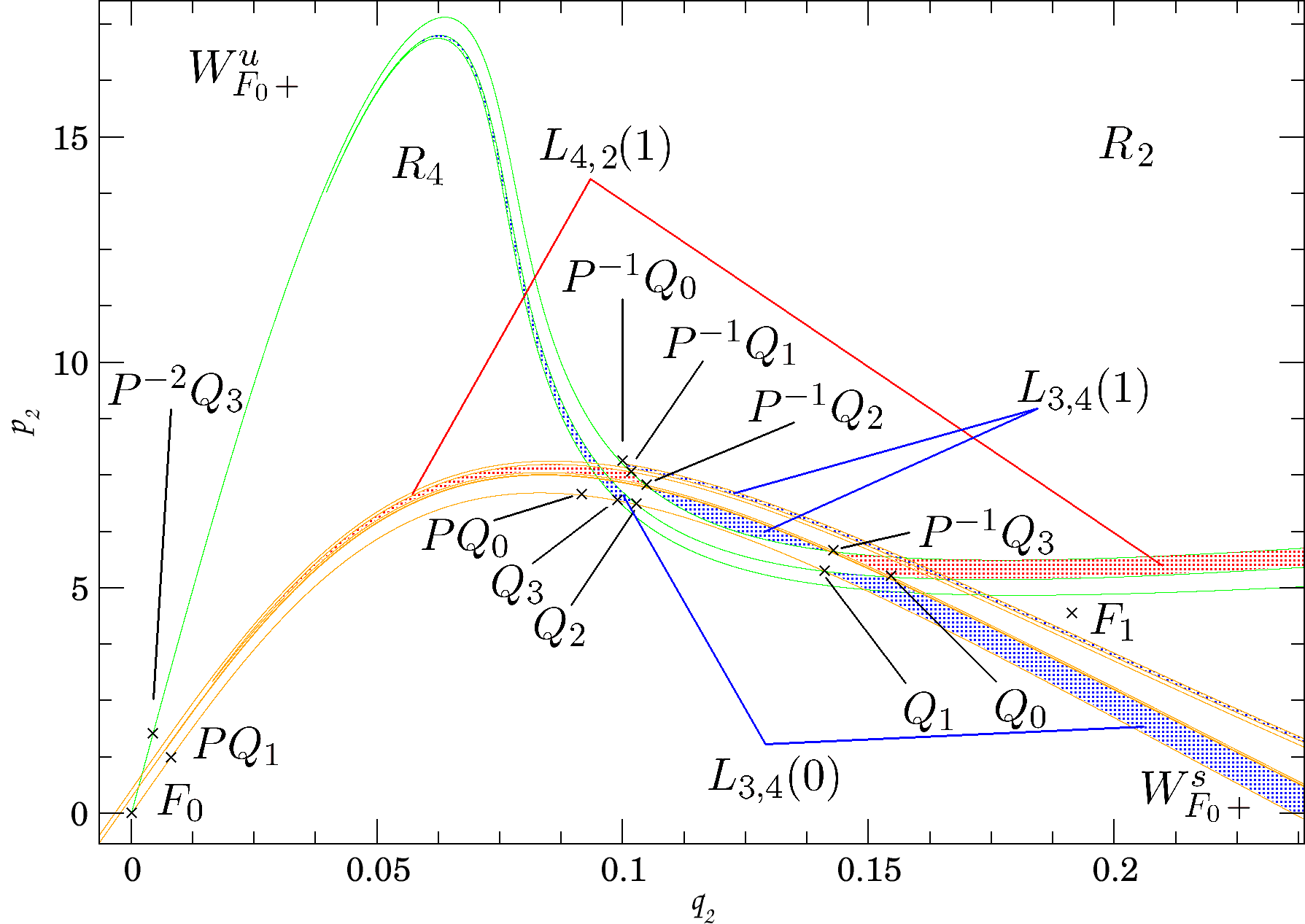}\\
     \includegraphics[width=\textwidth]{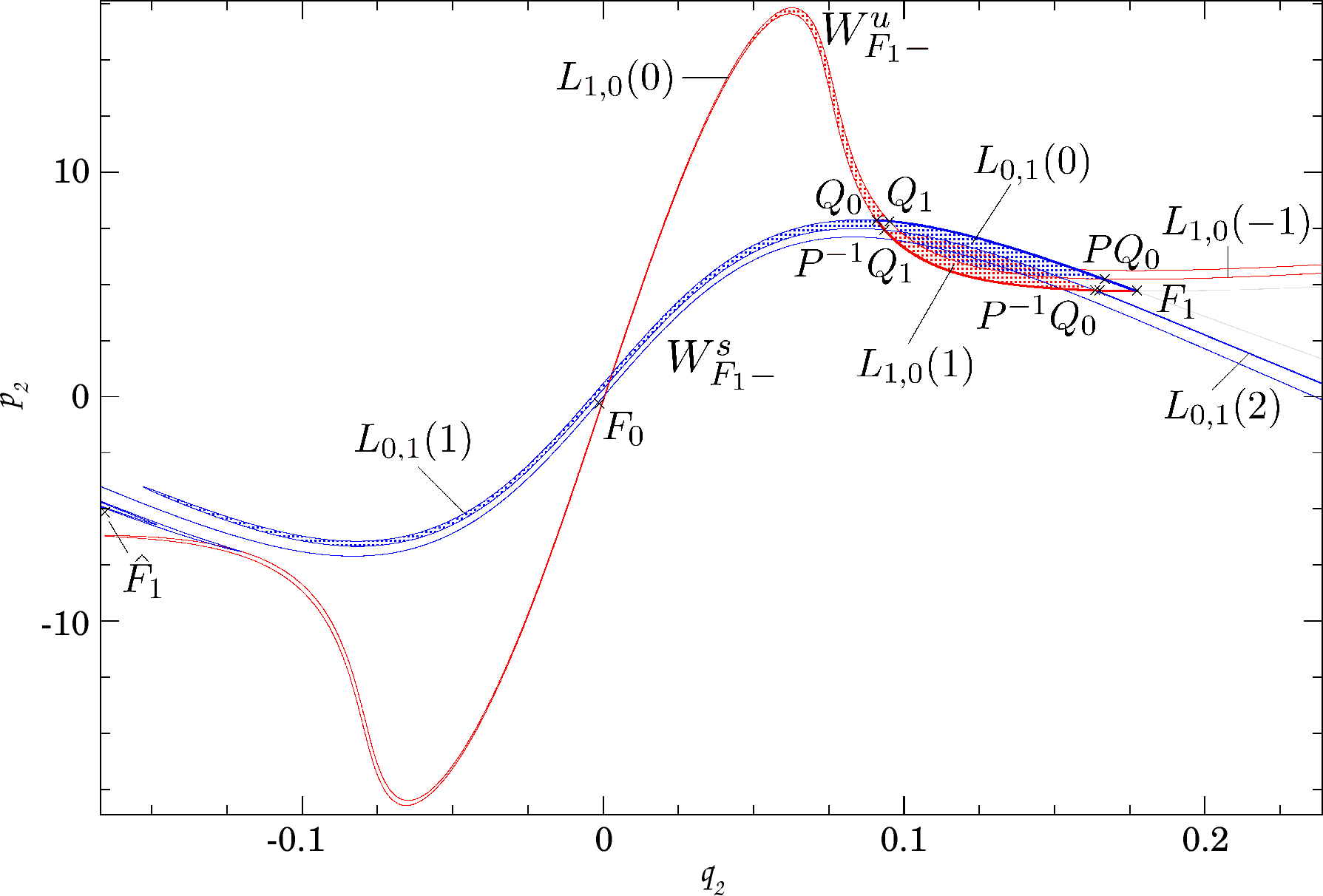}
         \caption{Homoclinic tangles associated with $F_0$ and $F_1$ respectively at $0.02230$.}\label{fig:homoclinic labeled}
 \end{figure}
  
 The $F_1$ tangle has only one pip between $Q_0$ and $PQ_0$ and therefore a simpler structure. $L_{0,1}(0) \cap L_{1,0}(1)$ implies $k_{srt}=1$, therefore trajectories pass through this tangle quickly. Most nonreactive trajectories of the $F_0$ tangle pass inbetween $L_{1,0}(0)$ and $L_{0,1}(1)$ and avoid the $F_1$ tangle. This follows from its adjacency to $Q_0$, which is only mapped along the boundary of $R_1$ always on the reactant side of $F_0$. Similarly we can follow the area between $L_{1,0}(0)$ and $L_{0,1}(2)$ on the product side of $F_0$ using the $\widehat{F}_1$ tangle and symmetry.
 
 The considerable size of lobes on the product side of $F_0$ carries information about nonreactive trajectories. The part of $L_{0,1}(1)$ on the product side of $F_0$ enters $R_1$ via the upper part of $L_{0,1}(0)$, just above the indicated intersection with $L_{1,0}(-1)$. Since this area does not lie in $L_{1,0}(1)$, it is has to be mapped to $L_{0,1}(-1)\setminus L_{1,0}(0)$ that remains in $R_1$ and is defined by the pips $PQ_1$ and $P^2Q_0$ located on $S[F_1,PQ_0]$. Further this area will be mapped in $L_{1,0}(1)\setminus L_{0,1}(0)$ and, unlike the part of $L_{1,0}(1)$ bordering $S[P^{-1}Q_1,P^{-1}Q_0]$, back into products.
 
 In contrast, we can follow the part of $L_{0,1}(2)$ near its boundary $U[P^{-1}Q_0,P^{-2}Q_1]$ in reactants being mapped to $L_{0,1}(1)$ near its boundary $U[Q_0,P^{-1}Q_1]$ and via $L_{0,1}(0)$ near its boundary $U[PQ_0,Q_1]$ into products.
  
 As energy increases, we observe that the nonreactive mechanism of the $F_0$ tangle grows slower than the nonreactive mechanism in the $F_1$ tangle or even shrinks. The later involves crossing the axis $q_2=0$, which on $\Sigma_0$ coincides DS$_0$. Due to symmetry the same happens in the $\widehat{F}_1$ tangle. Therefore the flux across DS$_0$ grows twice as quickly as across DS$_1$. Therefore eventually DS$_1$ becomes the surface of minimal flux.
 
 \subsection{Heteroclinic tangles}\label{subsec:heteroclinic}
 Heteroclinic tangles partially share shapes, lobes and boundaries with homoclinic tangles and their description of transport must agree. Recall heteroclinic tangles have two turnstiles and two sets of escape and capture lobes.
  
 \begin{figure}
  \centering
     \includegraphics[width=\textwidth]{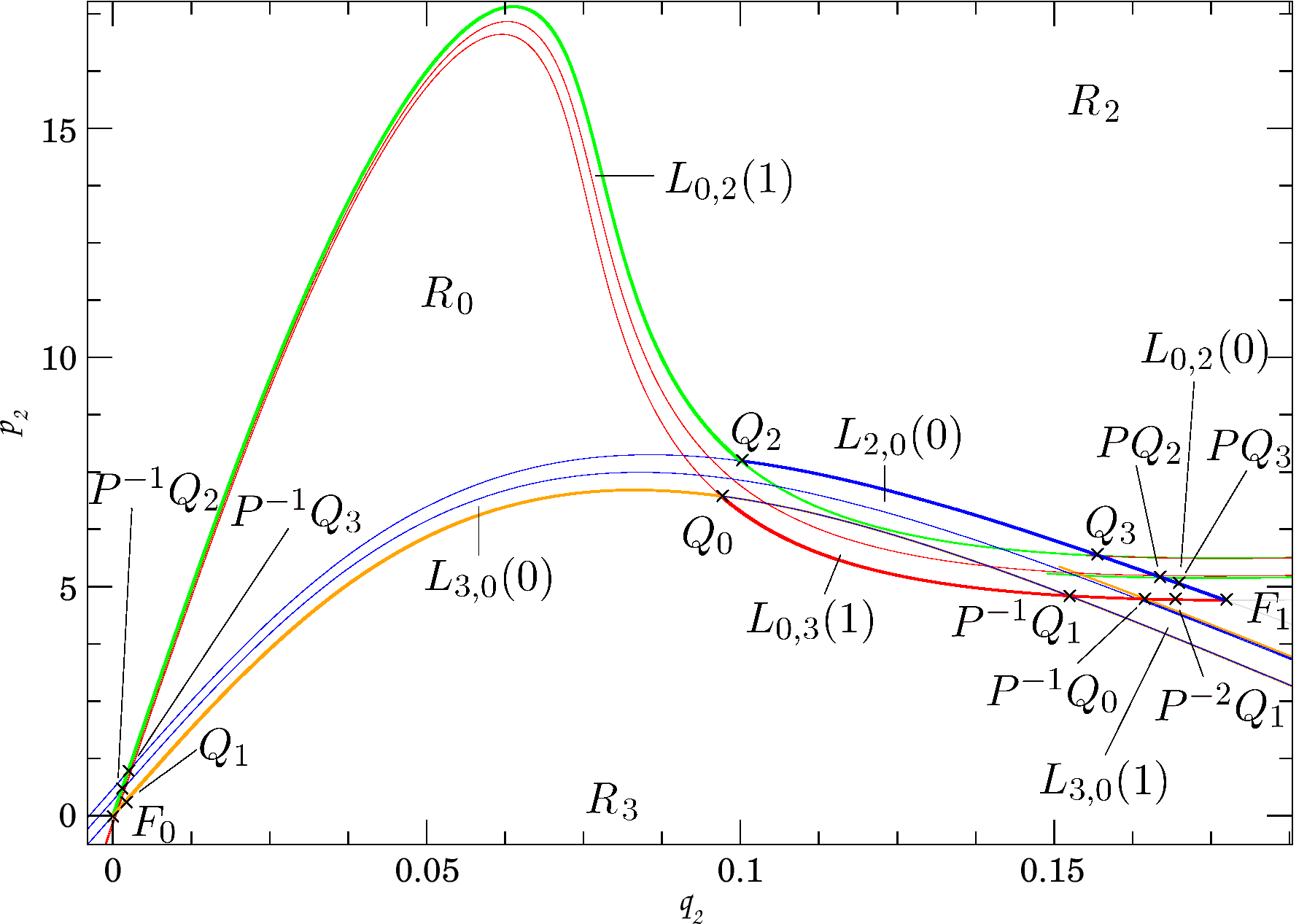}\\
     \includegraphics[width=\textwidth]{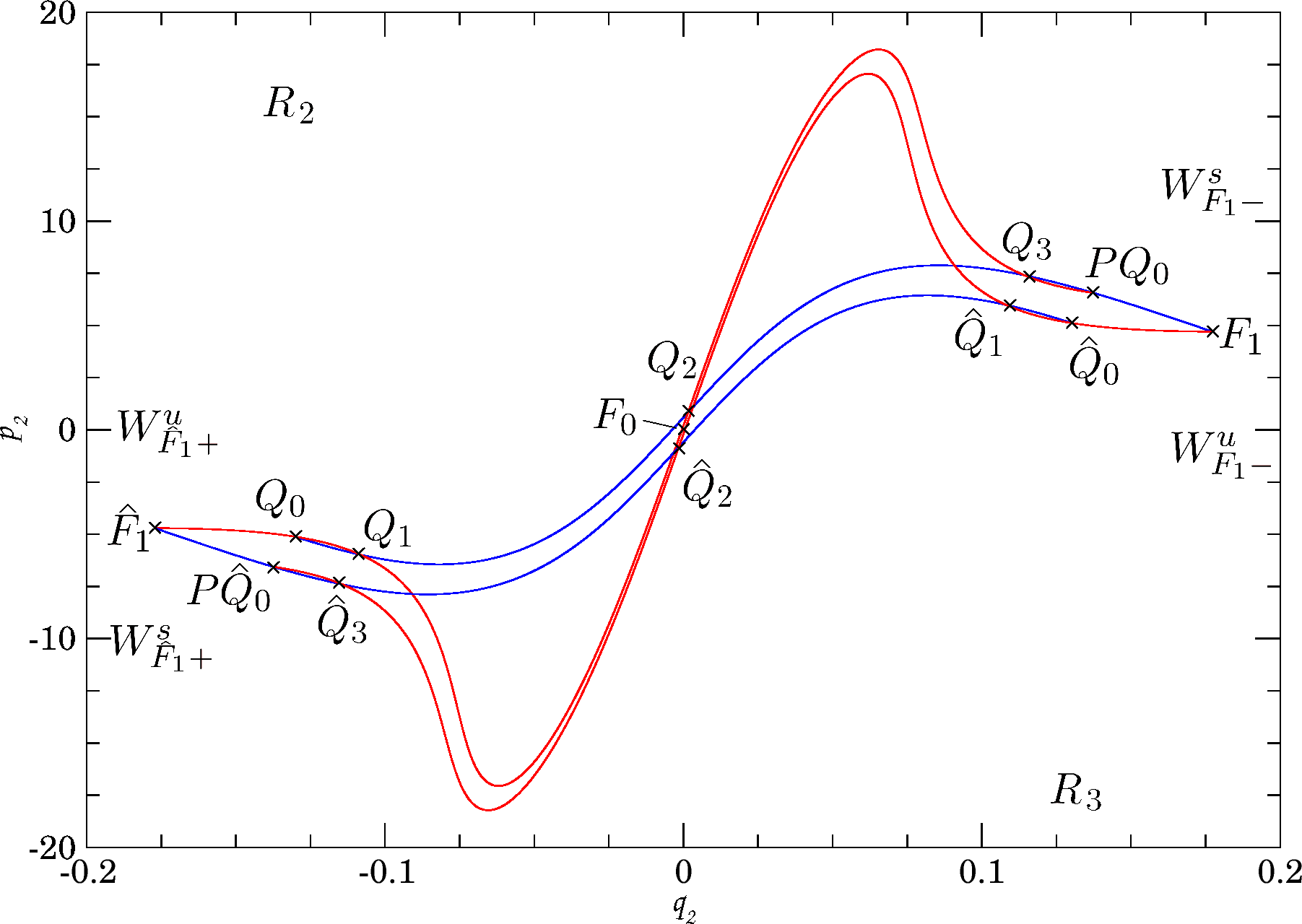}
         \caption{The $F_0$-$F_1$ tangle and the outline of $F_1$-$\widehat{F}_1$ tangle at $0.02230$.}\label{fig:heteroclinic labeled}
 \end{figure}
 
 For the sake of simplicity, we rely on pips and prior knowledge from Sec. \ref{subsec:homoclinic} to interpret Fig. \ref{fig:heteroclinic labeled}. Define $R_0$ in the $F_0$-$F_1$ tangle using $W_{F_0+}$ and $W_{F_1-}$ and the pips $Q_0$ and $Q_2$. A single pip is located on $\partial R_0$ between $Q_0$ and its image, the same is true for $Q_2$.
 
 $L_{3,0}(0)$ bounded by $S[Q_1,Q_0]\cup U[Q_0,Q_1]$ is significantly larger than $L_{0,3}(1)$ bounded by $S[Q_0,P^{-1}Q_1]\cup U[P^{-1}Q_1,Q_0]$. Similarly $L_{0,2}(1)$ is larger than $L_{2,0}(0)$. Also note that $L_{3,0}(0)\cap L_{0,2}(1)$ takes up most of $R_0$. Hence most of $R_0$ originates in $R_3$ and escapes into $R_2$ after $1$ iteration. The trajectories contained therein are nonreactive.
 
 It is worth mentioning that the lobes governing transport from $R_2$ to $R_3$, $L_{0,3}(1)$ and $L_{2,0}(0)$, are disjoint. Nonreactive trajectories originating in $R_2$ spend some time in $R_0$. This agrees with our conclusions on the nonreactive mechanism in the $F_1$ tangle.
  
 The reactive mechanism in the $F_0$-$F_1$ tangle involves the capture lobe $L_{3,0}(1)$ part of which is mapped to $L_{3,0}(0)\setminus L_{0,2}(1)$ and on to $L_{3,0}(1)\cap R_0$, part of which lies in $L_{0,3}(1)$. The area of this intersection is small in $R_0$.
  
 Understanding the $F_1$-$\widehat{F}_1$ tangle is very involved, as the boundary of the tangle requires several segments of $W_{F_1-}$ and $W_{\widehat{F}_1+}$. We propose a different point of view. In all tangles above, we have found that escape from the bounded region in a tangle, all area above the uppermost and below the lowermost stable invariant manifold escapes without further delay. For example in the $F_0$-$F_1$ tangle, $L_{0,2}(1)$ located above $W^s_{F_1-}$ and $L_{0,3}(1)$ located below $W^s_{F_0+}$ escape to reactants and products respectively, because as the stable manifold bounding the lobe contracts, the unstable manifold is unobstructed to leave the interaction region. In this sense that we propose only stable invariant manifolds to be considered a barrier in forward time.
 
 Using this reasoning, concentrate on the area between $S[\widehat{F}_1,\widehat{Q}_0]$ and $S[F_1,Q_0]$ in the $F_1$-$\widehat{F}_1$ tangle. Everything above $S[Q_3,Q_2]$ and below $S[\widehat{Q}_3,\widehat{Q}_2]$ may pass through the tangle, but evolves in a regular and predictable manner from $R_3$ to $R_2$ or vice versa. We remark that this area is the intersection of two turnstiles. The same argument applies to the areas above $S[Q_1,Q_0]$ and below $S[\widehat{Q}_1,\widehat{Q}_0]$. Complicated dynamics is restricted to $R_1$, as defined in the $F_1$ tangle, $\widehat{R}_1$ and an island near $F_0$ and should be treated separately from predictable areas.
 
 Using this line of thought enables us to formulate bounds and estimates of the reaction rate. Before we proceed to quantitative results, we conclude this section by describing the evolution of tangles with increasing energy.
 
 \subsection{Higher energies}\label{subsec:higher energies}
 The based on the analysis in Sections \ref{subsec:homoclinic} and \ref{subsec:heteroclinic} for tangles at $0.02230$, here we discuss on the evolution of tangles at higher energies and their impact on dynamics in the interaction region. As the mechanisms have been described, most of our comments concern sizes of lobes and duration of escape from a tangle.
 
 An interesting question arises from the connection between bifurcations and changes in geometry of invariant structures. The causal relationship is not evident. Also bifurcations are mostly thought of as local events. However as they seem to affect invariant manifolds, a change in tangles propagates instantaneously throughout the whole space. This phenomenon reminds of the infinite propagation speed in the heat equation.
 
 The next bifurcation above $0.02230$ according to Sec. \ref{subsec:po} is a period doubling of $F_{21}$ at $0.02232$, followed by a saddle-centre bifurcation that creates $F_3$ and $F_4$ at $0.02254$ and a bifurcation of $F_3$ where $F_{31}$ and $F_{32}$ are created at $0.02257$. At around $0.02523$ follows another period doubling of $F_{21}$, $F_{21}$ collides with $F_2$ at $0.02651$ and subsequently $F_2$ collides with $F_0$ at $0.02654$.
 
 \begin{figure}
  \centering
     \includegraphics[width=\textwidth]{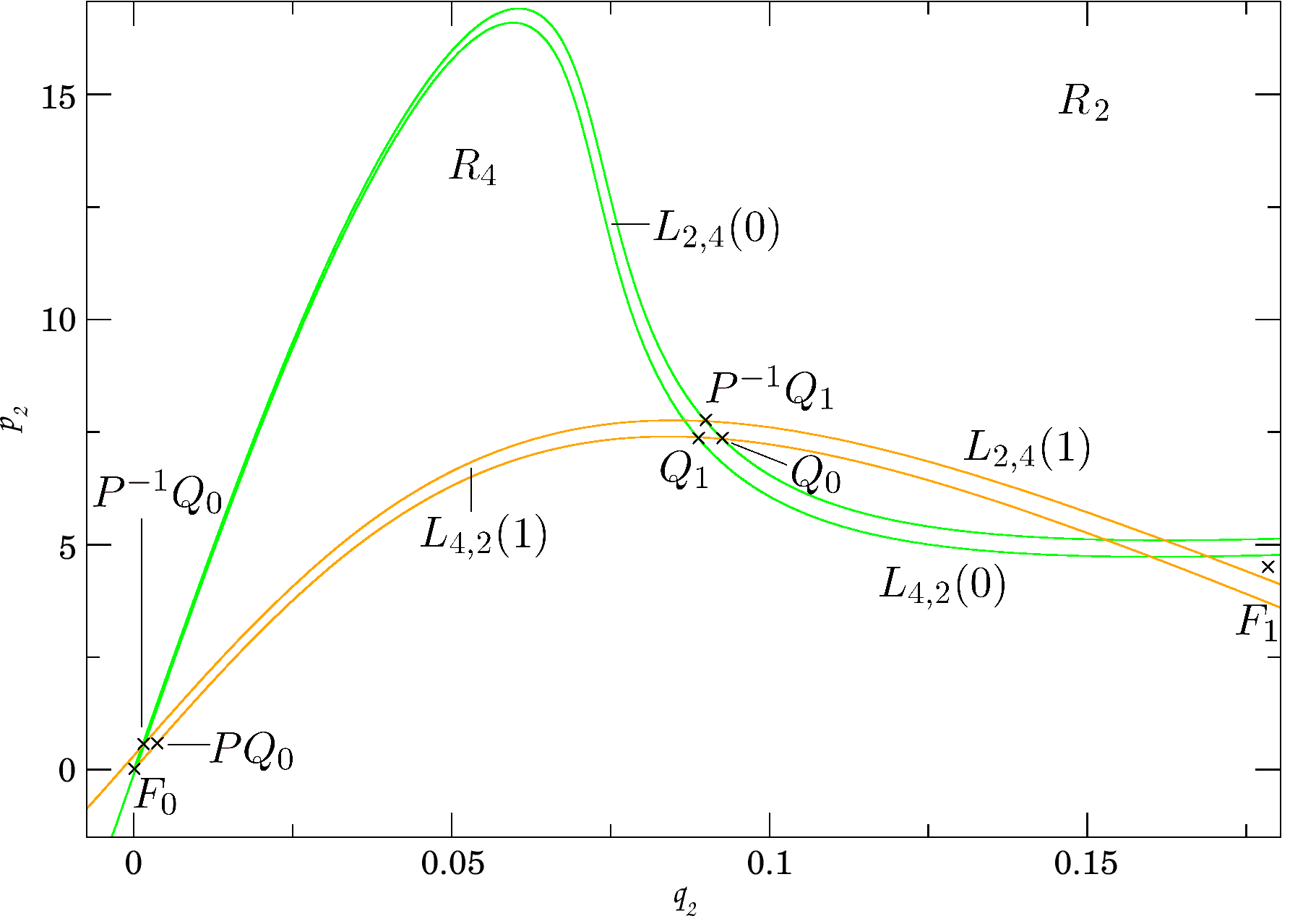}\\
     \includegraphics[width=\textwidth]{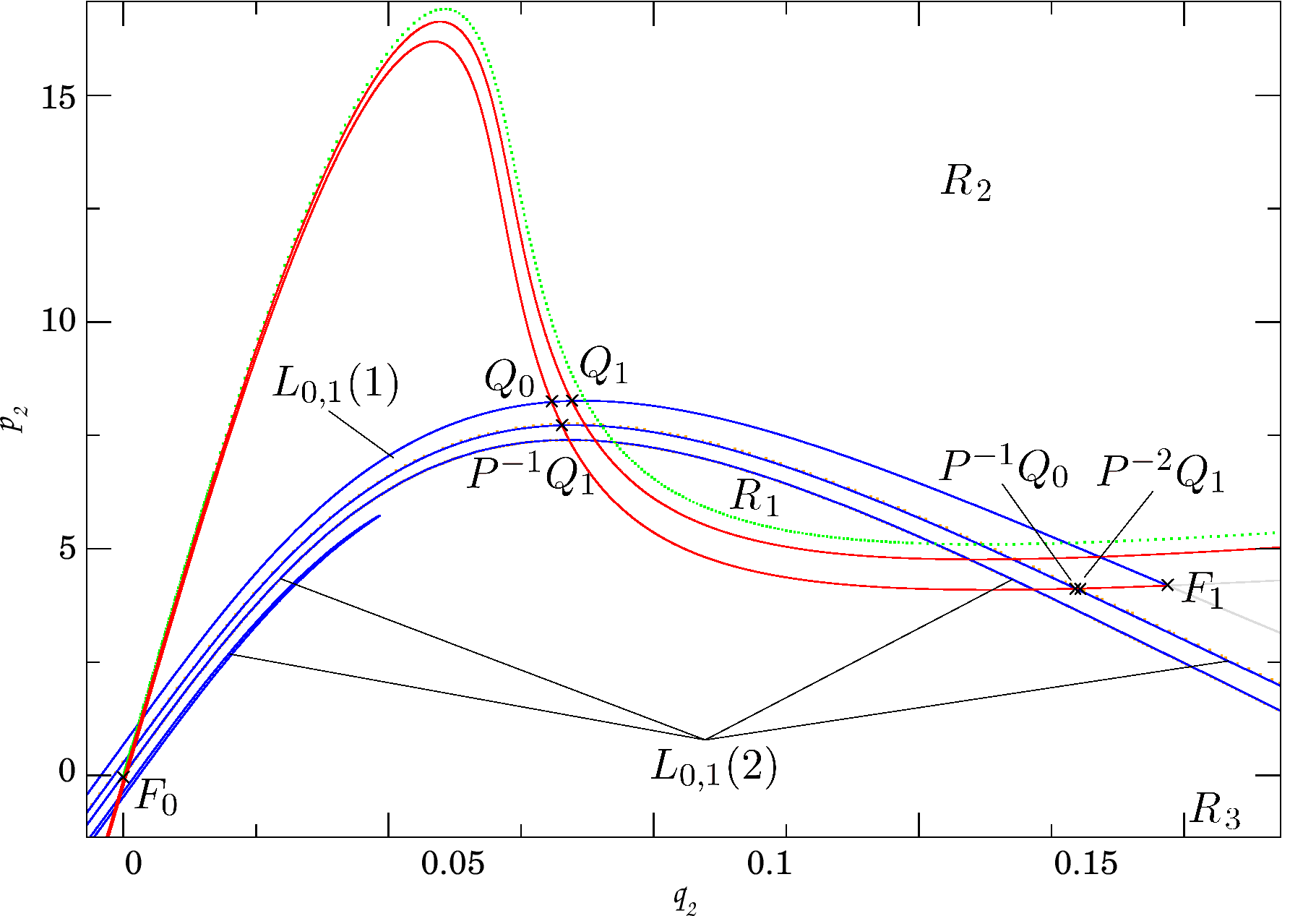}
         \caption{The $F_0$ tangle and the $F_1$ tangle at $0.02253$.}\label{fig:2253}
 \end{figure}
 
 The major consequence of the bifurcation of $F_{21}$ at $0.02232$ is a new intersection of $W^u_{F_0+}$ and $W^s_{F_0+}$ labeled $Q_0$ in Fig. \ref{fig:2253}. This reduces the number of pips between $Q_0$ and $PQ_0$ to one and therefore lobes are no longer made up of two disjoint sets. The $F_0$ tangle resembles the $F_0$-$F_1$ tangle at $0.02230$. Also the size of $R_4$ is reduced.
 
 In the $F_1$ tangle we see $L_{0,1}(2)$ cross DS$_0$ twice as shown in Fig. \ref{fig:2253}. All $L_{0,1}(k)$ for $k>2$ and also $L_{1,0}(k)$ with $k<-2$ therefore pass through $\widehat{R}_1$. Moreover, the tip of $L_{0,1}(2)$ approaching $R_1$ can be expected to pass cross $R_1$ after the bifurcations at $0.02254$ and $0.02257$.
 
 A small remark regarding notation. At this energy $L_{0,1}(2)$ lies in $R_0$, $\widehat{R}_0$, $R_1$, $\widehat{R}_1$, $R_2$ and $R_3$, but we maintain the notation for consistency.
 
 At $0.02400$, $L_{0,1}(2)$ in the $F_1$ tangle passes through $R_1$ twice and the number increases at higher energies. Almost all lobes lie in almost all regions, but the mechanism for fast entry and exit of the tangles remain the same.
 Fig. \ref{fig:2400 regions} shows $R_0$ and $\widehat{R}_0$. While $R_4$ is considerably larger than $R_1$ at $0.02253$, the opposite is true at $0.02400$. Recall that $R_4$ contains predominantly nonreactive trajectories that do not cross DS$_0$, whereas $R_1$ mostly contains ones that do. The overestimation of the reaction rate follows.
 
 The capture lobes in the $F_1$ tangle guide predominantly trajectories from products into $R_1$, as shown in Fig. \ref{fig:2400 regions}. A significant portion of $R_1$ is taken up by $L_{0,1}(0)\cap L_{1,0}(1)$ and it is prevented by $W^s_{F_1-}$ from escaping into reactants. Moreover, the a large part of the intersection lies below $W^s_{\widehat{F}_1+}$, see Fig. \ref{fig:2400 regions}, that guides it into back products as $W^s_{\widehat{F}_1+}$ contracts.
 
 Heteroclinic tangles mirror the changes of the homoclinc tangles (Fig. \ref{fig:2400 tangles}).
 
 \begin{figure}  
    \centering
     \includegraphics[width=\textwidth]{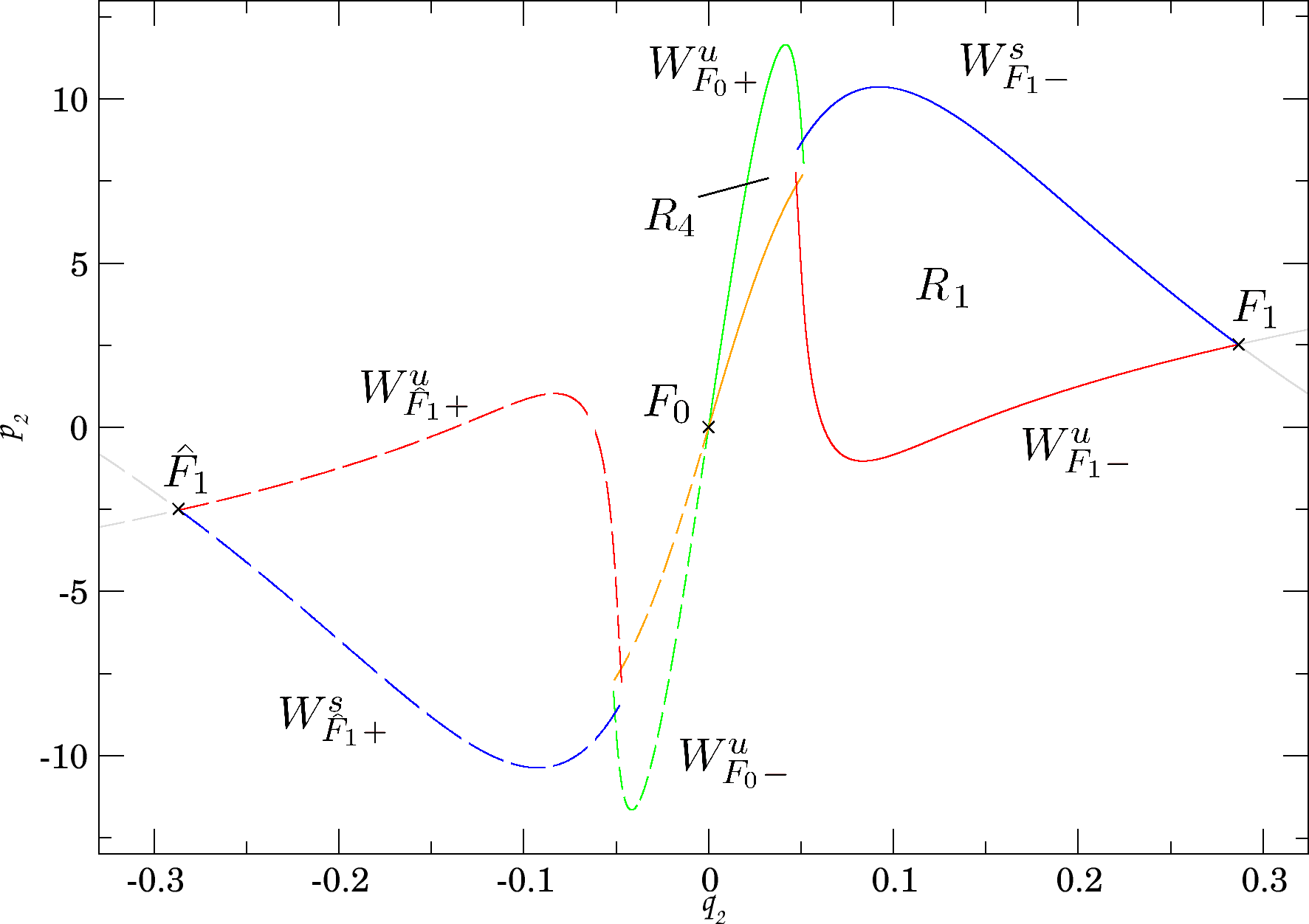}\\
     \includegraphics[width=\textwidth]{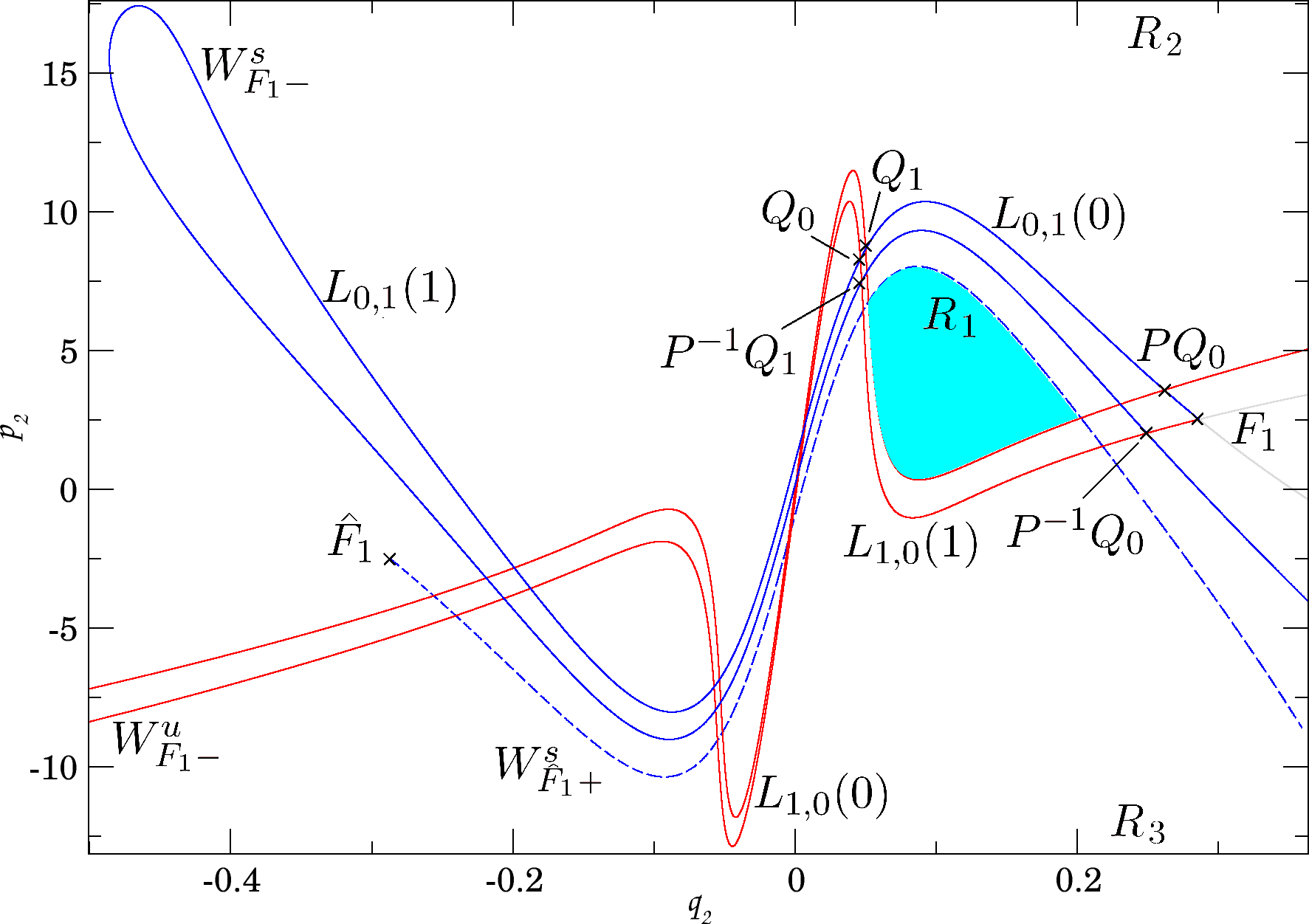}
         \caption{Indication of boundaries of $R_0$ and $R_4$ (above) and the $F_1$ tangle at $0.02400$ (below). The area in the $F_1$ tangle highlighted in cyan is the part of $L_{0,1}(0)\cap L_{1,0}(1)$ that originates in products and is guided by $W^s_{\widehat{F}_1+}$ (dashed) into products.}\label{fig:2400 regions}
  \end{figure}
  
  \begin{figure}
  \centering
     \includegraphics[width=0.49\textwidth]{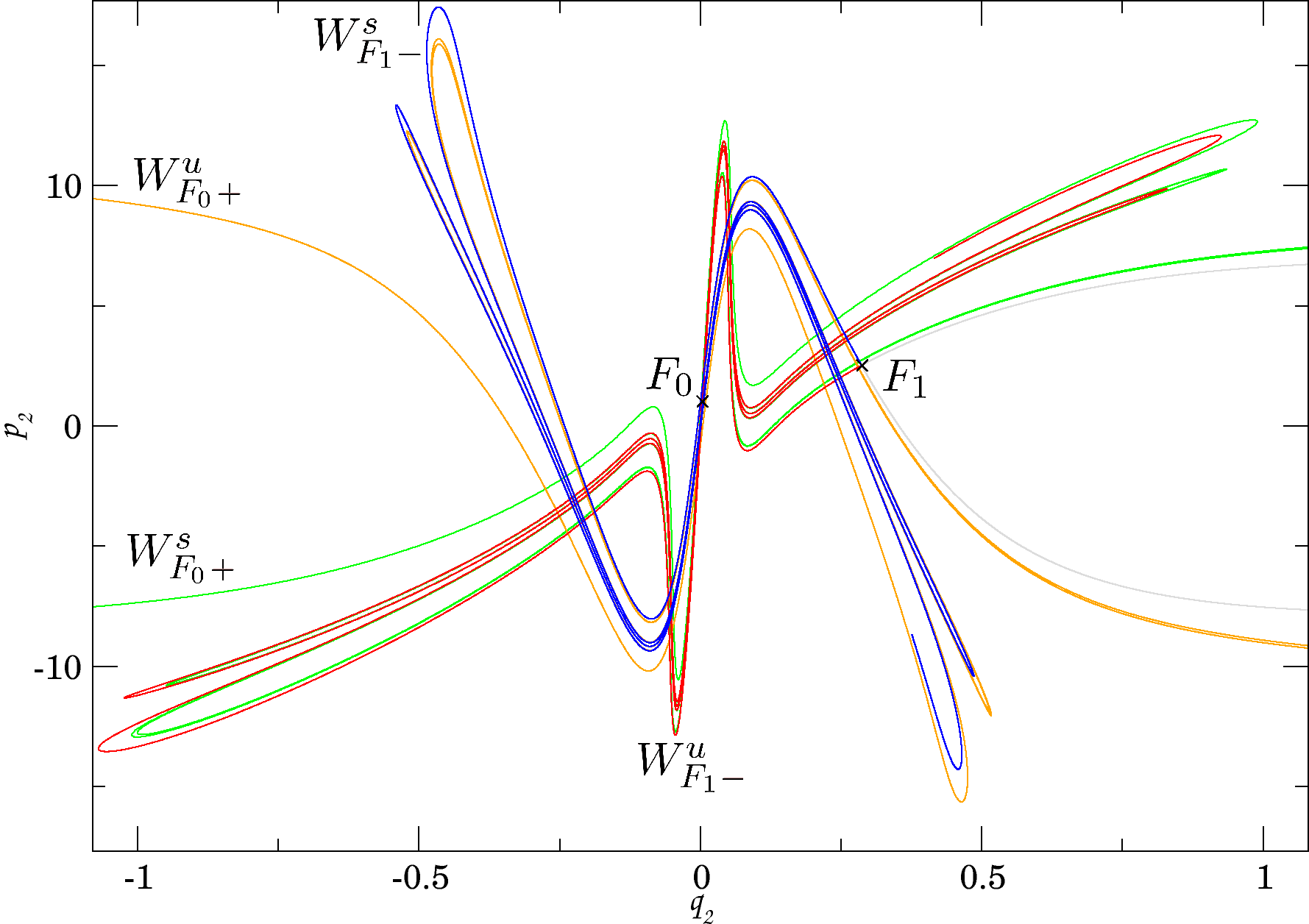}
     \includegraphics[width=0.49\textwidth]{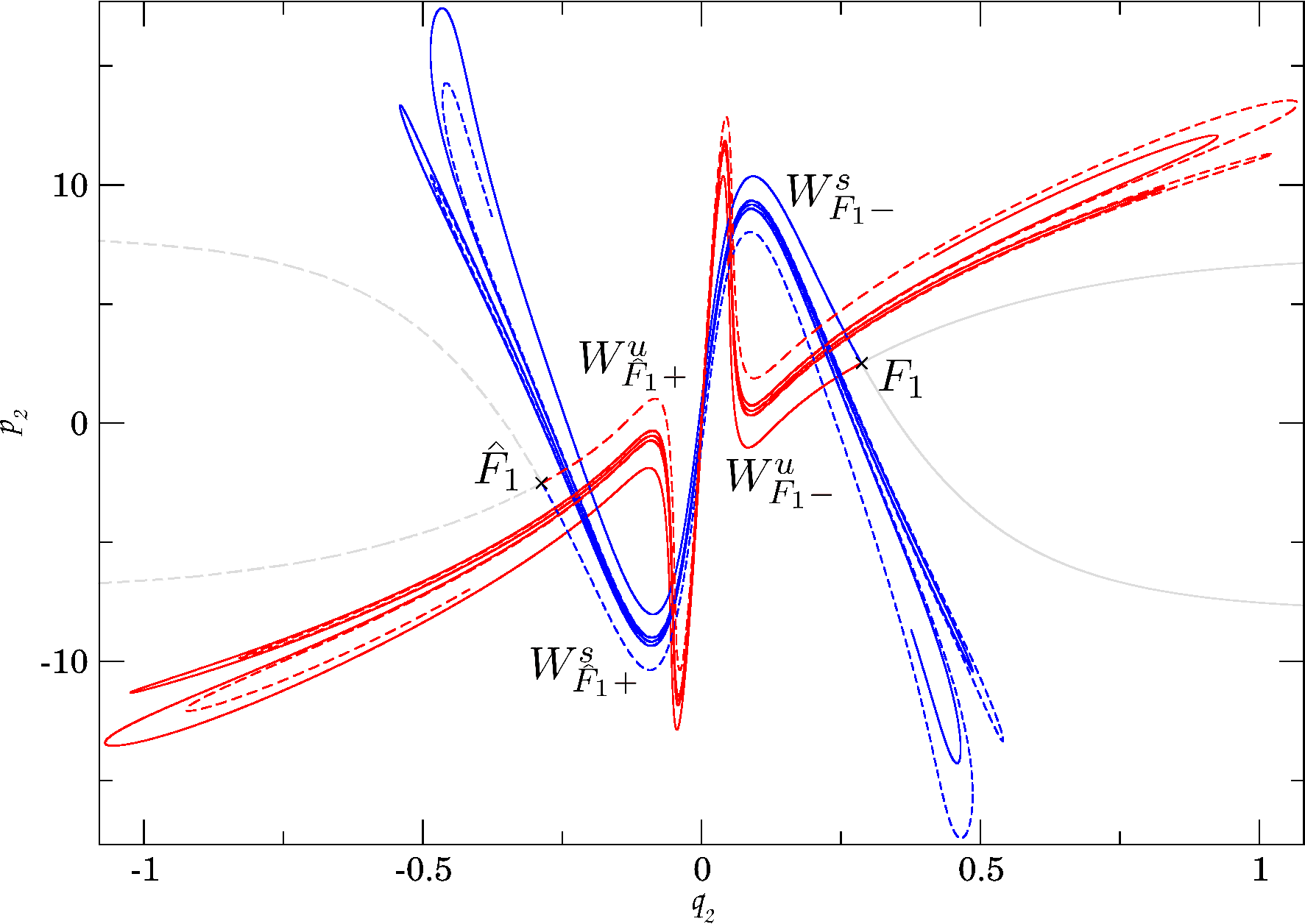}
         \caption{Structure of the heteroclinic tangles at $0.02400$. $W_{F_0+}$ and $W_{F_1-}$ making up the $F_0$-$F_1$ tangle (left) and the $F_1$-$\widehat{F}_1$ tangle (right). Unstable invariant manifolds are as indicated red and green, stable are blue and orange.}\label{fig:2400 tangles}
  \end{figure}
 
 \subsection{Loss of normal hyperbolicity} 
 \label{subsec:lossNH}
 $F_0$ loses normal hyperbolicity and becomes stable at $0.02654$, in a bifurcation involving $F_2$, $\widehat{F}_2$, $F_{21}$, $\widehat{F}_{21}$. TST cannot be based on $F_0$ and $W_{F_0}$ cease to exist. The sudden disappearance of invariant manifolds has no dramatic consequences. As can be deduced from Fig. \ref{fig:2400 tangles}, $W_{F_0}$ are at energies below $0.02654$, very close to $W_{F_1-}$ and $W_{\widehat{F}_1+}$ and naturally take over the role of $W_{F_0}$.
 Throughout the energy interval from $0.02206$ when $F_1$ appears to the loss of normal hyperbolicity at $0.02654$, we see a transition of dominance from $F_0$ to $F_1$-$\widehat{F}_1$. 
 
 The loss of normal hyperbolicity of $F_0$ simplifies dynamics due to the presence of fewer TSs, for example compare Figures \ref{fig:2400 tangles} and \ref{fig:2700}.
 
 At $0.02661$, $F_0$ collides with $F_4$ and becomes inverse hyperbolic. Due to the inverse hyperbolicity, $W_{F_0}$ exist, but they must contain a twist that is manifested as a reflection across the $F_0$ (see \cite{OzoriodeAlmeida90}), i.e. have the geometry of a M\"{o}bius strip. At the same time $W_{F_0}$ are enclosed between $W_{F_1-}$ along with $W_{\widehat{F}_1+}$, but with cylindrical structure. Consequences of the geometry of $W_{F_0}$ are unknown.

 \begin{figure}
  \centering
     \includegraphics[width=\textwidth]{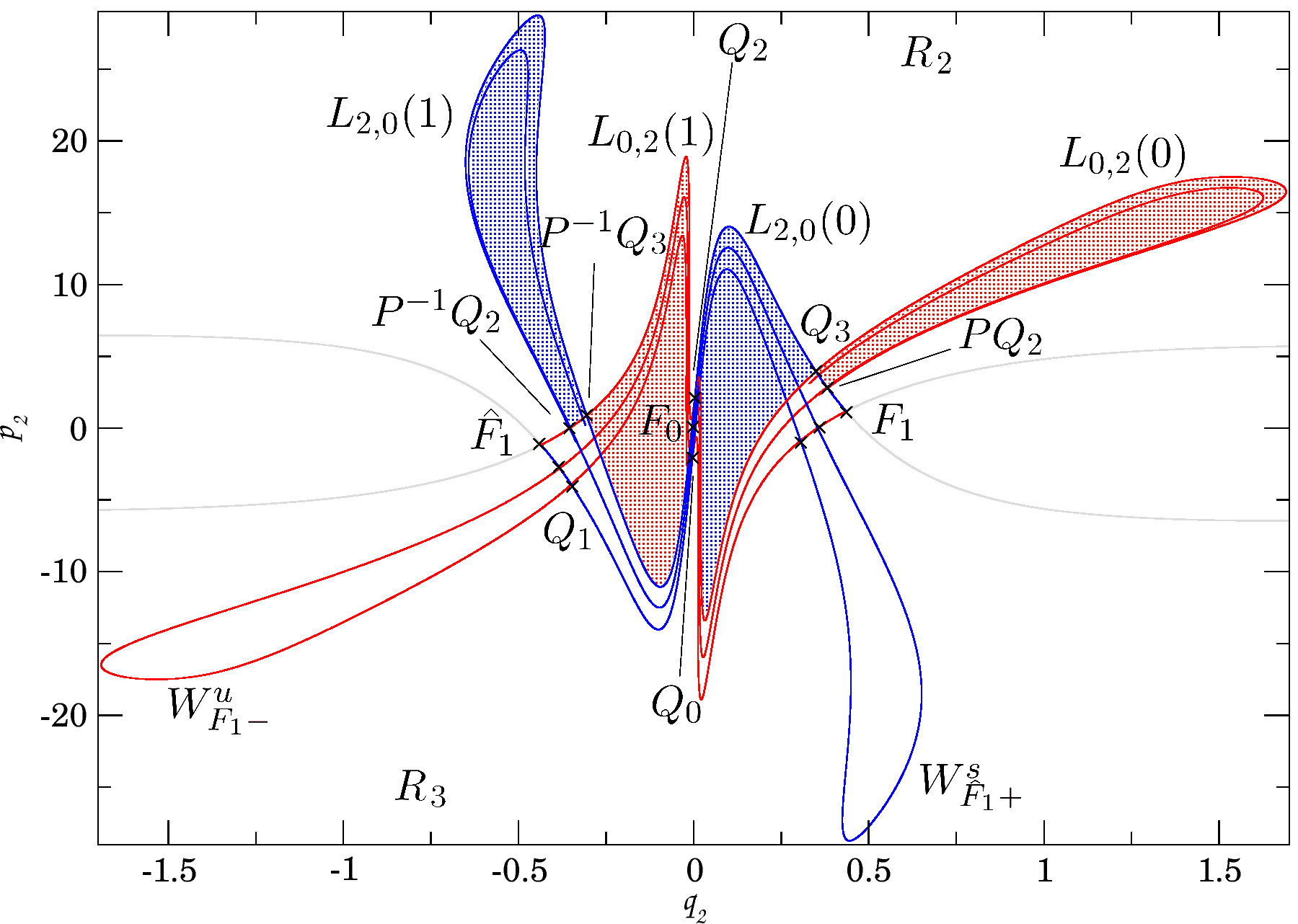}\\
     \includegraphics[width=\textwidth]{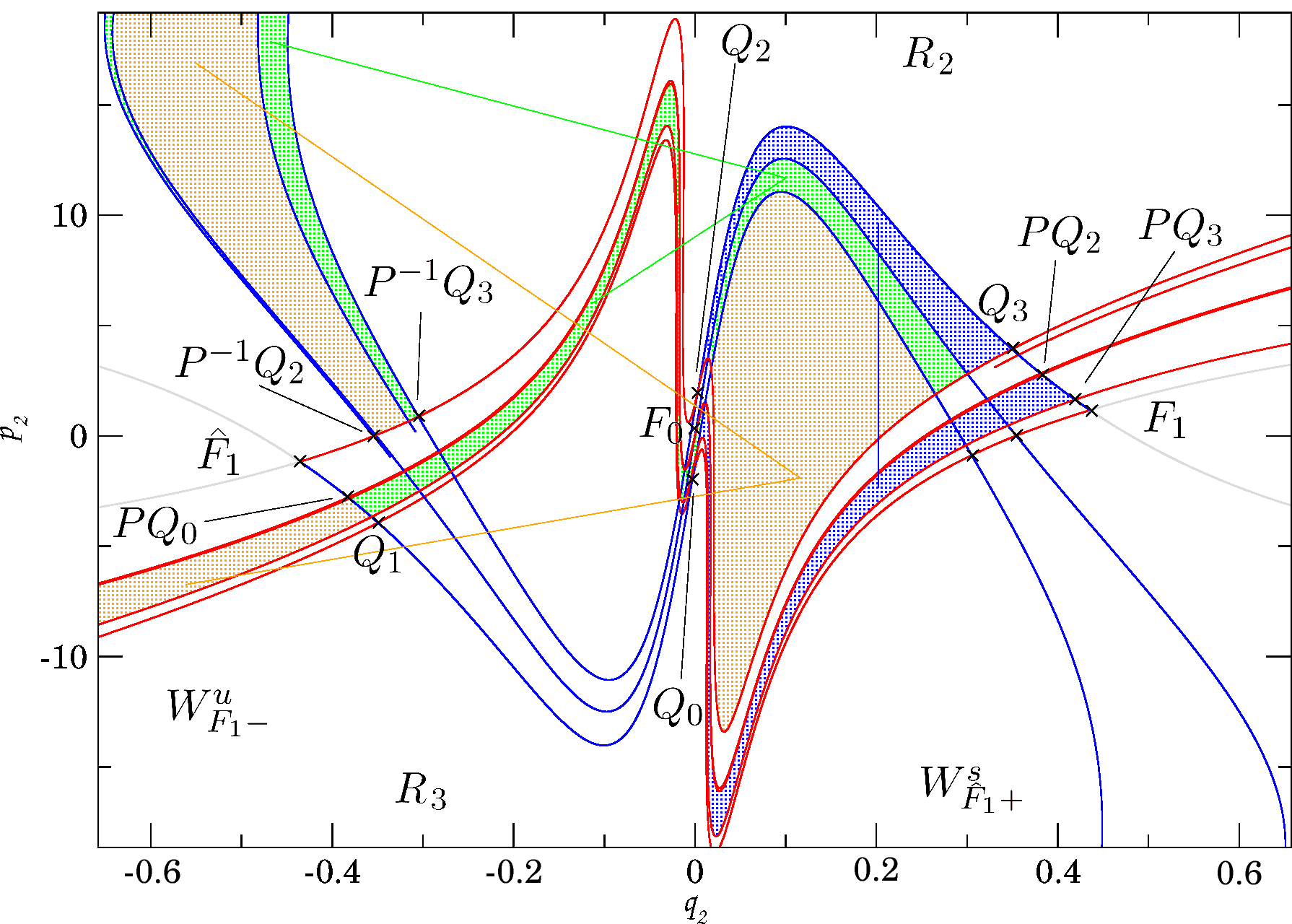} 
         \caption{The $F_1$-$\widehat{F}_1$ tangle at $0.02700$ and an indication how certain parts of lobes are mapped in this tangle.}\label{fig:2700}
 \end{figure}
 
 There are no more significant bifurcations above $0.02661$ and therefore apart from growing tangles and lobes, the tangles remains structurally the same.
 
 Together with $W_{F_0}$ we observe the disappearance of $R_4$ and of the mechanism that carries nonreactive trajectories through the $F_0$ tangle without crossing DS$_0$. Consequently, all trajectories that pass through the $F_1$-$\widehat{F}_1$ tangle cross DS$_0$ at least twice. Each hemisphere of DS$_1$ still possesses the no-return property, which means trajectories cross DS$_1$ at most twice. Trajectories that avoid the tangle cross both DSs once or not at all.
   
 Similarly to lower energies, $R_1$ is predominantly made up of $L_{0,1}(0)\cap L_{1,0}(1)$ in the $F_1$ tangle or $L_{2,0}(0)\cap L_{0,3}(1)$ in the $F_1$-$\widehat{F}_1$ tangle, as shown in Figure \ref{fig:2700}. The argument that trajectories in the $F_1$-$\widehat{F}_1$ tangle below and above all stable manifolds leave the interaction region is still valid. Capture lobes are disjoint, therefore it is not possible to reenter the bounded region. Although $R_1$ and $\widehat{R}_1$ admit return, $R_1\cup\widehat{R}_1$ possesses the no-return property.

 \subsection{Known estimate}\label{subsec:known estimate}
 
 Davis \cite{Davis87} formulated bounds and an estimate of the reaction rate based on numerical observation of dynamics. He observed that a significant portion of trajectories leave the heteroclinic tangle above $0.02654$ after one iteration and imposed the assumption of fast randomization on the remaining trajectories.
 
 As described above, Davis' observation is due a property of the $F_1$-$\widehat{F}_1$ tangle - $R_1$ is mostly occupied by $L_{0,1}(0)\cap L_{1,0}(1)$. We quantify this proportion below.
 
 The assumption of fast randomization of the other trajectories and a $50\%$ probability of them reacting is more difficult to support. From the analysis of lobes we know that however intricate the dynamics is, there is no reason for precisely half of the remaining trajectories to leave to reactants and half to products. Instead we find that for small energies, trajectories that spend $2$ and more iterations $R_0$ and $\widehat{R}_0$ make up a significant part of the tangles (up to half at $0.02350$), but their total proportion is very small and only grows slowly with increasing energy. In the interval up to $0.03000$, these trajectories make up at most $3\%$ of the total, $2\% $ below $0.02650$, see Table \ref{tab:areas}. Consequently, any estimate of the reaction rate that takes trajectories escaping after $1$ iteration into account is accurate to within $3\%$ below $0.03000$ and when we include trajectories escaping after $2$ iterations, this number drops to less than $1\%$.
 
 The difficulty lies in accurately calculating the amount of trajectories. At the cost of accuracy, Davis used VTST as a measure of trajectories entering the interaction region, $\mu(L_{3,0}(0))$ to estimate the size of the tangle and $\mu(L_{3,0}(0)\cap L_{0,2}(1))$ to subtract trajectories escaping after $1$ iteration. The upper and lower estimates assume all, respectively none, of the trajectories that escape after $2$ or more iterations are reactive. 

\section{The intricate energy interval}\label{sec:intricate interval}
 The energy interval $0.02215<E<0.02654$, when TST is not exact and $F_0$ is a TS, has been largely avoided in the past. The interaction of invariant manifolds of two TSs posed enough difficulties. Dividing tangles using pieces of invariant manifolds and following pips to understand dynamics within make this task possible. We divide tangles differently to the lobe dynamics approach, because we aim to describe and measure parts of heteroclinic tangles that do not necessarily fall into a single lobe.

 \subsection{Division of a tangle}\label{subsec:division}
 Davis \cite{Davis87} calculated pieces of invariant manifolds in this interval at an energy of $0.7\text{eV}\approx 0.02572$, but complexity of their intersections did not admit deeper insight. With current understanding it is not possible to consider all the invariant manifolds at once, because even identifying lobes is challenging, not to speak of their intersections.
 
 We use the approach outlined in Sec. \ref{subsec:heteroclinic} and concentrate on $W_{F_1}$ and $W_{\widehat{F}_1}$, while keeping $W_{F_0}$ in mind near $F_0$. A similar approach may be used for homoclinic tangles. We separate predictably evolving trajectories from chaotic ones, for example trajectories escaping after $1$, $2$ or $3$ iterations from the rest of the tangle. To our knowledge, tools for identifying particular lobe intersections and determining the area, a heteroclinic tangle surgery toolbox, have not been previously presented or reported.
 
 There is one more important property of the manifolds that stands out from all previous figures. Inside the $F_1$-$\widehat{F}_1$ tangle, $W^u_{F_1-}$ and $W^u_{\widehat{F}_1+}$ are restricted to the stripe between two pieces of unstable manifold, e.g. $U[\widehat{F}_1,Q_3]$ and $U[F_1,Q_1]$ at $0.02700$ in Fig. \ref{fig:2700} or $U[\widehat{F}_1,PQ_1]$ and $U[F_1,P\widehat{Q}_1]$ at $0.02400$ in Fig. \ref{fig:2400}. Similarly $W^s_{F_1-}$ and $W^s_{\widehat{F}_1+}$ are confined to a single stripe. We remark that $W_{F_0}$ are located between $W_{F_1-}$ and $W_{\widehat{F}_1+}$ and thereby confined as well. It therefore makes sense to study this stripe in detail.
  
 Consider the $F_1$-$\widehat{F}_1$ tangle at $0.02400$, where $R_1$ and $R_4$ are reasonably sized and nonreactive trajectories that do not cross DS$_0$ exist. Following the motto \emph{divide et impera}, we take the following steps:
 \begin{itemize}
  \item We identify new regions that have the no-return property.
  \item We use as few pieces of invariant manifolds as possible.
  \item We define subsets of regions containing reactive/nonreactive trajectories.
 \end{itemize}
 
 \begin{figure}
  \centering
      \includegraphics[width=0.49\textwidth]{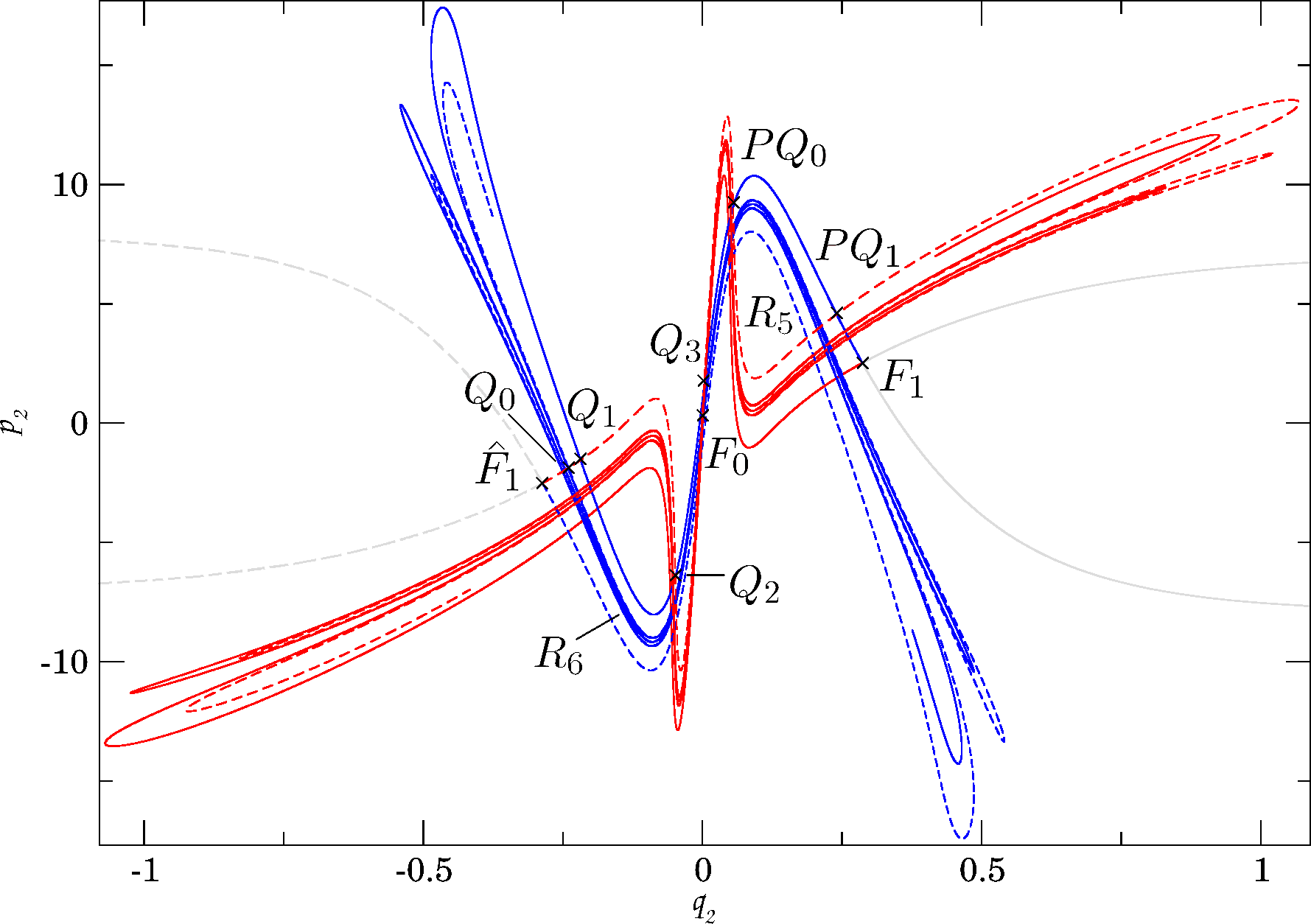}
      \includegraphics[width=0.49\textwidth]{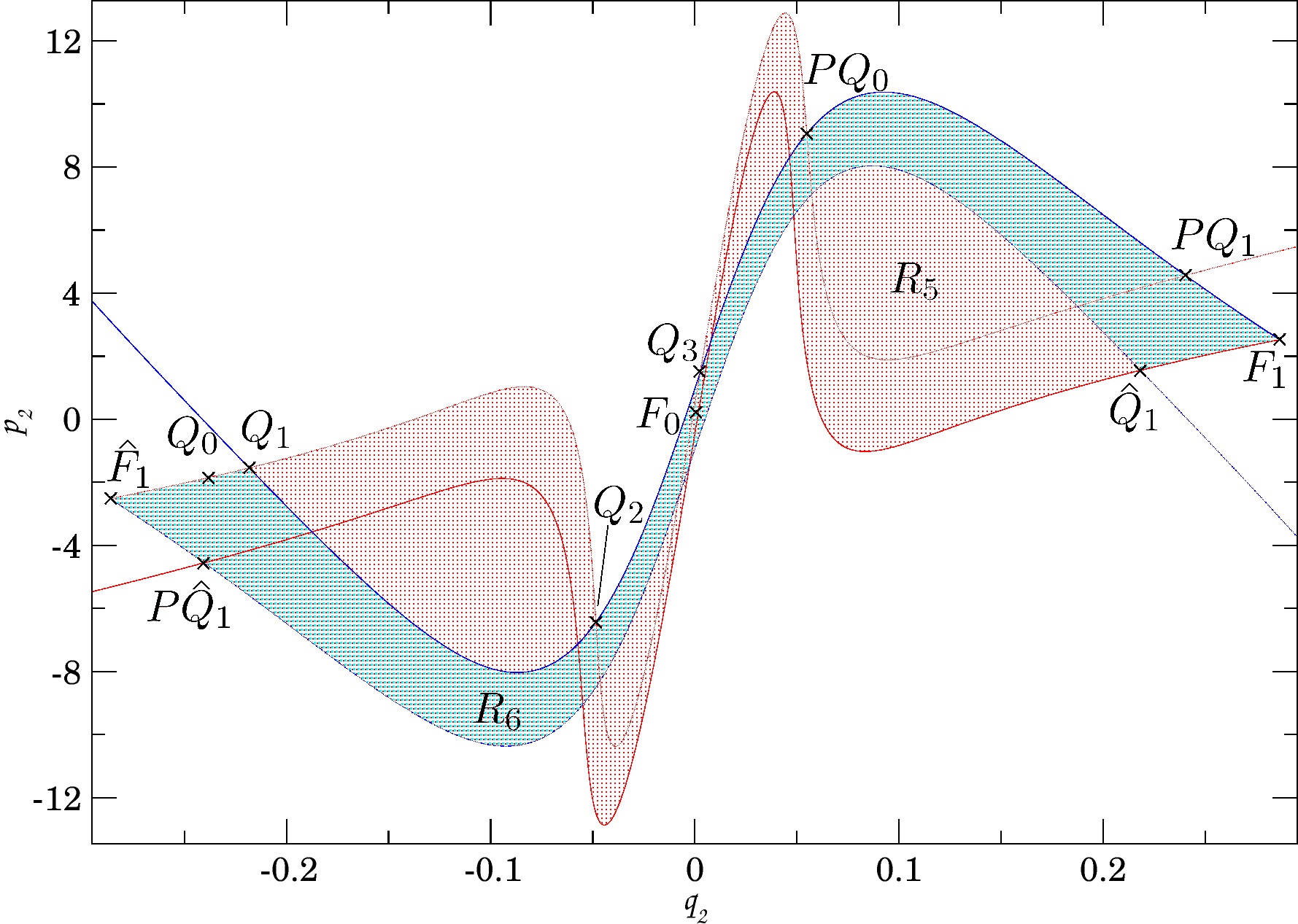}
	  \caption{The $F_1$-$\widehat{F}_1$ tangle at $0.02400$ (left) and its simplification (right). $W_{F_1-}$ are drawn with solid lines, $W_{\widehat{F}_1+}$ are dashed.}\label{fig:2400}
 \end{figure}

 Define $R_5$ as the bounded region inside the tangle, the upper part of the boundary is made up of $U[\widehat{F}_1, Q_2]$, $S[Q_2, Q_3]$, $U[Q_3,PQ_0] $ and $S[PQ_0,F_1]$, see Figure \ref{fig:2400}, and the lower part is symmetric to it. Each lobe consists of two disjoint sets, for example, $L_{2,5}(0)$ is bounded by $S[PQ_1,PQ_0]$, $U[PQ_0,PQ_1]$ and $S[Q_3,Q_2]$, $U[Q_2,Q_3]$. We remark that lobes do not intersect outside $R_5$ and leave the interaction region. Disjoint capture lobes imply:

 \begin{remark}
  $R_5$ has the no-return property.
 \end{remark}
 As found in Sec. \ref{subsec:higher energies}, a large part of $R_5$ behaves regularly and leaves the tangle within $1$ iteration. As argued in Sec. \ref{subsec:heteroclinic}, stable manifolds contract in forward time and thereby act as a barrier. Everything above $W^s_{F_1-}$ leaves at the next iteration to reactants, everything below $W^s_{\widehat{F}_1+}$ leaves to products. This agrees with the lobes $L_{5,3}(1)$ and $L_{5,2}(1)$ that leave $R_5$ by definition.
 
 The remainder of $R_5$ is the stripe between $W^s_{F_1-}$ and $W^s_{\widehat{F}_1+}$, the only part of $R_5$ where stable manifolds can lie. We refer to it as the \emph{capture stripe} and denote it $R_6$, see Fig. \ref{fig:2400}. Its boundary consists of $S[\widehat{F}_1,\widehat{Q}_1]$, $U[F_1,\widehat{Q}_1]$, $S[F_1,Q_1]$ and $U[\widehat{F}_1,Q_1]$.
 
 In backward time, the roles of stable and unstable manifolds switch - everything below $W^u_{F_1-}$ and above $W^u_{\widehat{F}_1+}$ escapes $R_5$. Define $R_7$, the \emph{escape stripe} bounded by $S[\widehat{F}_1,P\widehat{Q}_1]$, $U[F_1,P\widehat{Q}_1]$, $S[F_1,PQ_1]$ and $U[\widehat{F}_1,PQ_1]$. $R_5\setminus R_7$ escapes $R_5$ after $1$ iteration in backward time.
 
 We conclude that all complicated and chaotic dynamics is confined to $R_6\cap R_7$ and due to the no-return property of $R_5$:
 
 \begin{remark}
  $R_6$ and $R_7$ have the no-return property.
 \end{remark}

 Note that the boundary of $R_7$ is the image of the boundary of $R_6$. Necessarily
 $$PR_6=R_7,$$
 and due to preservation of area $\mu(R_6)=\mu(R_7)$.
 
 There are more regions with the no-return property in the $F_1$-$\widehat{F}_1$ tangle. Obviously, $R_5\setminus (R_6\cup R_7)$ must be a no-return region as it escapes the $R_5$ immediately after entering. Also $R_6\setminus R_5$ as the entry point to $R_7$ must have the no-return property as well as capture and escape lobes.

 \subsection{Dynamical properties}
 \label{subsec:dynamical properties}
 To shorten and facilitate the description of reactive and dynamical properties of $R_5$, $R_6$ and $R_7$, we introduce the following classification of trajectories.
 
 \begin{definition}
  We call the set of trajectories:\\
  \emph{directly reactive} ($DR$) if they remain in $R_2$ or $R_3$,\\
  \emph{directly nonreactive} ($DN$) if they do not enter the interaction region,\\
  \emph{captured reactive} after $n$ iterations ($CR_n$) if they react after $n$ iterations in $R_5$,\\
  \emph{captured nonreactive} after $n$ iterations ($CN_n$) if they return to the region of origin after $n$ iterations in $R_5$.
 \end{definition}

 Clearly $DR$ and $DN$ never enter $R_5$. Following Sec. \ref{subsec:higher energies} and Sec. \ref{subsec:division}, $R_5\setminus (R_6\cup R_7)$ is the region of $CN_1$ and $CR_1$ is always empty. $CR_2$ and $CN_2$ are pass through $R_6\setminus R_7$ and $R_7\setminus R_6$ and therefore never enter $R_6\cap R_7$.
 
 This leaves the complicated evolution and chaotic behaviour restricted to $R_6\cap R_7$. Below $0.02500$, $R_6\cap R_7$ consists of $5$ squares near $F_0$, $F_1$, $\widehat{F}_1$, $F_2$ and $\widehat{F}_2$. As $F_2$ and $\widehat{F}_2$ approach the bifurcation with $F_0$, the three squares near them merge into one around $0.02523$ when $F_{21}$ bifurcates, see Figures \ref{fig:2500} and \ref{fig:2550}.
 
 \begin{figure}
  \centering
     \includegraphics[width=0.49\textwidth]{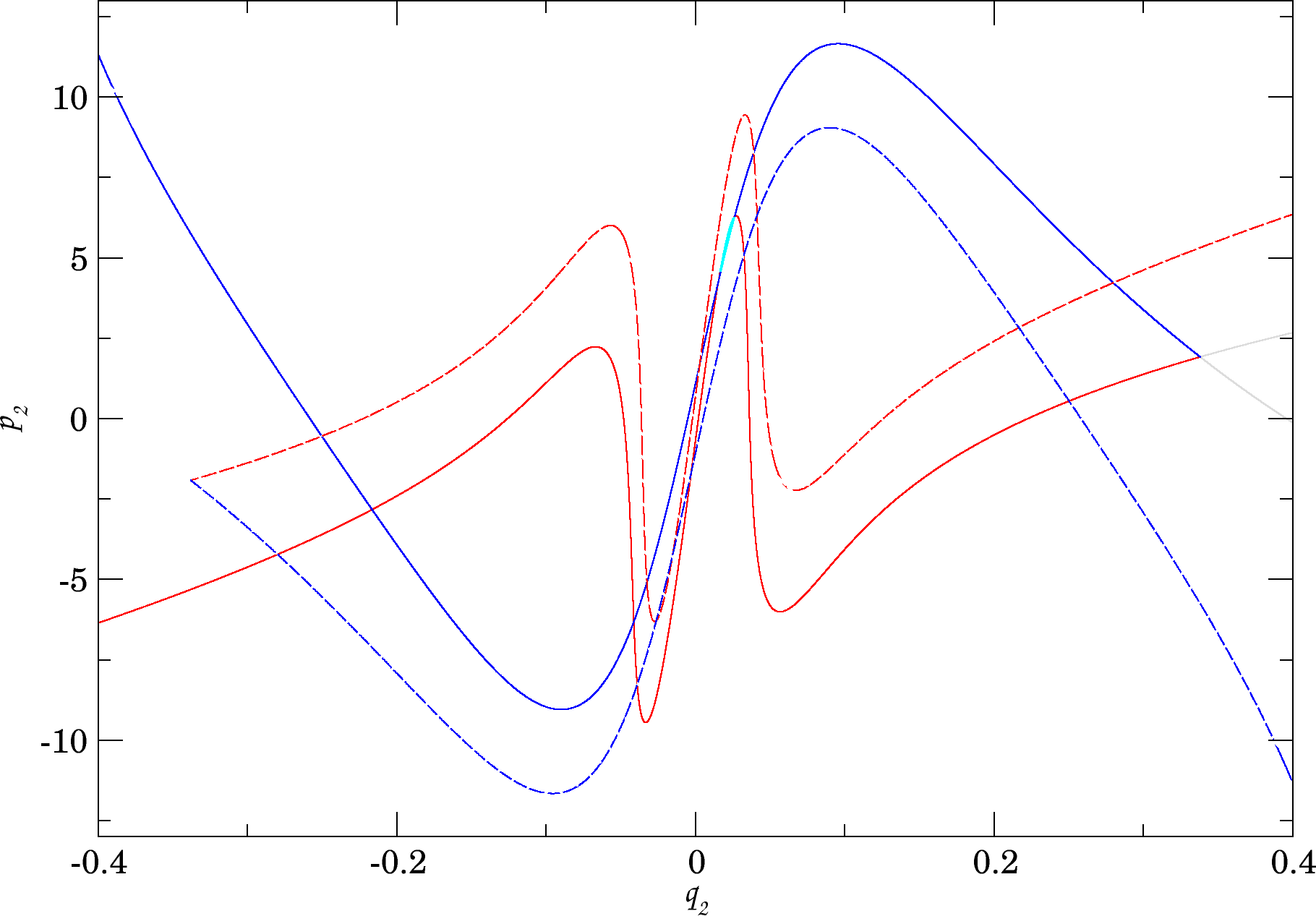}
     \includegraphics[width=0.49\textwidth]{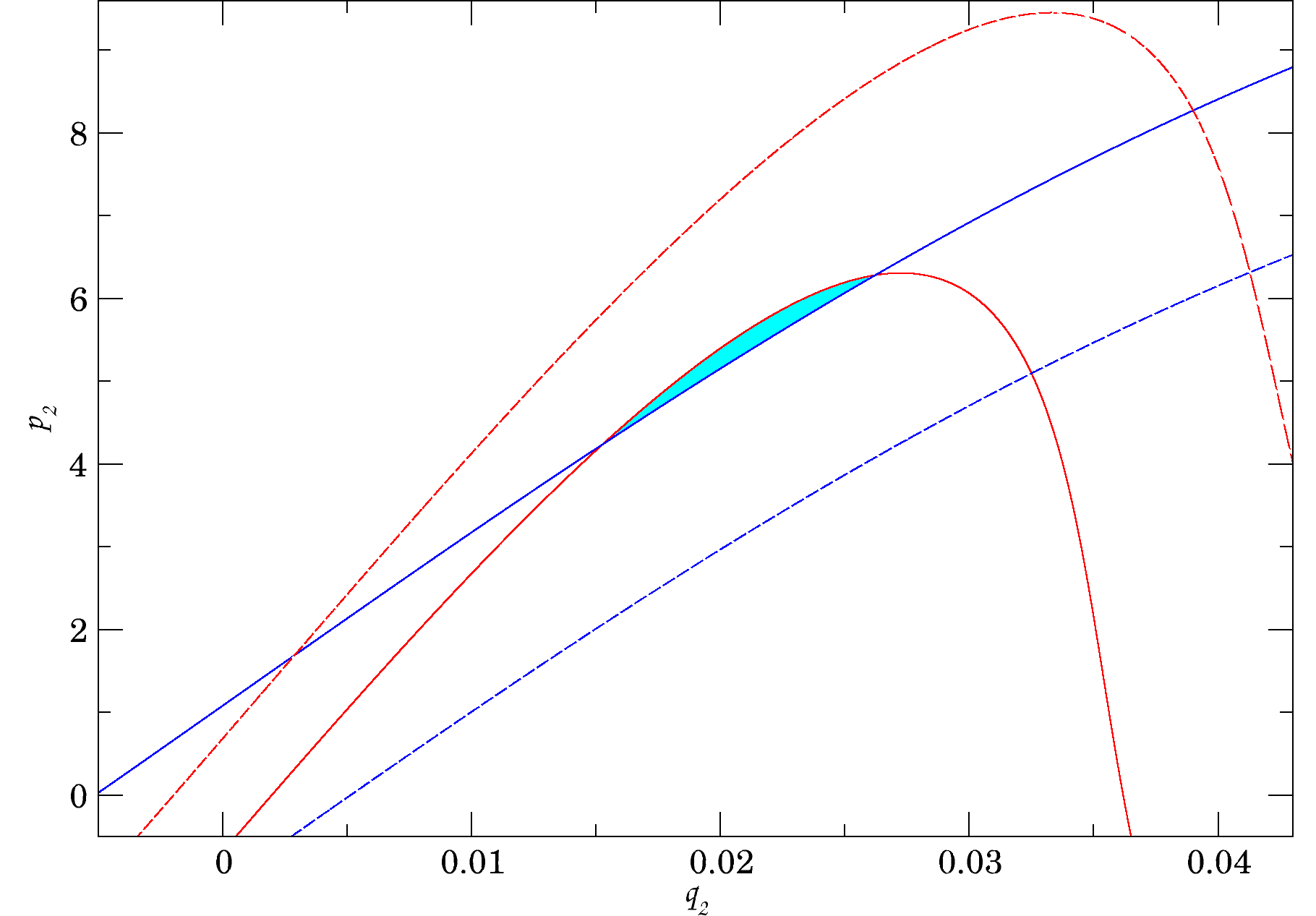}
         \caption{The $F_1-\widehat{F}_1$ tangle at $0.02500$ and a detail of the diminishing part of $R_5\setminus(R_6\cup R_7)$ highlighted in cyan.}\label{fig:2500}
 \end{figure}
 \begin{figure}
  \centering
     \includegraphics[width=0.49\textwidth]{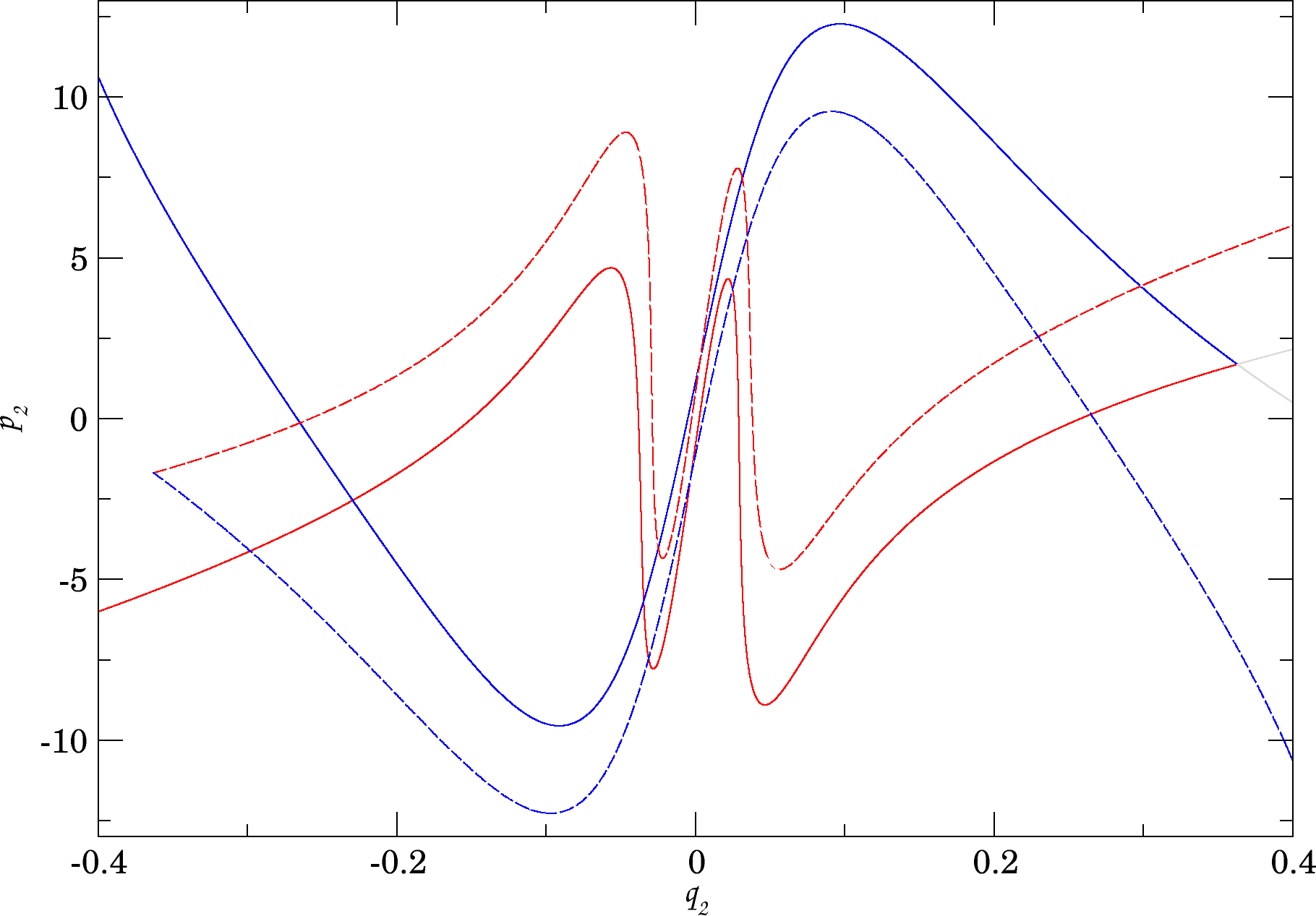}
     \includegraphics[width=0.49\textwidth]{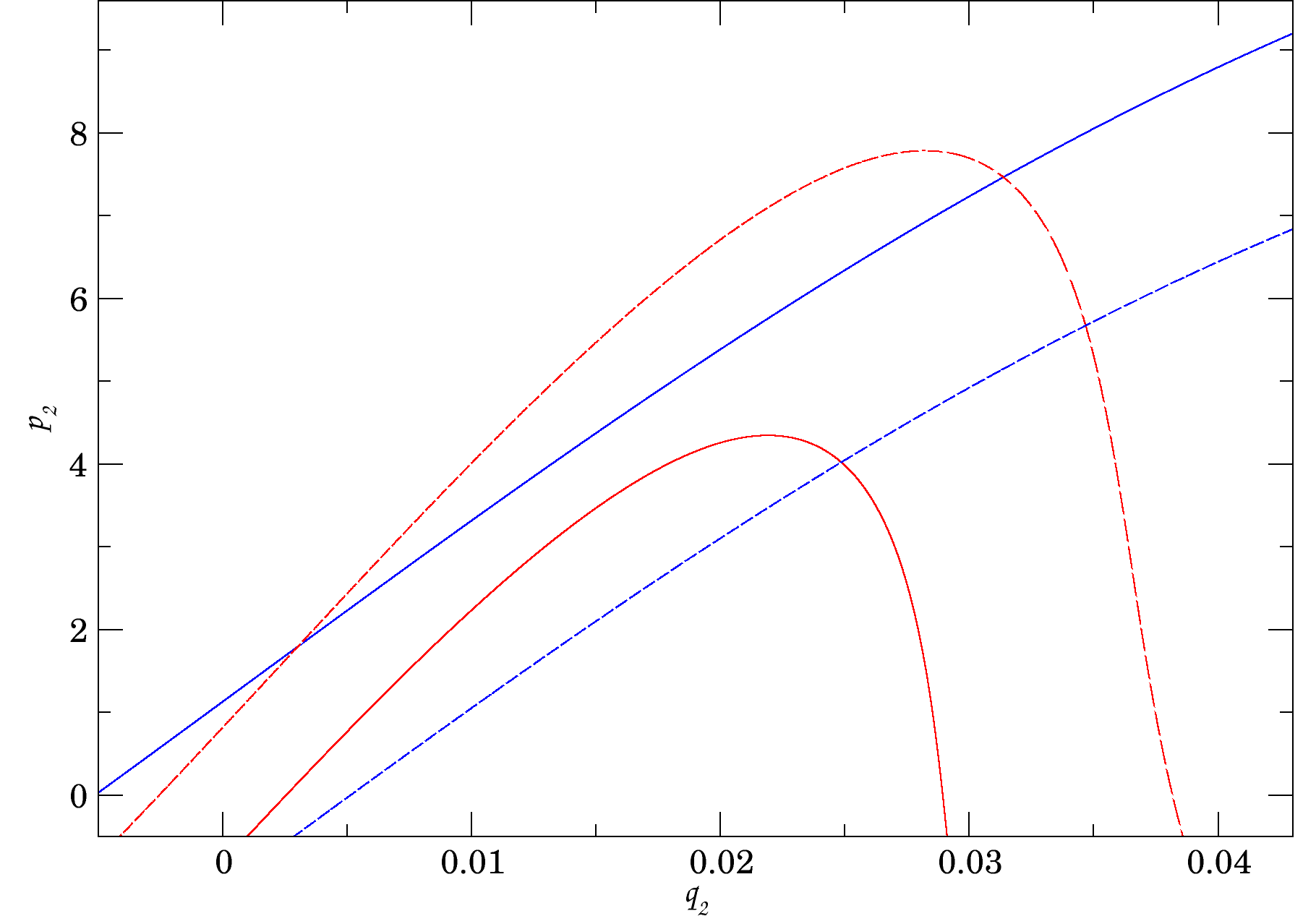}
         \caption{The $F_1-\widehat{F}_1$ tangle at $0.02550$ and a detail of the diminished part of $R_5\setminus(R_6\cup R_7)$.}\label{fig:2550}
 \end{figure}
 
 Trajectories enter $R_5$ via $R_6\setminus R_7$ and escape via $R_7\setminus R_6$, hence every trajectory crosses $R_6\setminus R_7$ and $R_7\setminus R_6$ at most once. The same is true for $R_5\setminus (R_6\cup R_7)$ consisting of $CN_1$. Therefore of $R_5$ only the size $R_6\cap R_7$ does not reflect the number of trajectories it contains. It follows that the area of $R_6\setminus R_7$, $R_7\setminus R_6$ and $R_5\setminus (R_6\cup R_7)$ on the surface of section $\Sigma_0$ is the same of their images on DS$_1$ and DS$_{\widehat{1}}$.
 
 \begin{figure}
  \centering
     \includegraphics[width=\textwidth]{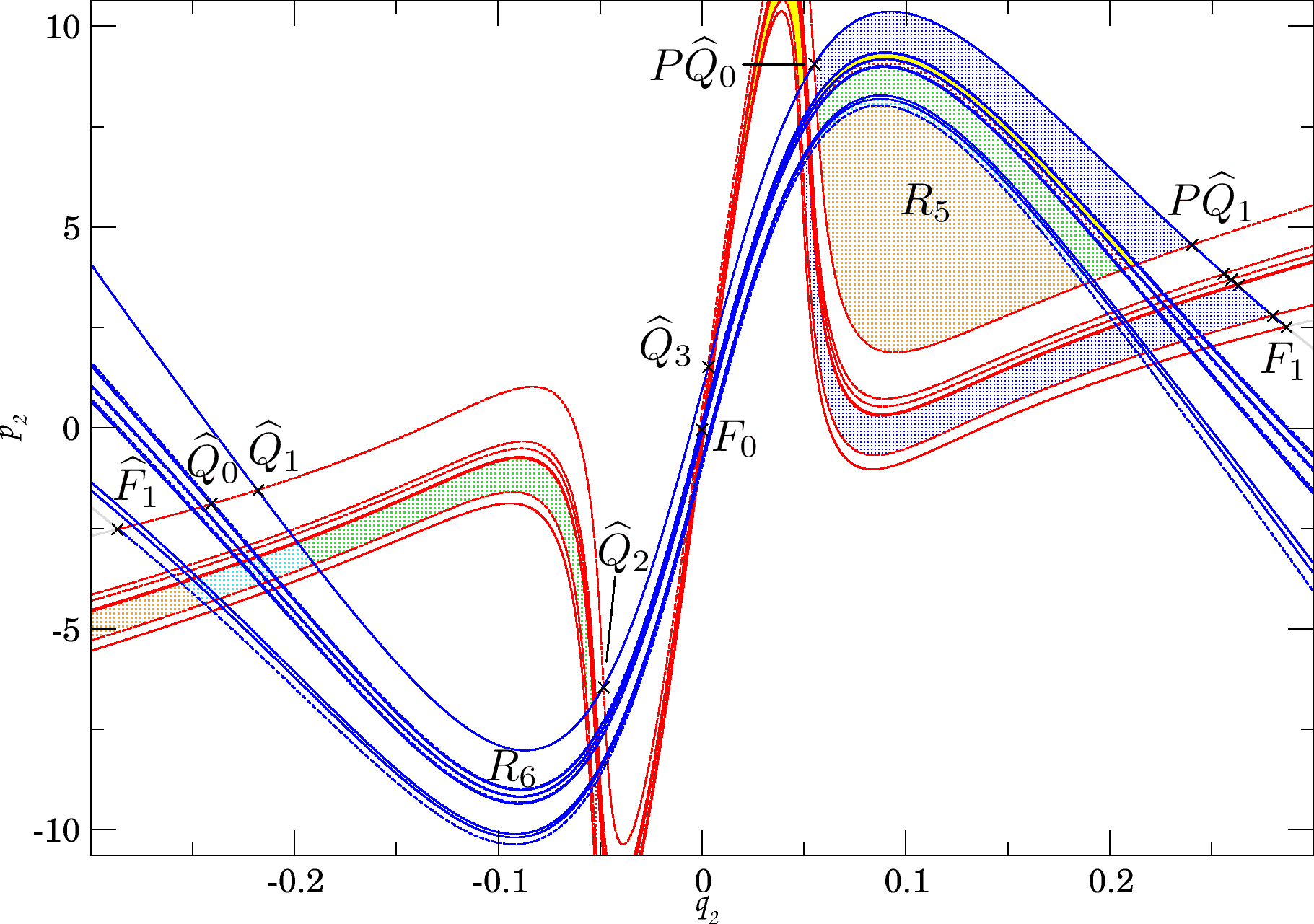}\\
     \includegraphics[width=\textwidth]{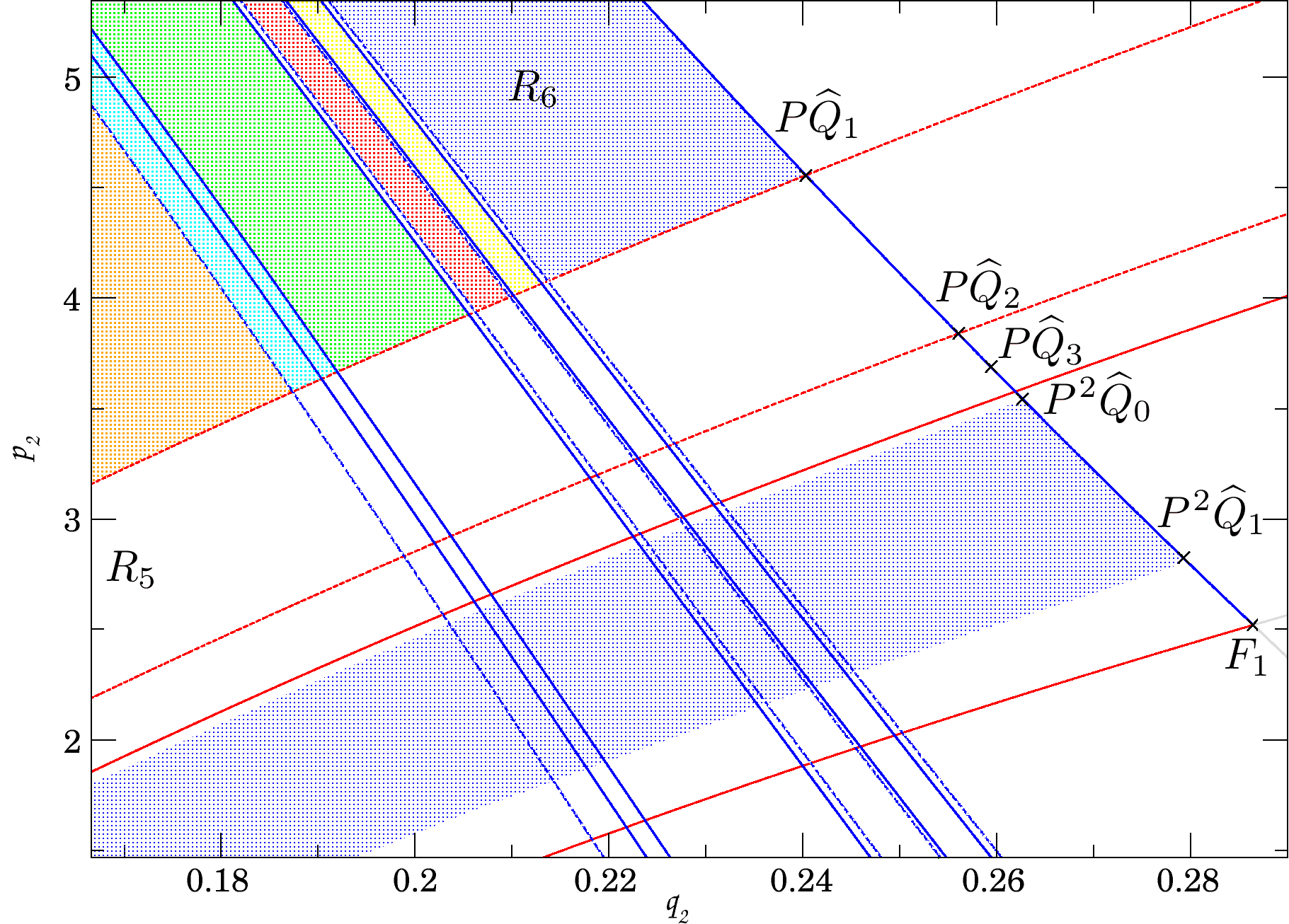}
         \caption{Coloured sets in $R_6$ showing how part of the capture stripe is mapped at $0.02400$. $W_{F_1-}$ are drawn with solid lines, $W_{\widehat{F}_1+}$ are dashed. $CN_1$ are shown in orange, $CR_2$ green and yellow, $CN_2$ red. Part of blue also belongs to $CN_2$. Blue, yellow, red and green are separated by white stripes that are mapped to $R_6\cap R_7$.}\label{fig:2400 dyn}
 \end{figure}
  
 Fig. \ref{fig:2400 dyn} shows a more detailed partitioning of $R_6$ and $R_7$. Essentially, $R_6$ is divided into finer stripes by pieces of $W^s_{F_1-}$ and $W^s_{\widehat{F}_1+}$ that are nearly parallel to the boundary. The boundary of $R_6$ illustrates how the content of the stripe is deformed when mapped into $R_7$. It is compressed along the stable manifolds towards the fixed points, e.g.
 $$P(S[F_1,Q_1])=S[F_1,PQ_1],$$
 and stretched along the unstable manifolds away from the fixed points. We remark that the whole highlighted set in $R_5\setminus R_7$ of Fig. \ref{fig:2400 dyn} is connected, only separated by stable invariant manifolds. When mapped forward it is stretched, but remains connected. The sets labeled by yellow, red and green are alternately mapped above and below the capture stripe.
 
 There is a connection between these coloured stripes and lobe intersections, but lobes do not distinguish how often trajectories cross DS$_0$, which is necessary to understand overestimation of the reaction rate by TST.
 
 The connected components of $R_6\cap R_7$ contain dynamics similar to Smale's horseshoe dynamics, \cite{Hirsch04}. As a consequence we observe a fractal structure, as can be seen in Fig. \ref{fig:2500 isl}. $R_6\cap R_7$ accounts for less than $12\%$ of the all trajectories that pass through $R_5$ below $0.02400$ when the dynamics is relatively slow. The proportion drops to roughly $7\%$ of $R_5$ at $0.03000$ and remains below $1\%$ of the total amount of trajectories.
 
  \begin{figure}
  \centering
     \includegraphics[width=0.7\textwidth]{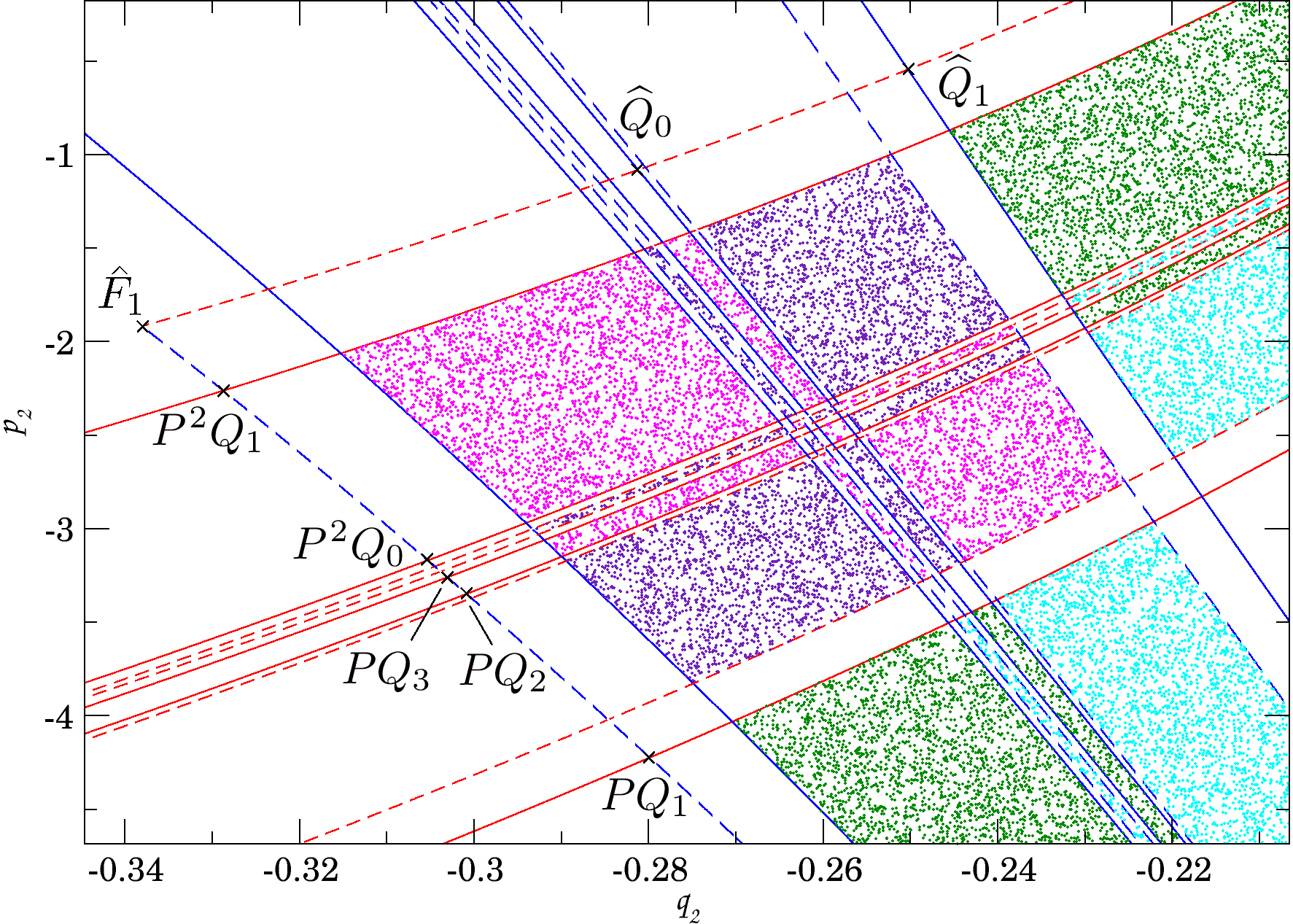}
         \caption{Detail of the island near $\widehat{F}_1$ and its content at $0.02500$. $CN_3$ are purple, $CR_3$ are magenta and the rest of the island is plain. $CN_2$ (green) and $CR_2$ (cyan) contained in the adjacent regions $R_6$ and $R_7$ are shown for completeness.}\label{fig:2500 isl}
 \end{figure}
 
 \subsection{Areas}
 In this system, determining the area of $R_5$, $R_6$, $R_7$, $R_5\setminus (R_6\cup R_7)$, $R_6\setminus R_7$ and $R_6\cap R_7$, is significantly easier than calculating lobe intersections. We employ a Monte Carlo based method that is expensive, yet simple. Ultimately the cost and accuracy depend on the level of detail in $R_6\cap R_7$, i.e. it can be determined a priori. We also tune initial and terminal conditions to obtain a high accuracy at a reasonable cost.
 
 Previous works seem to consider initial conditions on $r_1+\frac{r_2}{2}=50$, $p_{r_1}<0$, which is a surface near $q_2=1181$. We prefer to sample the hemisphere of DS$_1$, through which trajectories enter the interaction region. Directly we have that the difference between the inward hemisphere of DS$_1$ and $r_1+\frac{r_2}{2}=50$, $p_{r_1}<0$ corresponds to DN trajectories.
 
 The slowest of $DR$ are located near the boundary of $R_5$, and those near pips evolve similarly to pips. Using pips on $W^u_{F_1}$ we define \emph{checkpoints}, that mark distance these pips are mapped, i.e. the least distance $DR$ cover in the interaction region in $1$ iteration. Then all trajectories that pass the second checkpoint $1$ iteration after they pass the first checkpoint, are $DR$ and are not captured in $R_5$. Recall that all captured reactive trajectories spend at least $2$ iterations in $R_5$. Trajectories that have a delay of $n$ iteration between crossing of the checkpoints are $CR_n$. Since $R_5$ is symmetric, we use the symmetric counterpart of the second checkpoint to identify $CN_n$.
 
 \begin{figure}
  \centering
     \includegraphics[width=\textwidth]{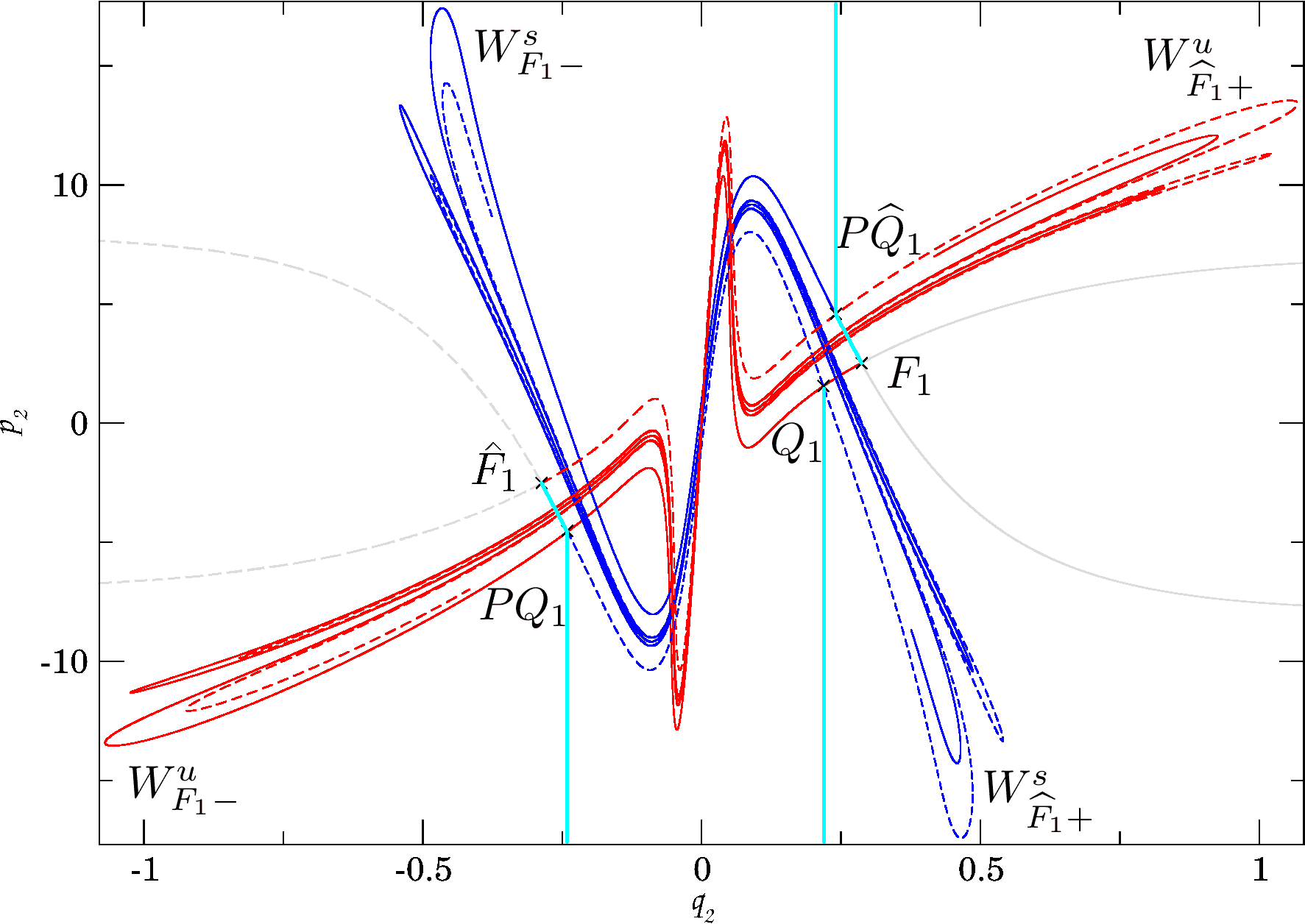}
         \caption{Checkpoints defined in the $F_1$-$\widehat{F}_1$ tangle at $0.02400$.}\label{fig:2400 chk}
 \end{figure}
 
 At $0.02400$, $Q_1$ and $PQ_1$ are the natural choice for checkpoints, because mark the endpoints of $R_6\setminus R_7$ via which trajectories enter $R_5$, see Figure \ref{fig:2400 chk}. We define checkpoint $Ch_{Q_1}$ as a vertical line passing through $Q_1$.
 It is necessary that $Ch_{Q_1}$ avoids capture lobes, therefore at energies above $0.02900$ the computationally most efficient solution is to use another vertical line between $Q_1$ and $F_0$.
 
 The role of the second checkpoint, $Ch_{PQ_1}$, is to distinguish trajectories in $R_5$ from those outside $R_5$. We use a linear approximation of $S[\widehat{F}_1,PQ_1]$, the boundary between the escape lobe and $R_5$, in conjunction with a vertical line passing through $PQ_1$. The checkpoint symmetric to $Ch_{PQ_1}$ is defined analogously and denoted $Ch_{P\widehat{Q}_1}$. If desired, we can track the number crossings of DS$_0$ using the sign of $q_2$.
 
 We can measure individual components of $R_5$: $\mu(R_5\setminus R_7)$ corresponds to the number of captured trajectories, $\mu(R_5\setminus(R_6\cup R_7))$ is given by $CN_1$. Then
 $$\mu(R_6\setminus R_7)=\mu(R_5\setminus R_7)-\mu(R_5\setminus(R_6\cup R_7)),$$
 and $R_6\cap R_7$ can be deduced from $CR_n$ and $CN_n$ where $n\geq3$. The latter follows from the fact that $CR_2$ and $CN_2$ do not pass through $R_6\cap R_7$.
 
 This method is not computationally cheap, but the computational difficulty can be easily estimated a priori. Determining the distribution up to $CR_n$ and $CN_n$ with $N$ initial conditions requires approximately $nN$ iterations of the map $P$, but considering the prevalence of $DR$ and $DN$, this number will be considerably lower.
 
 Alternative approaches to calculating lobe areas face the obstacle in distinguishing the inside from the outside of a lobe, not to mention their intersections. Recent developments \cite{krajnak2018phase,krajnak2018influence} suggest that a reactive island approach can be used to calculate areas of intersections in a cheaper and simpler manner.

 \section{Bounds of the reaction rate}\label{sec:bounds}
 \subsection{Quantification}\label{subsec:quantification}
 
 In Tab. \ref{tab:areas} we present proportions of areas of classes of trajectories on the plane $r_1+\frac{r_2}{2}=50$, $p_{r_1}<0$. Between $10^7$ and $2.10^8$ initial conditions were used to obtain these values.
 
 \begin{table}
 \centering
 \begin{tabular}{r|*{6}{c}}
    Energy & $DR$ & $DN$ & $CR_2$ & $CN_1$ & $CN_2$ & $Other$ \\
    \hline
    0.02205 & 0.590 & 0.410 & 0 & 0 & 0 & 0\\
    0.02214 & 0.595 & 0.405 & 0 & 0 & 0 & 0\\
    \hline
    0.02215 & 0.687 & 0.296 & 0 & 0.016 & 0.001 & 0.000\\ 
    0.02230 & 0.693 & 0.290 & 0.001 & 0.013 & 0.001 & 0.001\\ 
    0.02253 & 0.703 & 0.282 & 0.001 & 0.011 & 0.001 & 0.002\\ 
    0.02300 & 0.717 & 0.266 & 0.003 & 0.008 & 0.003 & 0.002\\ 
    0.02350 & 0.725 & 0.255 & 0.004 & 0.010 & 0.005 & 0.003\\ 
    0.02400 & 0.733 & 0.239 & 0.004 & 0.015 & 0.006 & 0.003\\ 
    0.02450 & 0.737 & 0.227 & 0.005 & 0.021 & 0.006 & 0.004\\ 
    0.02500 & 0.739 & 0.216 & 0.006 & 0.028 & 0.007 & 0.005\\ 
    0.02550 & 0.739 & 0.206 & 0.006 & 0.037 & 0.008 & 0.006\\ 
    0.02600 & 0.737 & 0.197 & 0.007 & 0.046 & 0.008 & 0.006\\ 
    0.02650 & 0.734 & 0.188 & 0.007 & 0.055 & 0.009 & 0.007\\ 
    \hline
    0.02662 & 0.734 & 0.186 & 0.008 & 0.057 & 0.009 & 0.007\\ 
    0.02700 & 0.731 & 0.180 & 0.008 & 0.064 & 0.010 & 0.007\\ 
    0.02800 & 0.723 & 0.166 & 0.009 & 0.083 & 0.011 & 0.008\\ 
    0.02900 & 0.714 & 0.153 & 0.010 & 0.101 & 0.012 & 0.009\\ 
    0.03000 & 0.705 & 0.145 & 0.011 & 0.116 & 0.013 & 0.010
  \end{tabular}
 
  \caption{Proportions of areas of classes of trajectories on the plane $r_1+\frac{r_2}{2}=50$, $p_{r_1}<0$. Directly reactive (DR) and directly nonreactive ($DN$) trajectories do not enter $R_5$. Captured reactive ($CR_2$) and captured nonreactive ($CN_1$, $CN_2$) enter and leave $R_5$ after $1$ or $2$ iterations. $Other$ trajectories do not leave $R_5$ within $2$ iterations after their entry and are inside $R_6\cap R_7$. Horizontal lines represent the creation of the homoclinic tangles and loss of normal hyperbolicity of $F_0$.}
 \label{tab:areas}
 \end{table}
 
 The proportion of $DN$ decreases steadily over the whole interval presented in Tab. \ref{tab:areas} and beyond. This is not surprising given that widening bottlenecks allow more trajectories enter the interaction region. Thereby nonreactive heavily oscillating trajectories enter the interaction region and consequently $R_5\setminus(R_6\cup R_7)$ grows faster than the rest of $R_5$.
 
 Note that the proportion of $DR$ culminates between $0.02500$ and $0.02550$. At this energies the geometry of the $F_1$-$\widehat{F}_1$ tangle simplifies with the consequence that all captured trajectories cross DS$_1$.
 The proportion of $DR$ above $0.02550$ decreases predominantly in favour of $CN_1$. We observe the growth of capture lobes mainly in the area of large $|p_2|$ momentum, containing predominantly $CN_1$ trajectories (Tab. \ref{tab:areas}), approaching the maximal values of $|p_2|$ at the given energy. This implies that for a given (small) $|p_1|$ momentum, trajectories are more likely to react at a lower energy due to smaller capture lobes.
 
 In the physical world large values of $|p_2|$ correspond by definition (Sec. \ref{subsec:momenta}) to large $|p_{r_1}|$ on the reactant side and large $|p_{r_2}|$ on the product side. Since trajectories are less likely to react at higher energies, the mechanism for transfer of kinetic energy between the degrees of freedom in the interaction region must be failing at high energies. Consequently, the energy passed from the incoming $H$ to the $H_2$ may be so high, that it repels the whole molecule instead of breaking its bond. This may be true for a whole class of collinear atom-diatom reactions, provided it is possible to define an interaction region multiple TSs.

 \begin{table}[ht]
  \centering
  \begin{tabular}{r|*{6}{c}}
    Energy & $P_{\text{TST}}$ & $P_{\text{VTST}}$ & $P_{\text{MC}}$ & $L_2$ & $U_2$ & $U_1$ \\
    \hline
    0.01600 & 0.181 & 0.181 & 0.181 &  &  & \\
    0.01800 & 0.383 & 0.383 & 0.383 &  &  & \\
    0.01900 & 0.469 & 0.469 & 0.469 &  &  & \\
    0.02000 & 0.545 & 0.545 & 0.545 &  &  & \\
    0.02100 & 0.615 & 0.615 & 0.615 &  &  & \\
    0.02205 & 0.681 & 0.681 & 0.681 &  &  & \\
    \hline
    0.02215 & 0.687 & 0.687 & 0.687 & 0.687 & 0.687 & 0.689\\ 
    0.02230 & 0.696 & 0.696 & 0.695 & 0.694 & 0.696 & 0.697\\
    0.02253 & 0.709 & 0.709 & 0.705 & 0.704 & 0.706 & 0.707\\
    0.02300 & 0.736 & 0.734 & 0.721 & 0.720 & 0.722 & 0.725\\
    0.02350 & 0.763 & 0.748 & 0.732 & 0.728 & 0.731 & 0.736\\
    0.02400 & 0.789 & 0.761 & 0.739 & 0.737 & 0.741 & 0.746\\
    0.02450 & 0.814 & 0.773 & 0.744 & 0.742 & 0.746 & 0.753\\
    0.02500 & 0.838 & 0.784 & 0.746 & 0.744 & 0.749 & 0.756\\
    0.02550 & 0.860 & 0.794 & 0.747 & 0.744 & 0.750 & 0.758\\
    0.02600 & 0.883 & 0.804 & 0.747 & 0.743 & 0.750 & 0.758\\
    0.02650 & 0.904 & 0.812 & 0.745 & 0.742 & 0.748 & 0.757\\
    \hline
    0.02662 & 0.909 & 0.814 & 0.744 & 0.741 & 0.748 & 0.757\\
    0.02700 & 0.924 & 0.820 & 0.743 & 0.739 & 0.746 & 0.756\\
    0.02800 & 0.963 & 0.835 & 0.736 & 0.732 & 0.740 & 0.751\\
    0.02900 & 0.999 & 0.847 & 0.729 & 0.725 & 0.734 & 0.746\\
    0.03000 & 1.033 & 0.858 & 0.720 & 0.716 & 0.726 & 0.739\\
    \hline
    0.04000 & 1.278 & 0.960 & 0.626 &  &  & \\
    0.05000 & 1.428 & 1.002 & 0.542 &  &  &
  \end{tabular}
  \caption{Comparison of results of TST and VTST with the actual reaction rate computed via Monte Carlo and our upper and lower estimates.}
  \label{tab:myP}
 \end{table}
 
 \subsection{MC based bounds}  
 Using Tab. \ref{tab:areas} we are able to formulate estimates of the reaction rate up to arbitrary precision. The idea is similar to \cite{Davis87}. An upper/lower bound on the reaction rate is obtained by assuming that all/none of the trajectories that remain in $R_5$ after $n$ iterations react. Tab. \ref{tab:myP} contains the resulting bounds.
 
 Denote $U_1$ the rate estimate obtained by assuming all trajectories in $R_5\setminus(R_6\cup R_7)$ react, or equivalently only $CN_1$ do not react. Since $CN_2$ and some of $Other$ do not react, the true reaction rate is lower.

 \begin{lemma}
  $U_1=\mu(DR)+\mu(CR_2)+\mu(CN_2)+\mu(Other)$ is an upper bound of the reaction rate.
 \end{lemma}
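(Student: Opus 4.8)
The plan is to exploit the fact that the trajectory classes introduced in Section~\ref{subsec:dynamical properties} partition, up to a set of measure zero, the initial conditions that enter the interaction region, together with the elementary monotonicity of the measure $\mu$; no dynamical input beyond the no-return properties already established is needed.

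\textbf{Step 1: the partition identity.} I would first argue that, because $R_5$, $R_6$ and $R_7$ each have the no-return property, a trajectory crossing the inward hemisphere of DS$_1$ falls into exactly one of the following cases: it never enters $R_5$, so it is $DR$ or $DN$; it enters $R_5$ and escapes after one iteration, filling $R_5\setminus(R_6\cup R_7)$, so it is $CN_1$; it enters $R_5$ and escapes after two iterations, passing through $R_6\setminus R_7$ and $R_7\setminus R_6$ without touching $R_6\cap R_7$, so it is $CR_2$ or $CN_2$; or it enters $R_6\cap R_7$, in which case it is counted in $Other$. Normalising the total measure of entering initial conditions to $1$, this gives
\begin{equation*}
1=\mu(DR)+\mu(DN)+\mu(CR_2)+\mu(CN_1)+\mu(CN_2)+\mu(Other).
\end{equation*}

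\textbf{Step 2: locate the reactive set.} By definition $DR$ and $CR_2$ are reactive, while $DN$, $CN_1$ and $CN_2$ are not; the class $Other$ is mixed, containing the captured reactive trajectories with $n\ge3$, the captured nonreactive ones with $n\ge3$, and any trajectory trapped forever in $R_6\cap R_7$ (which is neither reactive nor nonreactive in the sense of Definition~\ref{def:trajectories}). Writing $Other^{R}$ for the reactive part of $Other$, the exact reaction rate is
\begin{equation*}
P_{\mathrm{MC}}=\mu(DR)+\mu(CR_2)+\mu\bigl(Other^{R}\bigr).
\end{equation*}

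\textbf{Step 3: conclude.} Since $\mu$ is nonnegative, $\mu(Other^{R})\le\mu(Other)$ and $\mu(CN_2)\ge0$, hence
\begin{equation*}
P_{\mathrm{MC}}\le\mu(DR)+\mu(CR_2)+\mu(Other)\le\mu(DR)+\mu(CR_2)+\mu(CN_2)+\mu(Other)=U_1,
\end{equation*}
where the last equality combines the definition of $U_1$ with the partition identity of Step~1 (equivalently, $U_1=1-\mu(DN)-\mu(CN_1)$ is exactly the estimate obtained by declaring only $DN$ and $CN_1$ nonreactive). The only delicate point is the justification of Step~1: one must check that the classes are genuinely exhaustive, i.e. that trajectories trapped forever in $R_6\cap R_7$ (islands surrounded by KAM curves) are indeed subsumed under $Other$, and that the measure transported from $\Sigma_0$ to DS$_1$ is the one defining the rate, as noted in Section~\ref{subsec:dynamical properties}. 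Once this is granted, the remainder is pure bookkeeping with nonnegative measures.
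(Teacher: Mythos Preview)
Your argument is correct and is precisely the paper's reasoning made explicit: the paper states just before the lemma that $U_1$ is the estimate obtained when ``only $CN_1$ do not react'' and observes that ``$CN_2$ and some of $Other$ do not react, [so] the true reaction rate is lower,'' which is exactly your Steps~1--3 compressed into a sentence. Your identification $U_1=1-\mu(DN)-\mu(CN_1)$ matches the paper's description, and the chain of inequalities in Step~3 is the formal unpacking of that one-line justification.
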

 
 $U_1$ can be easily improved by acknowledging that $CN_2$ are nonreactive. Denote this bound by $U_2$. Because the $R_6\cap R_7$ contain reactive as well as nonreactive trajectories,  the true reaction rate is lower.
 \begin{lemma}
  $U_2=\mu(DR)+\mu(CR_2)+\mu(Other)$ is an upper bound of the reaction rate.
 \end{lemma}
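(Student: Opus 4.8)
The plan is to derive the bound purely set-theoretically: show that, up to a set of $\mu$-measure zero, every reactive trajectory belongs to $DR\cup CR_2\cup Other$, and then use monotonicity and finite additivity of $\mu$. Recall that the reaction rate $P_{\text{MC}}$ is by definition the $\mu$-proportion of reactive initial conditions on the transversal sampling surface $r_1+\frac{r_2}{2}=50$, $p_{r_1}<0$ (with the normalised area $\mu$ for which the rows of Table \ref{tab:areas} sum to $1$), and that by the constructions of Sections \ref{subsec:regions} and \ref{subsec:dynamical properties} this surface splits, up to measure zero, into the pairwise disjoint classes $DR$, $DN$, $CR_2$, $CN_1$, $CN_2$ and $Other$; here $Other$ is precisely the set of initial conditions whose $P$-orbit eventually visits $R_6\cap R_7$ (the $CR_n,CN_n$ with $n\ge 3$ together with the $\mu$-measure of orbits trapped on invariant tori inside $R_6\cap R_7$). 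That this really is a partition follows from the dichotomy ``the orbit enters $R_5$ or it does not'', refined within ``enters $R_5$'' by the iteration at which it leaves, $R_5$ being bounded so that almost every entering orbit leaves (the mechanism behind Lemma \ref{lemma:partition}).

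The only nontrivial step is to check that $DN$, $CN_1$ and $CN_2$ consist entirely of nonreactive trajectories, so that the reactive ones are confined to $DR\cup CR_2\cup Other$. For $DN$ this is immediate: a trajectory that never enters the interaction region stays on the reactant side for all $t$, hence is nonreactive by Definition \ref{def:trajectories}. For $CN_1$ and $CN_2$, such a trajectory enters $R_5$, spends one (respectively two) iterations of $P$ inside it, and then returns to the region it came from, namely reactants for initial conditions sampled as above. By the no-return property of the interaction region bounded by DS$_1$ and DS$_{\widehat 1}$ established in Section \ref{subsec:regions}, once the trajectory crosses DS$_1$ outwards it never crosses back; therefore $r_1(t)\to\infty$ and $r_2(t)<\infty$ as $t\to+\infty$, and since it also arrived from the reactant side as $t\to-\infty$ it is nonreactive. (For $CN_1$ this is exactly the argument behind the preceding lemma for $U_1$; the new ingredient for $U_2$ is merely that the identical reasoning applies after a second iteration, which is licensed by the no-return property of $R_5$ itself — a $CN_2$ trajectory cannot re-enter $R_5$ and keep oscillating.)

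Putting the pieces together, disjointness of the six classes modulo measure zero and additivity of $\mu$ give
\begin{equation*}
P_{\text{MC}}
 = \mu(\text{reactive})
 \le \mu\bigl(DR\cup CR_2\cup Other\bigr)
 = \mu(DR)+\mu(CR_2)+\mu(Other)
 = U_2 .
\end{equation*}
By the transport-of-measure identity noted at the end of Section \ref{subsec:dynamical properties} (a consequence of the symplecticity of $P$) it is immaterial whether these proportions are read off on $\Sigma_0$, on DS$_1$, or on the surface $r_1+\frac{r_2}{2}=50$. Finally, since $\mu(CN_2)\ge 0$ we automatically get $U_2\le U_1$, so $U_2$ sharpens the previous bound; and because $R_6\cap R_7$ carries Smale-horseshoe dynamics and hence contains both reactive and nonreactive trajectories (Section \ref{subsec:dynamical properties}), none of $Other$ can be removed by an argument of this type, which is why $U_2$ is the best upper estimate obtainable from the classes at the ``two-iteration'' level.

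The routine parts are the bookkeeping with disjoint classes and the identification of $\mu$ across the various transversal surfaces. The one point that genuinely relies on the earlier structural analysis, and which I expect to be the main obstacle to make fully rigorous, is the nonreactivity of $CN_2$: it is \emph{not} tautological that returning once to the region of origin forces $r_2(t)$ to stay finite as $t\to+\infty$, and the claim rests squarely on the no-return property of the interaction region delimited by the $F_1$ and $\widehat F_1$ dividing surfaces, reinforced by the no-return property of $R_5$ so that no further excursions into the tangle are possible.
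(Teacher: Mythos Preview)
Your proposal is correct and follows the same route as the paper: the paper's entire argument is the single sentence preceding the lemma, namely that $CN_2$ are nonreactive by definition and hence can be dropped from $U_1$, with the remark that $R_6\cap R_7$ contains both reactive and nonreactive trajectories so the true rate is strictly lower. You have simply fleshed out why ``returns to the region of origin'' forces nonreactivity in the sense of Definition~\ref{def:trajectories} (via the no-return property of the interaction region), and made the partition and additivity explicit; one small slip is that orbits trapped on KAM tori inside $R_6\cap R_7$ are never sampled from the initial surface at infinity and so do not contribute to $Other$, but this does not affect the bound.
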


 A lower bound $L_2$ is obtained assuming all of $R_6\cap R_7$ are nonreactive trajectories.
 \begin{lemma}
  $L_2=\mu(DR)+\mu(CR2)$ is a lower bound of the reaction rate.
 \end{lemma}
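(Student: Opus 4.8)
The plan is to realise $DR$ and $CR_2$ as disjoint $\mu$-measurable sets of reactive initial conditions on the sampling surface, so that their total measure cannot exceed the measure of all reactive initial conditions, which is precisely the reaction rate $P_{\text{MC}}$.

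First I would set up the bookkeeping. By Definition~\ref{def:trajectories} and the discussion in Sec.~\ref{subsec:definitions}, the reaction rate at energy $E$ equals the $\mu$-proportion of reactive initial conditions on the transversal surface $r_1+\frac{r_2}{2}=50$, $p_{r_1}<0$, and each point of this surface (which lies in the decoupled region) determines a unique trajectory. Such a trajectory either never meets $R_5$ — and is then $DR$ or $DN$ according to whether it is reactive or does not enter the interaction region — or, by the no-return property of $R_5$ proved in Sec.~\ref{subsec:division}, it enters $R_5$ exactly once. In the latter case it either escapes through one of the three mutually disjoint pieces $R_5\setminus(R_6\cup R_7)$, $R_6\setminus R_7$, $R_7\setminus R_6$ (producing $CN_1$ immediately, or $CR_2$ and $CN_2$ after one more iteration), or it enters $R_6\cap R_7$ and is counted in $Other$. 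Hence $DR$, $DN$, $CN_1$, $CR_2$, $CN_2$, $Other$ partition the sampling surface up to a $\mu$-null set, and each class is $\mu$-measurable.

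Next I would single out the reactive classes. By definition $CR_2$ consists of reactive trajectories, and so does $DR$: a trajectory that stays in $R_2$ or $R_3$ never re-enters the complicated part of the tangle, lies on one side of $W_{F_0}$ with large $|p_2|$, and by the characterisation of reactive trajectories on $\Sigma_0$ in Sec.~\ref{subsec:barrier} together with the no-return property of the interaction region (Sec.~\ref{subsec:regions}) it leaves into products (respectively reactants), i.e.\ it is reactive. The sets $DR$ and $CR_2$ are disjoint, since the first never enters $R_5$ while the second does, and both are disjoint from the nonreactive classes $DN$, $CN_1$ and $CN_2$. Therefore $DR\cup CR_2$ is contained in the set of reactive initial conditions, and
$$L_2=\mu(DR)+\mu(CR_2)=\mu(DR\cup CR_2)\le P_{\text{MC}}.$$

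The argument has no real obstacle; its only delicate point is the measure-theoretic claim that these six classes exhaust the sampling surface up to a null set, which rests on the no-return properties of $R_5$, $R_6$ and $R_7$ from Sec.~\ref{subsec:division} — a $\mu$-null set of trajectories (for instance those trapped on KAM curves inside $R_6\cap R_7$) may never escape, but such trajectories are absorbed into $Other$ and do not affect the inequality. I would finish by remarking that the bound is strict exactly when the reactive part of $Other$, which contains $CR_n$ for $n\ge3$, has positive measure, and Table~\ref{tab:areas} shows this is the case at every energy above $0.02215$.
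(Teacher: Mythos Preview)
Your argument is correct and follows the same idea as the paper: $DR$ and $CR_2$ are, by their very definitions, disjoint classes of reactive trajectories, so their combined measure cannot exceed the Monte Carlo rate. The paper does not give a separate proof of this lemma beyond the one-line observation that $L_2$ arises by declaring all of $R_6\cap R_7$ nonreactive; your write-up simply spells out the measure-theoretic bookkeeping that this observation tacitly relies on, so there is no genuine methodological difference.
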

 
 The difference between $L_2$ and $U_2$ is precisely $\mu(Other)$. This gives us an upper bound on the error of both estimates. An estimate of the reaction rate can be obtained using $L_2$ and $U_2$.
 
 \section{Conclusion}
 We have studied invariant manifolds of TSs to find an explanation for the decrease of the reaction rate. In the process of understanding how energy surface volume passes through homoclinic and heteroclinic tangles formed by these invariant manifolds we found the need for tools that would allow us to work with the tangles and not get lost in details of its chaotic structure.
 We introduced a suitable division of homoclinic and heteroclinic tangles that is simple and understandable based on reactive properties of trajectories.
 
 Once divided, the heteroclinic tangles decompose into areas of simple and more complicated dynamics. We were able to identify a large class of trajectories that are merely diverted by the tangles and areas of fractal horseshoe-like structure near hyperbolic or inverse-hyperbolic periodic orbits. In addition to a better understanding, the division provides an easy way calculating the corresponding areas. 
 
 Contrary to expectations, the decline of the reaction rate is not a result of loss of normal hyperbolicity. We may consider the decrease of the reaction rate and loss of normal hyperbolicity to be consequences of insufficient transfer of kinetic energy between the degrees of freedom. In physical terms, the single atom has so much kinetic energy, that it repels the whole molecule instead of becoming part of it.

%

\end{document}